\title{Minimax Optimal Estimators for Additive Scalar Functionals of Discrete Distributions}
\author[1,2]{Kazuto Fukuchi~\thanks{kazuto@mdl.cs.tsukuba.ac.jp}}
\author[1,3,4]{Jun Sakuma~\thanks{jun@cs.tsukuba.ac.jp}}
\affil[1]{Department of Computer Science, Graduate School of System and Information Engineering, University of Tsukuba}
\affil[2]{Japan Society for Promotion of Science}
\affil[3]{JST CREST}
\affil[4]{RIKEN Center for Advanced Intelligence Project}
\begin{document}
\maketitle
\begin{abstract}
\noindent In this paper, we consider estimators for an {\em additive functional} of $\phi$, which is defined as $\theta(P;\phi)=\sum_{i=1}^k\phi(p_i)$, from $n$ i.i.d. random samples drawn from a discrete distribution $P=(p_1,...,p_k)$ with alphabet size $k$.  We propose a minimax optimal estimator for the estimation problem of the additive functional. We reveal that the minimax optimal rate is characterized by the {\em divergence speed} of the fourth derivative of $\phi$ if the divergence speed is high. As a result, we show there is no consistent estimator if the divergence speed of the fourth derivative of $\phi$ is larger than $p^{-4}$. Furthermore, if the divergence speed of the fourth derivative of $\phi$ is $p^{4-\alpha}$ for $\alpha \in (0,1)$, the minimax optimal rate is obtained within a universal multiplicative constant as $\frac{k^2}{(n\ln n)^{2\alpha}} + \frac{k^{2-2\alpha}}{n}$.
\end{abstract}

\section{Introduction}

Let $P$ be a probability measure with alphabet size $k$, and $X$ be a discrete random variable drawn from $P$. Without loss of generality, we can assume that the domain of $P$ is $[k]$, where we denote $[m]=\cbrace{1,...,m}$ for a positive integer $m$. We use a vector representation of $P$; $P=(p_1,...,p_k)$ where $p_i=P\cbrace{X=i}$. Let $\phi$ be a mapping from $[0,1]$ to $\RealSet^+$. Given a set of i.i.d. samples $S_n=\cbrace{X_1,...,X_n}$ from $P$, we deal with the problem of estimating an {\em additive functional} of $\phi$. The additive functional $\theta$ of $\phi$ is defined as
\begin{align}
 \theta(P;\phi) = \sum_{i=1}^k \phi(p_i).
\end{align}
We simplify this notation to $\theta(P;\phi)=\theta(P)$. Most entropy-like criteria can be formed in terms of $\theta$. For instance, when $\phi(p)=-p\ln p$, $\theta$ is Shannon entropy. For a positive real $\alpha$, letting $\phi(p)=p^\alpha$, $\ln(\theta(P))/(1-\alpha)$ becomes R\'enyi entropy. More generally, letting $\phi = f$ where $f$ is a concave function, $\theta$ becomes $f$-entropies~\citep{akaike1998information}.

Techniques for the estimation of the entropy-like criteria have been considered in various fields, including physics~\citep{lake2011accurate}, neuroscience~\citep{nemenman2004entropy}, and security~\citep{gu2005detecting}. In machine learning, methods that involve entropy estimation were introduced for decision-trees~\citep{quinlan1986induction}, feature selection~\citep{peng2005feature}, and clustering~\citep{dhillon2003information}. For example, the decision-tree learning algorithms, i.e., ID3, C4.5, and C5.0 construct a decision tree in which the criteria for the tree splitting are defined based on Shannon entropy~\citep{quinlan1986induction}. Similarly, information theoretic feature selection algorithms evaluate the relevance between the features and the target using the entropy~\citep{peng2005feature}.

The goal of this study is to derive the minimax optimal estimator of $\theta$ given a function $\phi$. For the precise definition of the minimax optimality, we introduce the minimax risk. A sufficient statistic of $P$ is a histogram $N=\paren{N_1,...,N_k}$, where $N_j = \sum_{i=1}^n \ind{\cbrace{X_i = j}}$ and $N \sim \Mul(n, P)$. The estimator of $\theta$ is defined as a function $\hat\theta:[n]^k\to\RealSet$. Then, the quadratic minimax risk is defined as
\begin{align}
 R^*(n,k;\phi) = \inf_{\hat\theta}\sup_{P \in \dom{M}_k} \Mean\bracket*{\paren*{\hat\theta(N) - \theta(P)}^2},
\end{align}
where $\dom{M}_k$ is the set of all probability measures on $[k]$, and the infimum is taken over all estimators $\hat\theta$. With this definition of the minimax risk, an estimator $\hat\theta$ is minimax \mbox{(rate-)optimal} if there exists a positive constant $C$ such that
\begin{align}
 \sup_{P \in \dom{M}_k} \Mean\bracket*{\paren*{\hat\theta(N) - \theta(P)}^2} \le C R^*(n,k;\phi).
\end{align}

A natural estimator of $\theta$ is the plugin or the maximum likelihood estimator, in which the estimated value is obtained by substituting the empirical mean of the probabilities $P$ into $\theta$.  However, the estimator has a large bias for large $k$. Indeed, the plugin estimators for $\phi(p)=-p\ln p$ and $\phi(p)=p^\alpha$ have been shown to be suboptimal in the large-$k$ regime in recent studies~\citep{wu2016minimax,jiao2015minimax,DBLP:conf/soda/AcharyaOST15}.

Recent studies investigated the minimax optimal estimators for $\phi(p)=-p\ln p$ and $\phi(p)=p^\alpha$ in the large-$k$ regime~\citep{wu2016minimax,jiao2015minimax,DBLP:conf/soda/AcharyaOST15}. However, the results of these studies were only derived for these $\phi$. \citet{jiao2015minimax} suggested that the estimator is easily extendable to the general additive functional, although they did not prove the minimax optimality.

In this paper, we propose a minimax optimal estimator for the estimation problem of the additive functional $\theta$ for general $\phi$ under certain conditions on the smoothness. Our estimator achieves the minimax optimal rate even in the large-$k$ regime for $\phi \in C^4[0,1]$ such that $\abs*{\phi^{(4)}(p)}$ is finite for $p \in (0,1]$, where $C^4[0,1]$ denotes a class of four times differentiable functions from $[0,1]$ to $\RealSet$. For such $\phi$, we reveal a property of $\phi$ which can substantially influence the minimax optimal rate.

{\bfseries Related work.}
The simplest way to estimate $\theta$ is to use the so-called plugin estimator or the maximum likelihood estimator, in which the empirical probabilities are substituted into $\theta$ as $P$. Letting $\tilde{P} = (\hat{p}_1,...,\hat{p}_k)$ and $\hat{p}_i = N_i/n$, the plugin estimator is defined as
\begin{align}
  \theta_{\rm plugin}(N) = \theta(\tilde{P}).
\end{align}
The plugin estimator is asymptotically consistent under weak assumptions for fixed $k$~\citep{RSA:RSA10019}. However, this is not true for the large-$k$ regime. Indeed, \citet{jiao2015minimax} and \citet{wu2016minimax} derived a lower bound for the quadratic risk for the plugin estimator of $\phi(p)=p\ln(1/p)$ and $\phi(p)=p^\alpha$. In the case of Shannon entropy, the lower bound is given as
\begin{align}
  \sup_{P \in \dom{M}_k}\Mean\bracket*{\paren*{\theta_{\rm plugin}(N) - \theta(P)}^2} \ge C\paren*{\frac{k^2}{n^2} + \frac{\ln^2 k}{n}},
\end{align}
where $C$ denotes a universal constant. The first term $k^2/n^2$ comes from the bias and it indicates that if $k$ grows linearly with respect to $n$, the plugin estimator becomes inconsistent. This means the plugin estimator is suboptimal in the large-$k$ regime. Bias-correction methods, such as \citep{miller1955nbi,grassberger1988finite,10.2307/1936227}, can be applied to the plugin estimator of $\phi(p)=-p\ln p$ to reduce the bias whereas these bias-corrected estimators are still suboptimal. The estimators based on Bayesian approaches in \citep{schurmann1996entropy,6620615,holste1998bayes} are also suboptimal~\citep{DBLP:journals/corr/HanJW15a}.

Many researchers have studied estimators that can consistently estimate the additive functional with sublinear samples with respect to the alphabet size $k$ to derive the optimal estimator in the large-$k$ regime. The existence of consistent estimators even with sublinear samples were first revealed in \citet{paninski2004estimating}, but an explicit estimator was not provided. \citet{DBLP:conf/stoc/ValiantV11} introduced an estimator based on linear programming that consistently estimates $\phi(p)=-p\ln p$ with sublinear samples. However, the estimator of \citep{DBLP:conf/stoc/ValiantV11} has not been shown to achieve the minimax rate even in a more detailed analysis in \citep{DBLP:conf/focs/ValiantV11}. Recently, \citet{DBLP:conf/soda/AcharyaOST15} showed that the bias-corrected estimator of R\'enyi entropy achieves the minimax optimal rate in regard to the sample complexity if $\alpha > 1$ and $\alpha \in \NaturalSet$, but they did not show the minimax optimality for other $\alpha$. \citet{jiao2015minimax} introduced a minimax optimal estimator for $\phi(p)=-p\ln p$ for any $\alpha \in (0,3/2)$ in the large-$k$ regime. \citet{2015arXiv150401227W} derived a minimax optimal estimator for $\phi(p)=\ind{p > 0}$. For $\phi(p)=-p\ln p$, \citet{jiao2015minimax,wu2016minimax} independently introduced the minimax optimal estimators in the large-$k$ regime. In the case of Shannon entropy, the optimal rate was obtained as
\begin{align}
  \frac{k^2}{(n\ln n)^2} + \frac{\ln^2 k}{n}.
\end{align}
The first term indicates that the introduced estimator can consistently estimate Shannon entropy if $n \ge C k/\ln k$.

The estimators introduced by \citet{wu2016minimax,jiao2015minimax,DBLP:conf/soda/AcharyaOST15} are composed of two estimators: the bias-corrected plugin estimator and the best polynomial estimator. The bias-corrected plugin estimator is composed of the sum of the plugin estimator and a bias-correction term which offsets the second-order approximation of the bias as in \citep{miller1955nbi}. The best polynomial estimator is an unbiased estimator of the polynomial that best approximates $\phi$ in terms of the uniform error. Specifically, the best approximation for the polynomial of $\phi$ in an interval $I \subseteq [0,1]$ is the polynomial $g$ that minimizes $\sup_{x \in I}\abs*{\phi(x) - g(x)}$. \citet{jiao2015minimax} suggested that this estimator can be extended for the general additive functional $\theta$. However, the minimax optimality of the estimator was only proved for specific cases of $\phi$, including $\phi(p)=-p\ln p$ and $\phi(p)=p^\alpha$. Thus, to prove the minimax optimality for other $\phi$, we need to individually analyze the minimax optimality for specific $\phi$. Here, we aim to clarify which property of $\phi$ substantially influences the minimax optimal rate when estimating the additive functional.

Besides, the optimal estimators for divergences with large alphabet size have been investigated in \citep{7541473,7840425,7541399}. The estimation problems of divergences are much complicated than the additive function, while the similar techniques were applied to derive the minimax optimality.

\begin{figure}[t]
  \centering
  \includegraphics[width=.9\textwidth,draft=false]{./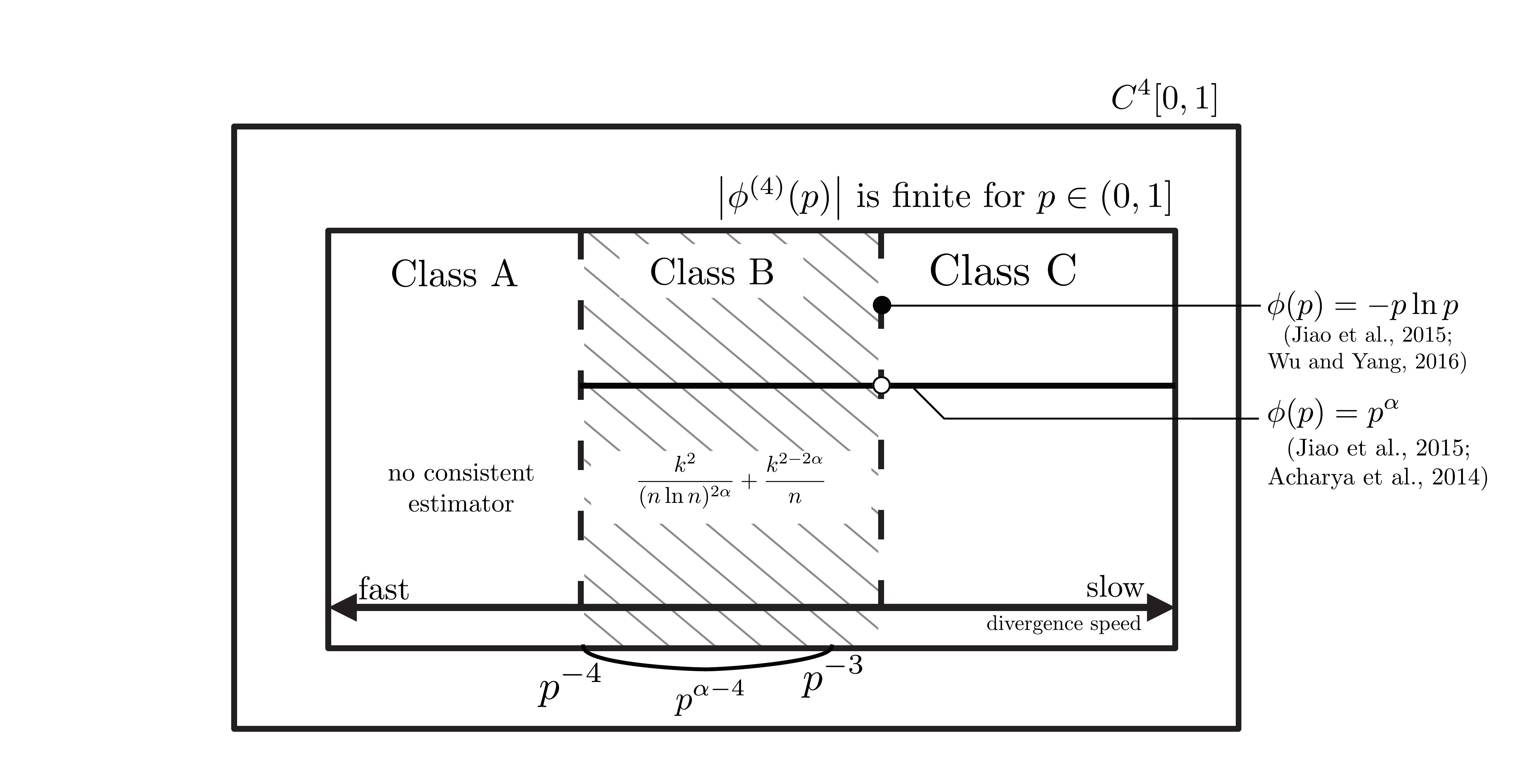}
  \caption{Relationship between the divergence speed of the fourth derivative of $\phi$ and the minimax optimality of the estimation problem of $\theta(P;\phi)$.}\label{fig:overview}
\end{figure}

{\bfseries Our contributions.}
In this paper, we propose the minimax optimal estimator for $\theta(P;\phi)$. We reveal that the {\em divergence speed} of the fourth derivative of $\phi$ plays an important role in characterizing the minimax optimal rate. Informally, for $\beta > 0$, the meaning of ``the divergence speed of a function $f(p)$ is $p^{-\beta}$'' is that $\abs*{f(p)}$ goes to infinity at the same speed as $p^{-\beta}$ when $p$ approaches $0$. When the divergence speed of the fourth derivative of $\phi(p)$ is $p^{-\beta}$, the fourth derivative of $\phi$ diverges faster as $\beta$ increases.

Our results are summarized in \cref{fig:overview}. \cref{fig:overview} illustrates the relationship between the divergence speed of the fourth derivative of $\phi$ and the minimax optimality of the estimation problem of $\theta(P;\phi)$. In \cref{fig:overview}, the outermost rectangle represents the space of the four times continuous differentiable functions $C^4[0,1]$. The innermost rectangle denotes the subset class of $C^4[0,1]$ such that the absolute value of its fourth derivative $\abs*{\phi^{(4)}(p)}$ is finite for any $p \in (0,1]$. In this subclass of $\phi$, the horizontal direction represents the divergence speed of the fourth derivative of $\phi$, in which a faster $\phi$ is on the left-hand side and a slower $\phi$ is on the right-hand side. The $\phi$ with an explicit form and divergence speed is denoted by a point in the rectangle. For example, the black circle denotes $\phi(p)=-p\ln p$ where the divergence speed of the fourth derivative of this $\phi$ is $p^{-3}$. Class B denotes a set of any function $\phi$ such that the divergence speed of the fourth derivative is $p^{\alpha-4}$ where $\alpha\in(0,1)$. As already discussed, existing methods have achieved minimax optimality in the large-$k$ regime for specific $\phi$, including $\phi(p)=-p\ln p$~(black circle in \cref{fig:overview}) and $\phi(p)=p^\alpha$~(middle line in \cref{fig:overview} where the white circle denotes that there is no $\alpha > 0$ such that the divergence speed is $p^{-3}$).

We investigate the minimax optimality of the estimation problem of $\theta$ for $\phi$ in Class A and Class B. Class A is a class of $\phi$ such that the divergence speed of the fourth derivative is faster than $p^{-4}$. Class B is a class of $\phi$ such that the divergence speed of the fourth derivative is $p^{\alpha-4}$ where $\alpha \in (0,1)$. In Class A, we show that we cannot construct a consistent estimator of $\theta$ for any $\phi$ in Class A~(the leftmost hatched area in \cref{fig:overview}, \cref{prop:no-estimator}). In other words, the minimax optimal rate is larger than constant order if the divergence speed of the fourth derivative is faster than $p^{-4}$. Thus, there is no need to derive the minimax optimal estimator in Class A.

Also, we derive the minimax optimal estimator for any $\phi$ in Class B~(the middle hatched area in \cref{fig:overview}, \cref{thm:minimax-rate}). For example, $\phi(p)=p^\alpha$~(R\'eyni entropy case), $\phi(p)=\cos(cp)p^\alpha$, and $\phi(p)=e^{cp}p^\alpha$ for $\alpha \in (0,1)$ include the coverage of our estimator, where $c$ is a universal constant. Intuitively, since the large derivative makes the estimation problem $\theta$ more difficult, the minimax rate decreases if the derivative of $\phi$ diverges faster. Our minimax optimal rate reflects this behavior. For $\phi$ in Class B, the minimax optimal rate is obtained as
\begin{align}
  \frac{k^2}{(n\ln n)^{2\alpha}} + \frac{k^{2-2\alpha}}{n},
\end{align}
where $k \gtrsim \ln^{\frac{4}{3}}n$ if $\alpha \in (0,1/2]$. We can clearly see that this rate decreases for larger $\alpha$, i.e., a slower divergence speed.

Currently, the minimax optimality of $\phi$ in Class C is an open problem. However, we provide a notable discussion in \cref{sec:main-results}.

\section{Preliminaries}

\noindent{\bfseries Notations.}
We now introduce some additional notations. For any positive real sequences $\cbrace{a_n}$ and $\cbrace{b_n}$, $a_n \gtrsim b_n$ denotes that there exists a positive constant $c$ such that $a_n \ge c b_n$. Similarly, $a_n \lesssim b_n$ denotes that there exists a positive constant $c$ such that $a_n \le c b_n$. Furthermore, $a_n \asymp b_n$ implies $a_n \gtrsim b_n$ and $a_n \lesssim b_n$. For an event $\event$, we denote its complement by $\event^c$. For two real numbers $a$ and $b$, $a \lor b = \max\cbrace{a, b}$ and $a \land b = \min\cbrace{a, b}$. For a function $\phi:\RealSet\to\RealSet$, we denote its $i$-th derivative as $\phi^{(i)}$.

\noindent{\bfseries Poisson sampling.}
We employ the Poisson sampling technique to derive upper and lower bounds for the minimax risk. The Poisson sampling technique models the samples as independent Poisson distributions, while the original samples follow a multinomial distribution. Specifically, the sufficient statistic for $P$ in the Poisson sampling is a histogram $\tilde{N} = \paren{\tilde{N}_i,...,\tilde{N}_k}$, where $\tilde{N}_1,...,\tilde{N}_k$ are independent random variables such that $\tilde{N}_i \sim \Poi(np_i)$. The minimax risk for Poisson sampling is defined as follows:
\begin{align}
 \tilde{R}^*(n,k;\phi) = \inf_{\cbrace{\hat\theta}}\sup_{P \in \dom{M}_k} \Mean\bracket*{\paren*{\hat\theta(\tilde{N}) - \theta(P)}^2}.
\end{align}
The minimax risk of Poisson sampling well approximates that of the multinomial distribution. Indeed, \citet{jiao2015minimax} presented the following lemma.
\begin{lemma}[\citet{jiao2015minimax}]\label{lem:well-approx-poisson}
  The minimax risk under the Poisson model and the multinomial model are related via the following inequalities:
  \begin{align}
    \tilde{R}^*(2n,k;\phi) - \sup_{P \in \dom{M}_k}\abs*{\theta(P)}e^{-n/4} \le R^*(n,k;\phi) \le 2\tilde{R}^*(n/2,k;\phi).
  \end{align}
\end{lemma}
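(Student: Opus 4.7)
The key tool is the classical Poissonization coupling: if $\tilde N_i\overset{\mathrm{ind}}{\sim}\Poi(np_i)$ then the total $N:=\sum_i \tilde N_i\sim\Poi(n)$, and conditional on $\{N=m\}$ the vector $(\tilde N_1,\ldots,\tilde N_k)$ has law $\Mul(m,P)$. Each inequality amounts to converting an estimator in one sampling model into one in the other, picking up an exponentially small additive correction from a Chernoff tail bound on $N$.

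For the upper bound $R^*(n,k;\phi)\le 2\tilde R^*(n/2,k;\phi)$, I take a near-optimal Poisson estimator $\tilde\theta$ with parameter $n/2$ and build a randomized multinomial estimator from the $n$ samples $X_1,\ldots,X_n$ as follows: draw $M\sim\Poi(n/2)$ independently of the samples and output $\tilde\theta$ applied to the histogram of $X_1,\ldots,X_{\min(M,n)}$, defaulting to a fixed constant on $\{M>n\}$. On $\{M\le n\}$ the histogram has exactly the product-Poisson law $\Poi(nP/2)$, so the risk contribution on this event matches that of $\tilde\theta$; an upper-tail Chernoff bound gives $\Pr(M>n)\le e^{-cn}$ for a universal $c>0$, so the bad-event contribution is at most $\sup_P\theta(P)^2\,e^{-cn}$, which is absorbed into the factor $2$ in the regime where $\tilde R^*(n/2,k;\phi)$ dominates this exponential residual.

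For the lower bound $\tilde R^*(2n,k;\phi)-\sup_P\abs*{\theta(P)}\,e^{-n/4}\le R^*(n,k;\phi)$, I take a near-optimal multinomial estimator $\hat\theta$ on $n$ samples and construct a Poisson estimator with parameter $2n$: given the Poisson histogram $\tilde N$ with total $N$, if $N\ge n$ I uniformly subsample $n$ of the $N$ underlying draws and feed the resulting histogram $N'$ into $\hat\theta$; if $N<n$ I output $0$. By the Poisson/multinomial coupling together with exchangeability of the underlying draws given $\{N=m\}$, the conditional law of $N'$ given $\{N\ge n\}$ is exactly $\Mul(n,P)$, so on this event the expected squared error equals that of $\hat\theta$. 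A Chernoff lower-tail bound for the Poisson distribution gives $\Pr(\Poi(2n)\le n)\le\exp\!\bigl(-(2n-n)^2/(2\cdot 2n)\bigr)=e^{-n/4}$, controlling the residual event, on which the squared error is bounded by $\theta(P)^2$; taking $\sup_P$ and then $\inf_{\hat\theta}$ yields the inequality (the squared bad-event term is absorbed into the stated linear $\abs*{\theta(P)}$ correction via the boundedness of $\theta$ on the simplex).

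\textbf{Main obstacle.} The one step that genuinely needs verification is the coupling claim on the event $\{N\ge n\}$: I must check that the uniformly subsampled histogram $N'$ really is $\Mul(n,P)$-distributed with no residual dependence on $N$. This is the combinatorial content of the Poisson/multinomial relationship and follows from exchangeability of the underlying i.i.d.\ draws conditional on $\{N=m\}$; once this is in place, the remainder is standard Chernoff and triangle-type bookkeeping.
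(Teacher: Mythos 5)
The paper does not actually prove this lemma (it is imported verbatim from Jiao et al.), so I am judging your argument against the standard proof of that result.

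The genuine gap is in the right-hand inequality $R^*(n,k;\phi)\le 2\tilde{R}^*(n/2,k;\phi)$. Your construction yields
\begin{align}
  R^*(n,k;\phi)\ \le\ \tilde{R}^*(n/2,k;\phi)+\sup_{P\in\dom{M}_k}\theta(P)^2\,e^{-cn},
\end{align}
and you then assert that the exponential residual is ``absorbed into the factor $2$ in the regime where $\tilde{R}^*(n/2,k;\phi)$ dominates.'' The lemma carries no such regime restriction, and absorbing the residual would require the a priori lower bound $\sup_P\theta(P)^2e^{-cn}\le\tilde{R}^*(n/2,k;\phi)$, which you do not establish and which is not free for general $\phi$, $n$, $k$. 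The factor $2$ actually comes from conditioning, not from absorbing a tail term: with $M=\sum_i\tilde{N}_i\sim\Poi(n/2)$ one has $\p\cbrace*{M\le n}\ge 1/2$ for all $n\ge1$ (median bound), hence for any Poisson estimator $\tilde\theta$ and any $P$,
\begin{align}
  \Mean\bracket*{\paren*{\tilde\theta-\theta(P)}^2}\ \ge\ \p\cbrace*{M\le n}\sum_{m\le n}\p\cbrace*{M=m\mid M\le n}\,r_m(\tilde\theta,P)\ \ge\ \tfrac12\sum_{m\le n}\p\cbrace*{M=m\mid M\le n}\,r_m(\tilde\theta,P),
\end{align}
where $r_m$ denotes the multinomial risk on $m$ samples. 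The sum on the right is exactly the risk of a legitimate randomized estimator in the $n$-sample multinomial model: draw $m$ from the conditional law of $M$ given $M\le n$ (which does not depend on $P$) and discard $n-m$ samples. Taking $\sup_P$ and then the infimum gives the clean multiplicative bound with no additive term. Relatedly, your claim that ``on $\{M\le n\}$ the histogram has exactly the product-Poisson law'' is false as written — conditioning on $\{M\le n\}$ destroys the product structure; what is true, and what you implicitly use, is that the risk restricted to that event is at most the unconditional Poisson risk.

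Your left-hand inequality is essentially the standard argument, and the coupling step you flag as the main obstacle (a uniform size-$n$ subsample of the $N\ge n$ underlying draws has histogram exactly $\Mul(n,P)$, by exchangeability conditional on $N$) is correct, as is the Chernoff exponent $\p\cbrace*{\Poi(2n)\le n}\le e^{-n/4}$. However, outputting $0$ on $\{N<n\}$ produces a squared error $\theta(P)^2$ there, so your additive correction is $\sup_P\theta(P)^2e^{-n/4}$, not $\sup_P\abs*{\theta(P)}e^{-n/4}$. Your parenthetical that the squared term is ``absorbed into the stated linear correction via boundedness'' is wrong: $\sup_P\theta(P)^2=(\sup_P\abs*{\theta(P)})^2$ exceeds $\sup_P\abs*{\theta(P)}$ whenever the latter exceeds $1$, which is the typical situation here (e.g.\ $\theta(P)=\sum_ip_i^\alpha$ can be as large as $k^{1-\alpha}$). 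The quadratic form is what the cited lemma of Jiao et al.\ actually asserts; the linear form displayed in this paper appears to be a transcription slip, but as written your argument proves a different inequality from the one stated.
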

\cref{lem:well-approx-poisson} states $R^*(n,k;\phi) \asymp \tilde{R}^*(n,k;\phi)$, and thus we can derive the minimax rate of the multinomial distribution from that of the Poisson sampling.

{\bfseries Best polynomial approximation.}
\citet{DBLP:conf/soda/AcharyaOST15,wu2016minimax,jiao2015minimax} presented a technique of the best polynomial approximation for deriving the minimax optimal estimators and their lower bounds for the risk. Let $\dom{P}_L$ be the set of polynomials of degree $L$. Given a function $\phi$, a polynomial $p$, and an interval $I \subseteq [0,1]$, the uniform error between $\phi$ and $p$ on $I$ is defined as
\begin{align}
  \sup_{x \in I}\abs*{\phi(x) - p(x)}. \label{eq:uniform-error}
\end{align}
The best polynomial approximation of $\phi$ by a degree-$L$ polynomial with a uniform error is achieved by the polynomial $p \in \dom{P}_L$ that minimizes \cref{eq:uniform-error}. The error of the best polynomial approximation is defined as
\begin{align}
  E_L\paren*{\phi, I} = \inf_{p \in \dom{P}_L}\sup_{x \in I}\abs*{\phi(x) - p(x)}.
\end{align}
The error rate with respect to the degree $L$ has been studied since the 1960s~\citep{timan1965theory,petrushev2011rational,ditzian2012moduli,achieser2013theory}. The polynomial that achieves the best polynomial approximation can be obtained, for instance, by the Remez algorithm~\citep{remez1934determination} if $I$ is bounded.

\section{Main results}\label{sec:main-results}

Suppose $\phi$ is four times continuously differentiable on $(0,1]$\footnote{We say that a function $\phi:[0,1]\to\RealSet_+$ is differentiable at $1$ if $\lim_{h \to -0}\frac{\phi(1+h)-\phi(1)}{h}$ exists.}. We reveal that the {\em divergence speed} of the fourth derivative of $\phi$ plays an important role for the minimax optimality of the estimation problem of the additive functional. Formally, the divergence speed is defined as follows.
\begin{definition}[divergence speed]\label{def:div-speed}
  For an integer $m \ge 1$, let $\phi$ be an $m$ times continuously differentiable function on $(0,1]$. For $\beta > 0$, the divergence speed of the $m$th derivative of $\phi$ is $p^{-\beta}$ if there exist finite constants $W > 0$, $c_m$, and $c'_m$ such that for all $p \in (0,1]$
  \begin{align}
    \abs*{\phi^{(m)}(p)} \le \beta_{m-1} Wp^{-\beta} + c_m, \textand \abs*{\phi^{(m)}(p)} \ge \beta_{m-1} Wp^{-\beta} + c'_m,
  \end{align}
  where $\beta_m = \prod_{i=1}^m (i-m+\beta)$.
\end{definition}
A larger $\beta$ implies faster divergence. We analyze the minimax optimality for two cases: the divergence speed of the fourth derivative of $\phi$ is i) larger than $p^{-4}$~(Class A), and ii) $p^{\alpha-4}$~(Class B), for $\alpha \in (0,1)$.

{\bfseries Minimax optimality for Class A.}
We now demonstrate that we cannot construct a consistent estimator for any $n$ and $k \ge 3$ if the divergence speed of $\phi$ is larger than $p^{-4}$.
\begin{proposition}\label{prop:no-estimator}
  Let $\phi$ be a continuously differentiable function on $(0,1]$. If there exists finite constants $W > 0$ and $c'_1$ such that for $p \in (0,1]$
  \begin{align}
    \abs*{\phi^{(1)}(p)} \ge Wp^{-1} + c'_1,
  \end{align}
  then there is no consistent estimator, i.e., $R^*(n,k;\phi) \gtrsim 1$.
\end{proposition}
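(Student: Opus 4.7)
The plan is a two-point Le Cam argument. The intuition is that $|\phi^{(1)}(p)| \gtrsim p^{-1}$ forces $\phi$ itself to diverge at least logarithmically at the origin, so shrinking a single coordinate from mass $\epsilon$ down to mass $\epsilon^2$ changes $\theta$ by an amount of order $\ln(1/\epsilon)$, while the Binomial marginal on that coordinate moves only mildly when $\epsilon = \Theta(1/n)$.

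The first step is to extract the logarithmic gap from the hypothesis. Since $|\phi^{(1)}(p)| \ge W p^{-1} + c'_1 > 0$ for all sufficiently small $p$ and $\phi^{(1)}$ is continuous on $(0,1]$, the derivative cannot vanish on some interval $(0, p_0]$ and thus has constant sign there; hence $\phi$ is monotone on $(0, p_0]$, and for $0 < a < b \le p_0$,
\begin{align}
  |\phi(b) - \phi(a)| = \int_a^b |\phi^{(1)}(p)|\,dp \ge W\ln(b/a) + c'_1(b-a) \ge \tfrac{W}{2}\ln(b/a),
\end{align}
once $b$ is small enough to absorb the additive term. It suffices to prove the bound for $k = 3$ and appeal to monotonicity: embedding a distribution on $[3]$ into $[k]$ by appending zero masses shifts $\theta$ by the known constant $(k-3)\phi(0)$, so any estimator on $\dom{M}_k$ induces one on the embedded $\dom{M}_3$ with the same quadratic error, giving $R^*(n, k; \phi) \ge R^*(n, 3; \phi)$.

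For $k = 3$ I would take
\begin{align}
  P_0 = \bigl(\tfrac{1-\epsilon}{2},\, \tfrac{1-\epsilon}{2},\, \epsilon\bigr), \qquad P_1 = \bigl(\tfrac{1-\epsilon^2}{2},\, \tfrac{1-\epsilon^2}{2},\, \epsilon^2\bigr).
\end{align}
Conditioning on $N_3$ shows that $(N_1, N_2)$ follows the same conditional multinomial law under $P_0$ and $P_1$, so
\begin{align}
  d_{TV}\!\bigl(\Mul(n, P_0),\, \Mul(n, P_1)\bigr) = d_{TV}\!\bigl(\mathrm{Bin}(n, \epsilon),\, \mathrm{Bin}(n, \epsilon^2)\bigr),
\end{align}
while the functional gap decomposes as $\theta(P_0) - \theta(P_1) = 2\bigl[\phi(\tfrac{1-\epsilon}{2}) - \phi(\tfrac{1-\epsilon^2}{2})\bigr] + \bigl[\phi(\epsilon) - \phi(\epsilon^2)\bigr]$, with first bracket $O(\epsilon)$ by local boundedness of $|\phi^{(1)}|$ near $1/2$ and second bracket $\ge \tfrac{W}{2}\ln(1/\epsilon)$ by the regularity step.

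To conclude, I would set $\epsilon = c/n$ with $c$ a small universal constant. Since $\mathrm{Bin}(n, c/n)$ and $\mathrm{Bin}(n, c^2/n^2)$ are respectively close in total variation to $\Poi(c)$ and to a point mass at $0$, one gets $d_{TV} \le 1 - e^{-c} + o(1) \le 1/2$ for $c$ chosen small enough, while $|\theta(P_0) - \theta(P_1)| \gtrsim \ln n$. Le Cam's two-point inequality then yields
\begin{align}
  R^*(n, k; \phi) \ge R^*(n, 3; \phi) \gtrsim \bigl(1 - d_{TV}\bigr)\,|\theta(P_0) - \theta(P_1)|^2 \gtrsim \ln^2 n \gtrsim 1,
\end{align}
which is even stronger than claimed. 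The step I expect to require the most attention is the non-asymptotic control of $d_{TV}(\mathrm{Bin}(n, c/n), \mathrm{Bin}(n, c^2/n^2))$; the explicit coupling matching the two Binomials on the event that the single ``rare'' coordinate records no hits, together with the elementary estimate $(1-\epsilon)^n \ge e^{-n\epsilon - n\epsilon^2}$, should make the bound quantitative for every $n$.
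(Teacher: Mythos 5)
Your proof is correct, and it is in fact more explicit than the one in the paper. The paper's argument has two parts: (i) the same integration step you perform, showing that the constant sign of $\phi^{(1)}$ on some $(0,p_0]$ forces $|\phi(p)-\phi(p_0)|\ge W\ln(p_0/p)+c_1'(p_0-p)\to\infty$ as $p\to 0$; and (ii) the bare assertion that unboundedness of $\phi$ near $0$ ``obviously'' precludes a consistent estimator. Your two-point construction $P_0=(\tfrac{1-\epsilon}{2},\tfrac{1-\epsilon}{2},\epsilon)$ versus $P_1=(\tfrac{1-\epsilon^2}{2},\tfrac{1-\epsilon^2}{2},\epsilon^2)$ with $\epsilon=c/n$ is precisely the quantitative content hiding behind (ii): unboundedness of the parameter alone does not imply inconsistency — what matters is that $\theta$ varies by $\gtrsim\ln n$ across a pair of distributions whose $n$-sample laws stay at total variation bounded away from $1$, and your conditioning-on-$N_3$ reduction to $d_{TV}(\mathrm{Bin}(n,\epsilon),\mathrm{Bin}(n,\epsilon^2))\le 1-(1-\epsilon)^n\le 1-e^{-2c}$ supplies exactly that. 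What your route buys is a rigorous, non-asymptotic proof together with the strictly stronger conclusion $R^*(n,k;\phi)\gtrsim\ln^2 n$; what the paper's route buys is brevity. Two small points to tidy up: the reduction $R^*(n,k;\phi)\ge R^*(n,3;\phi)$ by appending zero-mass symbols needs $\phi(0)$ to be a well-defined finite constant (true in the paper's setup, where $\phi:[0,1]\to\RealSet^+$, but worth stating since $\phi$ is then discontinuous at $0$; alternatively spread the mass $1-\epsilon$ over all $k-1$ remaining symbols and avoid the reduction), and the bound $W\ln(b/a)+c_1'(b-a)\ge\tfrac{W}{2}\ln(b/a)$ should be invoked only for the specific pair $(a,b)=(\epsilon^2,\epsilon)$, where $\ln(b/a)=\ln(1/\epsilon)$ is large and the additive term is $O(\epsilon)$, rather than for arbitrary $a<b\le p_0$.
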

The proof of \cref{prop:no-estimator} is given in \cref{sec:proof-prop-no-estimator}. From \cref{lem:asm2lower}, the divergence speed of the first derivative is $p^{-1}$ if that of the fourth derivative is $p^{-4}$. Thus, if the divergence speed of $\phi$ is greater than $p^{-4}$, we cannot construct an estimator that consistently estimates $\theta$ for any probability measure $P \in \dom{M}_k$. Consequently, there is no need to derive the minimax optimal estimator in this case.

{\bfseries Minimax optimality for Class B.}
We derive the minimax optimal rate for $\phi$ in which the divergence speed of its fourth derivative is $p^{\alpha-4}$ for $\alpha \in (0,1)$. Thus, we make the following assumption.
\begin{assumption}\label{asm:bound}
  Suppose $\phi$ is four times continuously differentiable on $(0,1]$. For $\alpha \in (0,1)$, the divergence speed of the fourth derivative of $\phi$ is $p^{\alpha-4}$.
\end{assumption}
Note that a set of $\phi$ satisfying \cref{asm:bound} is Class B depicted in \cref{fig:overview}. The divergence speed increases as $\alpha$ decreases. Under \cref{asm:bound}, we derive the minimax optimal estimator of which the minimax rate is given by the following theorems.
\begin{theorem}
  \label{thm:minimax-rate}
  Under \cref{asm:bound} with $\alpha \in (0,1/2]$, if $n \gtrsim \frac{k^{1/\alpha}}{\ln k}$ and $k \gtrsim \ln^{\frac{4}{3}}n$,
  \begin{align}
    R^*(n,k;\phi) \asymp \frac{k^2}{(n\ln n)^{2\alpha}}.
  \end{align}
  Otherwise, there is no consistent estimator, i.e., $R^*(n,k;\phi) \gtrsim 1$.
\end{theorem}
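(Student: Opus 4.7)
By \cref{lem:well-approx-poisson} it suffices to bound the minimax risk $\tilde R^*(n,k;\phi)$ under Poisson sampling. I would obtain matching upper and lower bounds of order $k^2/(n\ln n)^{2\alpha}$ via the ``bias-corrected plug-in plus best polynomial approximation'' recipe of \citet{jiao2015minimax} and \citet{wu2016minimax}, but with every approximation estimate expressed through the divergence speed of \cref{def:div-speed} rather than through explicit formulas for $p^\alpha$ or $-p\ln p$.

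\textbf{From divergence speed to approximation rates.} The first preparatory step is to integrate \cref{asm:bound} four times and extract two-sided control $\abs{\phi^{(m)}(p)} \asymp p^{\alpha-m}$ for $m \in \cbrace{0,1,2,3}$ modulo lower-order smooth terms; the upper half of this is the content of the auxiliary lemma \texttt{lem:asm2lower} referenced after \cref{prop:no-estimator}. From it I would derive
\begin{align}
  E_L(\phi,[0,\Delta]) \asymp \Delta^\alpha L^{-2\alpha}
\end{align}
for every $\Delta \in (0,1]$ and integer $L \ge 1$, by affine rescaling to $[0,1]$ and applying the classical best-polynomial-approximation estimate $E_L(x^\alpha,[0,1]) \asymp L^{-2\alpha}$ to the singular part of $\phi$. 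This is the bridge between the analytic hypothesis on $\phi$ and the statistical rate.

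\textbf{Upper bound.} I would clone the Poisson sample into $\tilde N^{(1)}_i,\tilde N^{(2)}_i \sim \Poi(np_i/2)$, fix threshold $\Delta = c_1 \ln n/n$ and degree $L = c_2 \ln n$, and define
\begin{align}
  \hat\theta = \sum_{i=1}^k \bracket*{\hat\theta_{\mathrm{poly}}(\tilde N^{(2)}_i)\ind{\tilde N^{(1)}_i \le 2c_1 \ln n} + \hat\theta_{\mathrm{plug}}(\tilde N^{(2)}_i)\ind{\tilde N^{(1)}_i > 2c_1 \ln n}},
\end{align}
where $\hat\theta_{\mathrm{plug}}(N) = \phi(N/n) - \frac{1}{2n}\phi''(N/n)(N/n)$ is a Miller-type bias correction and $\hat\theta_{\mathrm{poly}}$ is the unbiased estimator of the degree-$L$ polynomial that best approximates $\phi$ on $[0,4\Delta]$, built term-by-term from the factorial moments $\Mean[\tilde N(\tilde N-1)\cdots(\tilde N-\ell+1)] = (np)^\ell$. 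Poisson tails make the wrong-branch probability exponentially small; on each ``small'' coordinate the bias is $O(E_L(\phi,[0,4\Delta])) = O(\Delta^\alpha L^{-2\alpha})$ and the variance is controlled by Markov--Bernstein bounds on the coefficients of the best polynomial; on each ``large'' coordinate, Taylor expansion together with $\abs{\phi^{(4)}(p)} \lesssim p^{\alpha-4}$ makes the residual bias $O(p_i^{\alpha-3}/n^2)$. Summing squared biases gives $k^2 \Delta^{2\alpha} L^{-4\alpha} \asymp k^2/(n\ln n)^{2\alpha}$ and summing variances gives $\lesssim k^{2-2\alpha}/n$, which is absorbed into the first term under $\alpha \le 1/2$ and $k \gtrsim \ln^{4/3}n$.

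\textbf{Lower bound.} For the matching lower bound I would use Le Cam's two-fuzzy-hypotheses method. By the duality between best polynomial approximation and moment matching there exist probability measures $\mu_0,\mu_1$ on $[0,c_3 \ln n/n]$ with $\int x^\ell d\mu_0 = \int x^\ell d\mu_1$ for $\ell=1,\dots,L$ and
\begin{align}
  \abs*{\int \phi\,d\mu_0 - \int \phi\,d\mu_1} \asymp \paren*{\frac{\ln n}{n}}^\alpha L^{-2\alpha}.
\end{align}
Induce priors on $P \in \dom{M}_k$ by drawing each $p_i$ i.i.d.\ from $\mu_b$ and conditioning softly on $\sum_i p_i \approx 1$; the hypothesis $n \gtrsim k^{1/\alpha}/\ln k$ keeps the support inside $[0,1]$ and makes the conditioning nearly harmless, while $k \gtrsim \ln^{4/3} n$ controls the fluctuation of $\theta(P)$ around its prior mean. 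Moment matching up to order $L$ makes the Poisson marginals of $\tilde N_i$ coincide to exponential total-variation accuracy, so Le Cam's inequality yields $\tilde R^* \gtrsim k^2/(n \ln n)^{2\alpha}$.

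\textbf{Boundary regime and main obstacle.} Outside the stated regime, a direct two-point construction --- hiding a mass of order $1/k$ in a single coordinate of the uniform distribution on $[k]$ --- already delivers $R^* \gtrsim 1$. The main technical obstacle I anticipate is the sharp two-sided estimate on $E_L(\phi,[0,\Delta])$ in terms of the divergence speed: one-sided bounds on $\phi^{(4)}$ suffice for the upper bound, but the matching lower bound on $E_L$ needed to realize the dual moment-matching gap requires the lower half of \cref{def:div-speed} to propagate, through a Bernstein-type argument, into a genuine lower bound on the uniform error. Extracting the $p^\alpha$ singular part of $\phi$ cleanly enough that the smooth remainder does not spoil either direction of the best-approximation estimate is precisely the point where moving from the benchmark case $\phi(p)=p^\alpha$ to a general function in Class B requires genuinely new work.
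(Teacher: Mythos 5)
Your overall architecture is the paper's: reduce to Poisson sampling via \cref{lem:well-approx-poisson}, prove a two-sided estimate $E_L(\phi,[0,\Delta]) \asymp (\Delta/L^2)^\alpha$, run a sample-split hybrid of a best-polynomial estimator and a Miller-type bias-corrected plugin for the upper bound, and use moment-matching fuzzy priors plus Le Cam for the lower bound. Two of your intermediate steps, however, would not go through as you describe them.

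First, your bridge from the divergence-speed hypothesis to $E_L(\phi,[0,\Delta]) \asymp \Delta^\alpha L^{-2\alpha}$ --- ``applying the classical estimate for $x^\alpha$ to the singular part of $\phi$'' --- presupposes a decomposition $\phi = c\,p^\alpha + (\text{smooth remainder})$ that \cref{asm:bound} does not provide. The assumption only gives two-sided bounds on $\abs*{\phi^{(4)}(p)}$; the ``remainder'' $\phi(p) - \tfrac{W}{\alpha(\alpha-1)(\alpha-2)(\alpha-3)}\cdot(\text{sign})\,p^\alpha$ need not have better approximation properties, and the sign of $\phi^{(4)}$ is not even pinned down globally. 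The paper instead bounds the modulus of continuity of $\phi(\Delta x^2)$ directly from the derivative bounds and invokes Jackson's direct theorem for the upper bound on $E_L$ (\cref{lem:best-poly-approx}), and for the lower bound controls the second-order Ditzian--Totik modulus from below near the endpoint and applies the converse theorem $\frac{1}{L^2}\sum_{m\le L}(m+1)E_m \gtrsim \omega^2_\varphi(f,L^{-1})$ (\cref{lem:lower-best-approx}). You correctly identify this as the main obstacle, but the fix you sketch is not the one that works. Second, your plugin branch $\phi(N/n) - \tfrac{1}{2n}\phi''(N/n)(N/n)$ applies $\phi$ and $\phi''$ directly to the empirical frequency. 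Since $\tilde N^{(2)}_i$ is independent of the selector $\tilde N^{(1)}_i$, the argument $\tilde N^{(2)}_i/n$ can be $0$ or exceed $1$ even on selected coordinates, where $\phi$ is undefined or $\phi''$ blows up, and the Taylor-remainder analysis needs $\sup_\xi \xi^2\abs*{\bar\phi^{(4)}(\xi)} \lesssim \Delta^{\alpha-2}$ over all of $\RealSet_+$. This is exactly why the paper replaces $\phi$ by the Hermite-interpolated extension $\bar\phi_{\Delta_{n,k}/n}$ flattened on $[0,\Delta_{n,k}/2n]$ and $[2,\infty)$; without some such surgery the bias bound of \cref{lem:plugin-bias} fails. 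A smaller issue: your ``otherwise'' clause cannot be settled by a single two-point perturbation of the uniform distribution --- that construction yields only $k^{2-2\alpha}/n \gtrsim 1$, i.e.\ $n \lesssim k^{2-2\alpha}$, which for $\alpha \le 1/2$ is far short of the full range $n \lesssim k^{1/\alpha}/\ln k$; the inconsistency there again comes from the moment-matching lower bound of \cref{thm:lower2}.
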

\begin{theorem}
  \label{thm:minimax-rate2}
  Under \cref{asm:bound} with $\alpha \in (1/2,1)$, if $n \gtrsim \frac{k^{1/\alpha}}{\ln k}$
  \begin{align}
    R^*(n,k;\phi) \asymp \frac{k^2}{(n\ln n)^{2\alpha}} + \frac{k^{2-2\alpha}}{n}.
  \end{align}
  Otherwise, there is no consistent estimator, i.e., $R^*(n,k;\phi) \gtrsim 1$.
\end{theorem}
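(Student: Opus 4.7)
The plan is to establish matching upper and lower bounds of order $\frac{k^2}{(n\ln n)^{2\alpha}}+\frac{k^{2-2\alpha}}{n}$ by adapting the polynomial-plus-plugin scheme of \citet{jiao2015minimax,wu2016minimax} so that it depends on $\phi$ only through the divergence speed of $\phi^{(4)}$, and then isolating the extra variance term that appears when $\alpha > 1/2$. By \cref{lem:well-approx-poisson} it suffices to work in the Poisson model. I would independently thin the Poisson sample into two copies of rate $n/2$, use the first copy to decide per coordinate whether $p_i$ is ``large'' or ``small'', and use the second copy to estimate. Fix a threshold $\tau \asymp \ln n/n$ and a degree $L \asymp \ln n$. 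For each $i$: if the first copy gives $\tilde{N}_i^{(1)} > 2 c_1 \ln n$, output a Miller--Madow-style bias-corrected plugin $\phi(\tilde{N}_i^{(2)}/n) + b(\tilde{N}_i^{(2)})$ that cancels the second-order bias; otherwise output the unbiased estimator of the degree-$L$ polynomial best approximating $\phi$ on $[0,4\tau]$.

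\textbf{Upper bound.} \cref{asm:bound} together with \cref{lem:asm2lower} yields $\abs*{\phi^{(m)}(p)} \lesssim p^{\alpha-m}$ on $(0,1]$ for $m \le 4$. In the plugin regime, a fourth-order Taylor expansion gives per-coordinate squared bias $\lesssim p_i^{2\alpha-4}/n^4$ and per-coordinate variance $\lesssim p_i^{2\alpha-1}/n$. The variance dominates when $\alpha \in (1/2,1)$ and is maximized near the uniform distribution $p_i=1/k$, producing the $\frac{k^{2-2\alpha}}{n}$ contribution; the squared bias remains of lower order. In the polynomial regime, a Jackson/Bernstein-type argument driven only by the bound on $\phi^{(4)}$ gives $E_L(\phi,[0,4\tau]) \lesssim \tau^\alpha/L^{2\alpha}$, so each per-coordinate squared bias is $\lesssim (n\ln n)^{-2\alpha}$ and the $k$-fold sum is $\frac{k^2}{(n\ln n)^{2\alpha}}$. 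A standard bound on Chebyshev coefficients controls the variance of the polynomial estimator, and Chernoff concentration on the thinned copies makes the regime-misclassification events negligible.

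\textbf{Lower bound.} The $\frac{k^2}{(n\ln n)^{2\alpha}}$ term follows by the polynomial-duality form of Le Cam's two-point method: build two priors $\mu_0,\mu_1$ on $[0,c\ln n/n]$ matching all moments up to degree $L\asymp\ln n$ and with $\abs*{\Mean_{\mu_0}\phi - \Mean_{\mu_1}\phi} \asymp E_L(\phi,[0,c\ln n/n])$, lift them to $\dom{M}_k$ by independent coordinate-wise replication followed by renormalization to the simplex, and argue that the induced Poisson observations have bounded total variation while the $\theta$-values are separated by $\asymp k(n\ln n)^{-\alpha}$. The additional $\frac{k^{2-2\alpha}}{n}$ term (new for $\alpha>1/2$) is obtained by a local perturbation of the uniform distribution combined with a Hajek--Le Cam or van Trees inequality, using the matching lower bound $\abs*{\phi''(1/k)} \gtrsim k^{2-\alpha}$ supplied by \cref{asm:bound}. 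The ``no consistent estimator'' clause when $n \lesssim k^{1/\alpha}/\ln k$ is immediate: the two-point construction then already yields separation $\gtrsim 1$.

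\textbf{Main obstacle.} The principal difficulty is to replace the classical estimate $E_L(p^\alpha,[0,\tau]) \asymp \tau^\alpha/L^{2\alpha}$ by one using only the divergence-speed hypothesis on $\phi^{(4)}$; I would approximate $\phi$ near $0$ by a scalar multiple of $p^\alpha$, bound the remainder by integrating the $\phi^{(4)}$ estimate four times, and absorb that remainder into the other error terms. The particular challenge for $\alpha \in (1/2,1)$ is that the plugin variance is no longer dominated by the polynomial bias, so the calibration of $\tau$ and $L$ must balance the three contributions simultaneously, and the local-perturbation lower bound driving the $\frac{k^{2-2\alpha}}{n}$ term must be established independently of the polynomial-duality construction.
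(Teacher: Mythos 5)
Your overall architecture coincides with the paper's: Poissonization, sample splitting, a bias-corrected plugin estimator above a threshold $\asymp\ln n/n$ and an unbiased best-polynomial estimator of degree $\asymp\ln n$ below it for the upper bound; moment-matched priors on $[0,c\ln n/n]$ combined with the duality between moment matching and best polynomial approximation for the $\frac{k^2}{(n\ln n)^{2\alpha}}$ lower bound. Two points, however, are genuinely problematic.

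First, your route to the $\frac{k^{2-2\alpha}}{n}$ term is driven by the wrong derivative and the wrong base point. At the exact uniform distribution the first-order term of any perturbation of $\theta(P)=\sum_i\phi(p_i)$ cancels (all $\phi^{(1)}(p_i)$ are equal and $\sum_i\delta_i=0$), and the Fisher-information/van Trees quantity $\sum_i(\phi^{(1)}(p_i))^2p_i-(\sum_i\phi^{(1)}(p_i)p_i)^2$ degenerates to zero there. If you then let the separation be carried by the second-order term $\sum_i\phi^{(2)}(1/k)\delta_i^2\asymp k^{3-\alpha}\delta^2$ against KL $\asymp nk^2\delta^2$, optimizing gives squared separation $\asymp k^{2-2\alpha}/n^2$, a factor $n$ short of the target. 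The paper's \cref{thm:lower1} instead uses a genuinely non-uniform two-point pair ($P$ puts mass $1/2$ on one symbol and $1/(2(k-1))$ on the rest; $Q$ shifts by $\epsilon=1/\sqrt n$), so that the \emph{first}-order separation $\asymp\epsilon\,\lvert\phi^{(1)}(1/(2(k-1)))\rvert\gtrsim\epsilon k^{1-\alpha}$ survives, with the $\phi^{(2)}$ terms appearing only as errors to be absorbed; the needed bound $\lvert\phi^{(1)}(p)\rvert\gtrsim Wp^{\alpha-1}$ comes from \cref{asm:bound} via \cref{lem:asm2lower}. Your sketch as written would not produce the $\frac{k^{2-2\alpha}}{n}$ rate.

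Second, the moment-matching lower bound requires a matching \emph{lower} bound on $E_L(\phi,[0,\lambda])$, not only the upper bound $E_L\lesssim(\lambda/L^2)^\alpha$ you address under ``main obstacle'': the duality of \cref{lem:best-approx-solution} makes the separation exactly $2E_L$, so concluding $d\gtrsim k(n\ln n)^{-\alpha}$ needs $E_L(\phi,[0,\lambda])\gtrsim(\lambda/L^2)^\alpha$. This is where the lower-bound half of the divergence-speed definition is used; the paper proves it in \cref{lem:lower-best-approx} via the second-order Ditzian--Totik modulus and a converse approximation theorem. Your plan to write $\phi$ as a multiple of $p^\alpha$ plus a remainder obtained by integrating the $\phi^{(4)}$ bound four times does not obviously yield this, since the remainder is not a polynomial of degree $\le L$ and could in principle cancel part of the unapproximable component. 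The remainder of your proposal --- the plugin bias/variance computation with variance $\lesssim p_i^{2\alpha-1}/n$ summing to $k^{2-2\alpha}/n$, the polynomial-regime bias $k(n\ln n)^{-\alpha}$ squared, the Chernoff control of misclassification, and the observation that the ``otherwise'' clause follows because the $\frac{k^2}{(n\ln n)^{2\alpha}}$ bound exceeds a constant when $n\lesssim k^{1/\alpha}/\ln k$ --- matches the paper's argument.
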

\cref{thm:minimax-rate,thm:minimax-rate2} are proved by combining the results in \cref{sec:anal-lower,sec:anal-upper}. The minimax optimal rate in \cref{thm:minimax-rate,thm:minimax-rate2} are characterized by the parameter for the divergence speed $\alpha$ from \cref{asm:bound}. From \cref{thm:minimax-rate,thm:minimax-rate2}, we can conclude that the minimax optimal rate decreases as the divergence speed increases.

The explicit estimator that achieves the optimal minimax rate shown in \cref{thm:minimax-rate,thm:minimax-rate2} are described in the next section.

\begin{remark}
\cref{asm:bound} covers $\phi(p)=p^\alpha$ for $\alpha \in (0,1)$, but does not for all existing works. For $\phi(p)=-p\ln(p)$ and $\phi(p)=p^\alpha$ with $\alpha \ge 1$, the divergence speed of these $\phi$ is lower than $p^{\alpha-4}$ for $\alpha \in (0,1)$. Indeed, the divergence speed of $\phi(p)=-p\ln(p)$ and $\phi(p)=p^\alpha$ for $\alpha\ge 1$ are $p^{-3}$ and $p^{\alpha-4}$, respectively. We can expect that the corresponding minimax rate is characterized by the divergence speed even if the divergence speed is lower than $p^{\alpha-4}$ for $\alpha \in (0,1)$. The analysis of the minimax rate for lower divergence speeds remains an open problem.
\end{remark}

\section{Estimator for $\theta$}

In this section, we describe our estimator for $\theta$ in detail. Our estimator is composed of the bias-corrected plugin estimator and the best polynomial estimator. We first describe the overall estimation procedure on the supposition that the bias-corrected plugin estimator and the best polynomial estimator are black boxes. Then, we describe the bias-corrected plugin estimator and the best polynomial estimator in detail.

For simplicity, we assume the samples are drawn from the Poisson sampling model, where we first draw $n' \sim \Poi(2n)$, and then draw $n'$ i.i.d. samples $S_{n'} = \cbrace{X_1,...,X_{n'}}$. Given the samples $S_{n'}$, we first partition the samples into two sets. We use one set of the samples to determine whether the bias-corrected plugin estimator or the best polynomial estimator should be employed, and the other set to estimate $\theta$. Let $\cbrace{B_i}_{i=1}^{n'}$ be i.i.d. random variables drawn from the Bernoulli distribution with parameter $1/2$, i.e., $\p\cbrace{B_i = 0} = \p\cbrace{B_i = 1} = 1/2$ for $i=1,...,n'$. We partition $(X_1,...,X_{n'})$ according to $(B_1,...,B_{n'})$, and construct the histograms $\tilde{N}$ and $\tilde{N}'$, which are defined as
\begin{align}
  \tilde{N}_i = \sum_{j=1}^{n'} \ind{X_j = i}\ind{B_j = 0}, \quad \tilde{N}'_i = \sum_{j=1}^{n'} \ind{X_j = i}\ind{B_j = 1}, \for i \in [n'].
\end{align}
Then, $\tilde{N}$ and $\tilde{N}'$ are independent histograms, and $\tilde{N}_i,\tilde{N}'_i \sim \Poi(np_i)$.

Given $\tilde{N}'$, we determine whether the bias-corrected plugin estimator or the best polynomial estimator should be employed for each alphabet. Let $\Delta_{n,k}$ be a threshold depending on $n$ and $k$ to determine which estimator is employed, which will be specified as in \cref{thm:upper-bound} on \cpageref{thm:upper-bound}. We apply the best polynomial estimator if $\tilde{N}'_i < 2\Delta_{n,k}$, and otherwise, i.e., $\tilde{N}'_i \ge 2\Delta_{n,k}$, we apply the bias-corrected plugin estimator. Let $\phi_{\rm poly}$ and $\phi_{\rm plugin}$ be the best polynomial estimator and the bias-corrected plugin estimator for $\phi$, respectively. Then, the estimator of $\theta$ is written as
\begin{align}
  \tilde\theta(\tilde{N}) = \sum_{i=1}^k \paren*{ \ind{\tilde{N}'_i \ge 2\Delta_{n,k}}\phi_{\rm plugin}(\tilde{N}_i) + \ind{\tilde{N}'_i < 2\Delta_{n,k}}\phi_{\rm poly}(\tilde{N}_i) }.
\end{align}
Finally, we truncate $\tilde\theta$ so that the final estimate is not outside of the domain of $\theta$.
\begin{align}
  \hat\theta(\tilde{N}) =& (\tilde\theta(\tilde{N}) \land \theta_{\rm sup}) \lor \theta_{\rm inf},
\end{align}
where $\theta_{\rm inf} = \inf_{P \in \dom{M}_k}\theta(P)$ and $\theta_{\rm sup} = \sup_{P \in \dom{M}_k}\theta(P)$. Next, we describe the details of the best polynomial estimator $\phi_{\rm poly}$ and the bias-corrected plugin estimator $\phi_{\rm plugin}$.

{\bfseries Best polynomial estimator.}
The best polynomial estimator is an unbiased estimator of the polynomial that provides the best approximation of $\phi$. Let $\cbrace{a_m}_{m=0}^L$ be coefficients of the polynomial that achieves the best approximation of $\phi$ by a degree-$L$ polynomial with range $I=[0,\frac{4\Delta_{n,k}}{n}]$, where $L$ is as specified in \cref{thm:upper-bound} on \cpageref{thm:upper-bound}. Then, the approximation of $\phi$ by the polynomial at point $p_i$ is written as
\begin{align}
 \phi_L(p_i) =& \sum_{m=0}^L a_m p_i^m. \label{eq:approx-poly}
\end{align}
From \cref{eq:approx-poly}, an unbiased estimator of $\phi_L$ can be derived from an unbiased estimator of $p_i^m$. For the random variable $\tilde{N}_i$ drawn from the Poisson distribution with mean parameter $np_i$, the expectation of the $m$th factorial moment $(\tilde{N}_i)_m = \tfrac{\tilde{N}_i!}{(\tilde{N}_i - m)!}$ becomes $(np_i)^m$. Thus, $\frac{(\tilde{N}_i)_m}{n^m}$ is an unbiased estimator of $p_i^m$. Substituting this into \cref{eq:approx-poly} gives the unbiased estimator of $\phi_L(p_i)$ as
\begin{align}
  \bar\phi_{\rm poly}(\tilde{N}_i) = \sum_{m=0}^L \frac{a_m}{n^j} (\tilde{N}_i)_m.
\end{align}
Next, we truncate $\bar\phi_{\rm poly}$ so that it is not outside of the domain of $\phi(p)$. Let $\phi_{{\rm inf},\frac{\Delta_{n,k}}{n}} = \inf_{p \in [0,\frac{\Delta_{n,k}}{n}]} \phi(p)$ and $\phi_{{\rm sup},\frac{\Delta_{n,k}}{n}} = \sup_{p \in [0,\frac{\Delta_{n,k}}{n}]} \phi(p)$. Then, the best polynomial estimator is defined as
\begin{align}
 \phi_{\rm poly}(\tilde{N}_i) = (\bar\phi_{\rm poly}(\tilde{N}_i) \land \phi_{{\rm sup},\frac{\Delta_{n,k}}{n}}) \lor \phi_{{\rm inf},\frac{\Delta_{n,k}}{n}}.
\end{align}

{\bfseries Bias-corrected plugin estimator.}
In the bias-corrected plugin estimator, we apply the bias correction of \citep{miller1955nbi}. Applying the second-order Taylor expansion to the bias of the plugin estimator gives
\begin{align}
 \Mean\bracket*{\phi\paren*{\frac{\tilde{N}_i}{n}} - \phi(p_i)}
  \approx& \Mean\bracket*{\phi^{(1)}(p_i)\paren*{\frac{\tilde{N}_i}{n} - p_i} + \frac{\phi^{(2)}(p_i)}{2}\paren*{\frac{\tilde{N}_i}{n} - p_i}^2} \\
  =& \frac{p_i\phi^{(2)}(p_i)}{2n}.
\end{align}
Thus, we include $-\frac{\tilde{N}_i\phi^{(2)}(\tilde{N}_i/n)}{2n^2}$ as a bias-correction term in the plugin estimator $\phi(\tilde{N}_i/n)$, which offsets the second-order approximation of the bias. However, we do not directly apply the bias-corrected plugin estimator to estimate $\phi(p_i)$ for two reasons. First, the derivative of $\phi$ is large near $0$, which results in a large bias, and second, $\phi(p)$ for $p > 1$ is undefined even though $\tilde{N}_i/n$ can exceed $1$. Thus, we apply the bias-corrected plugin estimator to the function $\bar\phi_{\frac{\Delta_{n,k}}{n}}$ defined below instead of $\phi$. Define
\begin{align}
  & H_L(p;\phi,a,b) \\
  =& \phi(a) + \sum_{m=1}^L \frac{\phi^{(m)}(a)}{m!}(p-a)^m(p-b)^{L+1}\sum_{\ell=0}^{L-m}\frac{(-1)^{\ell}(L+\ell)!}{\ell!L!}(a-b)^{-L-1-\ell}(p-a)^\ell \\
 =& \phi(a) + \sum_{m=1}^L\frac{\phi^{(m)}(a)}{m!}(p-a)^m\sum_{\ell=0}^{L-m}\frac{L+1}{L+\ell+1}\Beta_{\ell,L+\ell+1}\paren*{\frac{p-a}{b-a}},
\end{align}
where $\Beta_{\nu,n}(x)=\binom n\nu x^\nu(1-x)^{n-\nu}$ denotes the Bernstein basis polynomial. \sloppy Then, $H_L(p;\phi,a,b)$ denotes a function that interpolates between $\phi(a)$ and $\phi(b)$ using Hermite interpolation. From generalized Hermite interpolation~\citep{spitzbart1960generalization}, $H_L^{(i)}(a;\phi,a,b) = \phi^{(i)}(a)$ for $i = 0,...,L$ and $H_L^{(i)}(b;\phi,a,b) = 0$ for $i = 1,...,L$. The function $\bar\phi_{\frac{\Delta_{n,k}}{n}}$ is defined as
\begin{align}
 \bar\phi_{\frac{\Delta_{n,k}}{n}}(p) = \begin{dcases}
   H_4\paren*{\frac{\Delta_{n,k}}{2n};\phi,\frac{\Delta_{n,k}}{n},\frac{\Delta_{n,k}}{2n}} & \textif p \le \frac{\Delta_{n,k}}{2n}, \\
   H_4\paren*{p;\phi,\frac{\Delta_{n,k}}{n},\frac{\Delta_{n,k}}{2n}} & \textif \frac{\Delta_{n,k}}{2n} < p < \frac{\Delta_{n,k}}{n}, \\
   H_4\paren*{p;\phi,1,2} & \textif 1 < p < 2, \\
   H_4\paren*{2;\phi,1,2} & \textif p \ge 2, \\
   \phi(p) & \otherwise.
 \end{dcases}
\end{align}
From this definition, $\bar\phi_{\frac{\Delta_{n,k}}{n}} = \phi$ if $p \in [\frac{\Delta_{n,k}}{n}, 1]$. From Hermite interpolation, the function $\bar\phi_{\frac{\Delta_{n,k}}{n}}$ is four times differentiable on $\RealSet_+$ and $\bar\phi_{\frac{\Delta_{n,k}}{n}}^{(1)}(p) = ... = \bar\phi_{\frac{\Delta_{n,k}}{n}}^{(4)}(p) = 0$ for $p \le \frac{\Delta_{n,k}}{2n}$ and $p \ge 2$. By introducing $\bar\phi_{\frac{\Delta_{n,k}}{n}}$, we can bound the fourth derivative of $\bar\phi_{\frac{\Delta_{n,k}}{n}}$ using $\Delta_{n,k}$, and this enables us to control the bias with the threshold parameter $\Delta_{n,k}$. Using $\bar\phi_{\frac{\Delta_{n,k}}{n}}$ instead of $\phi$ yields the bias-corrected plugin estimator
\begin{align}
  \phi_{\rm plugin}(\tilde{N}_i) = \bar\phi_{\frac{\Delta_{n,k}}{n}}\paren*{\frac{\bar{N}_i}{n}} - \frac{\tilde{N}_i}{2n^2}\bar\phi_{\frac{\Delta_{n,k}}{n}}^{(2)}\paren*{\frac{\bar{N}_i}{n}}. \label{eq:plugin-estimator}
\end{align}

\section{Remark about Differentiability for Analysis}
Why is the minimax rate characterized by the divergence speed of the {\em fourth} derivative? Indeed, most of the results can be obtained on a weaker assumption compared to \cref{asm:bound} regarding differentiability, which is formally defined as follows.
\begin{assumption}\label{asm:bound2}
  Suppose $\phi$ is two times continuously differentiable on $(0,1]$. For $\alpha \in (0,1)$, the divergence speed of the second derivative of $\phi$ is $p^{\alpha-2}$.
\end{assumption}
\cref{asm:bound2} only requires two times continuous differentiability, whereas \cref{asm:bound} requires four times. Only the analysis of the bias-corrected plugin estimator requires \cref{asm:bound} to achieve the minimax rate due to the bias-correction term in \cref{eq:plugin-estimator}. The bias-correction term is formed as the plugin estimator of the second derivative of $\phi$, and its convergence rate is highly dependent on the smoothness of the second derivative. The smoothness of the second derivative of $\phi$ is characterized by the fourth derivative of $\phi$, and thus \cref{asm:bound} is required to derive the error bound of the bias-corrected plugin estimator. Another bias-correction method might weaken the assumption as in \cref{asm:bound2}.

\section{Analysis of Lower Bound}\label{sec:anal-lower}
In this section, we derive a lower bound for the minimax rate of $\theta$. Under \cref{asm:bound2}, we can derive the lower bound of the minimax risk as in the following theorem.
\begin{theorem}
  \label{thm:lower1}
  Under \cref{asm:bound2}, for $k \ge 3$, we have
 \begin{align}
  R^*(n,k;\phi) \gtrsim \frac{k^{2-2\alpha}}{n}.
 \end{align}
\end{theorem}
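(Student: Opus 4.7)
The plan is to apply a coordinate-wise two-point reduction under the Poisson model. By \cref{lem:well-approx-poisson} it suffices to lower-bound $\tilde{R}^*(n,k;\phi)$; under the Poisson model the sufficient statistic $\tilde{N}=(\tilde{N}_1,\ldots,\tilde{N}_k)$ has independent coordinates, so a product prior on $P$ will make the Bayes posterior variance factorize coordinate by coordinate. This reduces the problem to a one-dimensional Le Cam argument in a neighborhood of $p=1/k$, after which we sum $k$ per-coordinate lower bounds.

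Concretely, fix a perturbation $\delta \asymp 1/\sqrt{nk}$ and draw each $p_i$ independently and uniformly from $\cbrace{1/k-\delta,\,1/k+\delta}$. Then $\Mean\bracket*{\sum_i p_i}=1$ and by Chebyshev $\abs*{\sum_i p_i-1}=O(\delta\sqrt{k})=O(1/\sqrt{n})$ with high probability, so conditioning on $\dom{M}_k$ (or renormalizing and absorbing a $O(1/\sqrt{n})$ loss) yields a prior $\mu$ supported on $\dom{M}_k$. Because $\theta(P)=\sum_i\phi(p_i)$ is additive and the $\tilde{N}_i$ are independent, the Bayes risk is at least
\begin{align*}
\Mean\bracket*{\mathrm{Var}\paren*{\theta(P)\,\big|\,\tilde{N}}}=\sum_{i=1}^k \Mean\bracket*{\mathrm{Var}\paren*{\phi(p_i)\,\big|\,\tilde{N}_i}}.
\end{align*}

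For each single-coordinate Bayes risk we invoke Le Cam's two-point inequality. The total variation distance between $\Poi(n/k - n\delta)$ and $\Poi(n/k + n\delta)$ is at most a constant multiple of $n\delta/\sqrt{n/k}=\delta\sqrt{nk}$, which is a universal constant by our choice of $\delta$. Hence no estimator can separate the two hypotheses with probability beyond a constant, and the per-coordinate Bayes risk is $\gtrsim \abs*{\phi(1/k+\delta)-\phi(1/k-\delta)}^2\asymp \abs*{\phi^{(1)}(1/k)}^2\delta^2$. Under \cref{asm:bound2} the lower bound $\abs*{\phi^{(2)}(p)} \gtrsim p^{\alpha-2}$ can be integrated to obtain $\abs*{\phi^{(1)}(1/k)} \gtrsim k^{1-\alpha}$ for $k$ large (the additive constant $c'_1$ from \cref{def:div-speed} is dominated by the divergent term). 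Summing across $i=1,\ldots,k$ yields
\begin{align*}
\tilde{R}^*(n,k;\phi)\gtrsim k\cdot k^{2(1-\alpha)}\cdot\frac{1}{nk}=\frac{k^{2-2\alpha}}{n},
\end{align*}
and \cref{lem:well-approx-poisson} converts this to the claimed bound for $R^*(n,k;\phi)$.

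The main obstacle is reconciling a \emph{product} prior with the \emph{simplex} constraint $\sum_i p_i=1$, since independent $\pm\delta$ perturbations generically fall off $\dom{M}_k$. One must either condition on the simplex and argue that the conditioning changes neither the functional gap nor the factorized Bayes risk by more than a constant factor, or symmetrize in a way that keeps $\sum_i p_i$ fixed by construction (for example, pairing coordinates $(2i-1,2i)$ and using opposite signs) while still producing a first-order separation $\abs*{\phi^{(1)}(1/k)}\delta$ per coordinate. A secondary subtlety is ensuring the lower bound $\abs*{\phi^{(2)}(p)}\gtrsim p^{\alpha-2}$ from \cref{asm:bound2} remains effective in the relevant range $p=\Theta(1/k)$ despite the additive constants $c_2,c'_2$; this holds automatically once $k$ is large enough that $k^{2-\alpha}$ dominates those constants.
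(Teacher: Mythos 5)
Your per-coordinate calculation produces the right rate, but the obstacle you flag at the end---reconciling the product prior with the simplex---is not a secondary technicality; it is where the argument breaks, and neither of your proposed fixes repairs it. Write $p_i = 1/k + \sigma_i\delta$ with i.i.d.\ signs $\sigma_i$. To first order, $\theta(P) = k\phi(1/k) + \phi^{(1)}(1/k)\,\delta\sum_i\sigma_i + O\paren*{k\,\abs*{\phi^{(2)}(1/k)}\delta^2}$, so the entire first-order fluctuation of the functional---the quantity whose posterior variance your bound rests on---is carried by $\sum_i\sigma_i$, which is exactly the degree of freedom that the constraint $\sum_i p_i=1$ removes. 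Conditioning on the simplex (for even $k$, on $\sum_i\sigma_i=0$) makes $\theta(P)$ \emph{deterministic}: every admissible configuration has the same multiset of values, so the Bayes risk is zero; approximate conditioning leaves only a second-order separation of order $\abs*{\phi^{(2)}(1/k)}\delta^2$ per coordinate, giving $k\cdot k^{2(2-\alpha)}\delta^4\asymp k^{2-2\alpha}/n^2$, a factor of $n$ short. Your alternative fix, pairing coordinates with opposite signs, fails even more sharply: $\phi(1/k+\delta)+\phi(1/k-\delta)$ is invariant under the sign flip, so $\theta(P)$ does not vary at all. Relaxing to approximate probabilities as in \cref{lem:approx-lower} does not help either: the slack is $\epsilon\asymp\delta\sqrt k\asymp 1/\sqrt n$, and the penalty term $\frac{W^2}{\alpha^2}k^{2-2\alpha}\epsilon^{2\alpha}\asymp k^{2-2\alpha}n^{-\alpha}$ dominates the target $k^{2-2\alpha}/n$ since $\alpha<1$. (A smaller issue: $\delta\asymp1/\sqrt{nk}$ must satisfy $\delta<1/k$, i.e.\ $k\lesssim n$, whereas the theorem is claimed for all $k\ge3$.)

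The paper's proof sidesteps all of this with a single \emph{coherent} two-point perturbation that lies on the simplex by construction: $P=(\frac12,\frac{1}{2(k-1)},\dots)$ versus $Q=(\frac{1+\epsilon}{2},\frac{1-\epsilon}{2(k-1)},\dots)$. The $k-1$ light coordinates all move in the same direction, so their first-order contributions to $\theta$ add coherently to $\asymp\abs*{\phi^{(1)}(\frac{1}{2(k-1)})}\epsilon\asymp k^{1-\alpha}\epsilon$ (using \cref{lem:asm2lower} to pass from the second derivative to the first), while the displaced mass is absorbed by the heavy first coordinate and the total KL cost is only $-\frac12\ln(1-\epsilon^2)\le\epsilon^2$. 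Setting $\epsilon=1/\sqrt n$ and applying \cref{cor:le-cam} with \cref{lem:well-approx-poisson} gives $k^{2-2\alpha}/n$ for every $k\ge3$. If you want to retain a prior-based formulation, the randomness must enter through a single coherent direction (e.g.\ one random sign multiplying the whole perturbation vector), at which point you have recovered the two-point argument rather than a coordinate-wise product prior.
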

The lower bound is obtained by applying Le Cam's two-point method~(see \citep{DBLP:books/daglib/0035708}). The details of the proof of \cref{thm:lower1} can be found in \cref{sec:proofs-lower-bound}. Next, we derive another lower bound for the minimax rate.
\begin{theorem}
  \label{thm:lower2}
  Under \cref{asm:bound2}, if $n \gtrsim \frac{k^{1/\alpha}}{\ln k}$, we have
  \begin{align}
   R^*(n,k;\phi) \gtrsim \frac{k^2}{(n\ln n)^{2\alpha}},
  \end{align}
  where we need $k \gtrsim \ln^{\frac{4}{3}}n$ if $\alpha \in (0,1/2]$.
\end{theorem}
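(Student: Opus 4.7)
The plan is to apply the two fuzzy hypotheses version of Le Cam's method, combined with the duality between best polynomial approximation and moment matching that was used for Shannon and R\'enyi entropy in \citet{wu2016minimax} and \citet{jiao2015minimax}. By \cref{lem:well-approx-poisson} it suffices to lower bound $\tilde{R}^*(n,k;\phi)$, so I work entirely in the Poisson sampling model in which the coordinate statistics $\tilde{N}_i$ are independent. Setting $M = c_1 \ln n$ and $L = c_2 \ln n$ for constants to be tuned, the classical Chebyshev duality supplies two probability measures $\nu_0,\nu_1$ supported on $[0,M/n]$ whose moments agree through order $L$ and for which
\begin{align*}
\Big|\int \phi\,d\nu_0 - \int \phi\,d\nu_1\Big| \asymp E_L(\phi,[0,M/n]).
\end{align*}
Under \cref{asm:bound2}, $\phi$ equals $p^\alpha$ near the origin up to a $C^2$ smooth perturbation, and Jackson-type estimates for $p^\alpha$ on a scaled interval give $E_L(\phi,[0,M/n]) \asymp (M/n)^\alpha L^{-2\alpha} \asymp (n\ln n)^{-\alpha}$.

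I would then build priors $\mu_0,\mu_1$ on $\dom{M}_k$ by sampling $p_1,\ldots,p_k$ i.i.d.\ from the corresponding $\nu_b$ and either conditioning on, or lightly rescaling to enforce, the simplex constraint $\sum_i p_i \leq 1$. The assumption $n \gtrsim k^{1/\alpha}/\ln k$ guarantees $kM/n \ll 1$, so the sum concentrates well below $1$ and the conditioning perturbs the construction negligibly. The expected functional separation is then
\begin{align*}
\big|\Mean_{\mu_0}\theta(P) - \Mean_{\mu_1}\theta(P)\big| \asymp k\cdot E_L(\phi,[0,M/n]) \asymp \frac{k}{(n\ln n)^{\alpha}},
\end{align*}
and a Bernstein-type concentration for the sum of $k$ independent bounded terms $\phi(p_i)$ shows that $\theta(P)$ stays within half of this gap with $\mu_b$-probability $1-o(1)$. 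The condition $k \gtrsim \ln^{4/3} n$ for $\alpha \in (0,1/2]$ is precisely what makes the Bernstein tail smaller than the $k$-scaled separation in that regime, where the per-coordinate variance of $\phi(p_i)$ is relatively large compared to the mean gap.

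For indistinguishability I exploit that the Poisson pmf $e^{-np}(np)^j/j!$ is a polynomial of degree $j$ in $p$, so matched moments of $\nu_0,\nu_1$ through order $L$ imply that the per-coordinate mixtures $F_b(j) = \int e^{-np}(np)^j/j!\,d\nu_b(p)$ agree for $j=0,\ldots,L$. A standard $\chi^2$ estimate then bounds $\chi^2(F_0\|F_1)$ by essentially $(eM/L)^{2(L+1)}$ up to polynomial factors, and choosing $c_2 \gg e c_1$ makes this smaller than $1/k$; tensorising across coordinates yields $\chi^2(F_0^{\otimes k}\|F_1^{\otimes k}) = o(1)$. Le Cam's two-point inequality combined with the separation above then produces $\tilde R^*(n,k;\phi) \gtrsim k^2/(n\ln n)^{2\alpha}$.

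The main obstacle is the joint calibration of $M$ and $L$: $L$ must be large enough to push $\chi^2$ below $1/k$, $M$ large enough that $E_L(\phi,[0,M/n])$ delivers the target separation $(n\ln n)^{-\alpha}$, and $kM/n$ small enough that the $k$ coordinate sums safely fit inside the simplex. A second delicate point is that the polynomial-approximation estimate $E_L \asymp (M/n)^\alpha L^{-2\alpha}$ has to be deduced from the divergence-speed form of \cref{asm:bound2} rather than from a direct pointwise bound on $|\phi^{(2)}|$; this requires an auxiliary lemma in which the smooth remainder of $\phi$ is absorbed into a low-degree polynomial while the $p^\alpha$ part is handled by classical rational and Chebyshev expansions \citep{petrushev2011rational,timan1965theory}.
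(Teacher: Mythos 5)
Your proposal follows essentially the same route as the paper: reduce to Poisson sampling, use the Chebyshev/duality construction (\cref{lem:best-approx-solution}) to get two moment-matched priors on $[0,\lambda]$ with $\lambda \asymp \ln n/n$ and $L \asymp \ln n$, lower bound the functional separation by $k\,E_L(\phi,[0,\lambda])$, control the statistical distance of the Poisson mixtures via moment matching, and handle the simplex constraint by concentration plus a normalization reduction (the paper's \cref{lem:approx-lower,lem:approx-tv-lower}). Your variations — a $\chi^2$-tensorization bound in place of the paper's TV bound (\cref{lem:tv-poi-bound}) with subadditivity over coordinates, and Bernstein in place of Chebyshev for the concentration of $\theta(P)$ — are cosmetic.

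The one genuine gap is the step you yourself flag: the lower bound $E_L(\phi,[0,\lambda]) \gtrsim (\lambda/L^2)^\alpha$ for a general $\phi$ satisfying \cref{asm:bound2}. Your claim that ``$\phi$ equals $p^\alpha$ near the origin up to a $C^2$ smooth perturbation'' does not follow immediately from the assumption, which only gives two-sided bounds on $\abs{\phi^{(2)}}$; one must first argue that $\phi^{(2)}$ has constant sign near $0$ (by continuity, since the lower bound forces $\abs{\phi^{(2)}}>0$ there) before the decomposition $\phi = c\,p^\alpha + \psi$ with $\psi''$ bounded is available, and one must then check quantitatively that $E_L(\psi,[0,\lambda]) \lesssim (\lambda/L)^2$ is negligible against the classical Bernstein-type lower bound for $E_L(p^\alpha,[0,\lambda])$. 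This is fixable along the lines you sketch, but it is the main new technical content relative to the $\phi(p)=p^\alpha$ case; the paper instead proves it directly for general $\phi$ via the second-order Ditzian--Totik modulus of smoothness and its converse theorem (\cref{lem:lower-best-approx}). A smaller inaccuracy: the condition $k \gtrsim \ln^{4/3} n$ for $\alpha\in(0,1/2]$ does not come from the concentration (Bernstein/Chebyshev) tail, which only needs $k \gtrsim (\ln n)^{4\alpha/(2\alpha+1)}$; it comes from the normalization error $k^{2-2\alpha}\epsilon^{2\alpha}$ with $\epsilon \asymp \lambda/\sqrt{k}$ incurred when passing from sub-probability vectors back to $\dom{M}_k$ in \cref{lem:approx-lower}, which must be dominated by $d^2 \asymp k^2/(n\ln n)^{2\alpha}$. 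Your ``conditioning or rescaling'' step would incur the same error and needs to be carried out explicitly.
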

The proof is accomplished in the same manner as \citep[Proposition 3]{wu2016minimax}. The details of the proof of \cref{thm:lower2} are also found in \cref{sec:proofs-lower-bound}. Combining \cref{thm:lower1,thm:lower2}, we get the lower bounds in \cref{thm:minimax-rate,thm:minimax-rate2} as $R^*(n,k;\phi) \gtrsim \frac{k^2}{(n\ln n)^{2\alpha}} \lor \frac{k^{2-2\alpha}}{n} \gtrsim \frac{k^2}{(n\ln n)^{2\alpha}} + \frac{k^{2-2\alpha}}{n}$.

\section{Analysis of Upper Bound}\label{sec:anal-upper}

Here, we derive the upper bound for the worst-case risk of the estimator.
\begin{theorem}\label{thm:upper-bound}
  Suppose $\Delta_{n,k} = C_2\ln n$ and $L = \floor{C_1\ln n}$ where $C_1$ and $C_2$ are universal constants such that $6C_1\ln 2 + 4\sqrt{C_1C_2}(1+\ln2) < 1$ and $C_2 > 16$. Under \cref{asm:bound}, the worst-case risk of $\hat\theta$ is bounded above by
  \begin{align}
    \sup_{P \in \dom{M}_k} \Mean\bracket*{\paren*{\hat\theta\paren{\tilde{N}} - \theta(P)}^2} \lesssim \frac{k^2}{(n\ln n)^{2\alpha}} + \frac{k^{2-2\alpha}}{n},
  \end{align}
  where we need $k \gtrsim \ln^{\frac{4}{3}}n$ if $\alpha \in (0,1/2]$.
\end{theorem}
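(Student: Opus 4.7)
I would work under the Poisson sampling model throughout (appealing to \cref{lem:well-approx-poisson} to transfer the rate back to the multinomial setting). Because $\tilde{N}$ and $\tilde{N}'$ are independent, conditioning on $\tilde{N}'$ reduces the problem to bounding, symbol by symbol, the mean squared error of whichever sub-estimator is selected by the rule $\ind{\tilde{N}'_i \ge 2\Delta_{n,k}}$. For each $i\in[k]$ I would split into a $2\times 2$ case analysis crossing the \emph{true} regime ($p_i < \Delta_{n,k}/n$ vs.\ $p_i \ge \Delta_{n,k}/n$) with the \emph{selected} regime. In the two matched cases, bias and variance of the respective sub-estimator yield the advertised rate. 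In the two mismatched cases, a Poisson Chernoff bound shows $\p\cbrace*{\abs*{\tilde{N}'_i - np_i}\ge \Delta_{n,k}} \lesssim n^{-\Omega(C_2)}$, which, combined with the truncation of $\hat\theta$ to $[\theta_{\rm inf},\theta_{\rm sup}]$, makes the mismatched contributions negligible whenever $C_2 > 16$.

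\textbf{Polynomial branch.}
For $p_i < \Delta_{n,k}/n$ paired with $\tilde{N}'_i < 2\Delta_{n,k}$, the conditional bias of $\phi_{\rm poly}(\tilde{N}_i)$ is at most $E_L(\phi,[0,4\Delta_{n,k}/n])$ up to truncation corrections that are no worse. Writing $h = 4\Delta_{n,k}/n\asymp\ln n/n$, I would prove a Jackson-type bound $E_L(\phi,[0,h]) \lesssim h^\alpha L^{-2\alpha}$ by decomposing $\phi = c\psi + r$, where $\psi$ absorbs the singular profile (divergence speed $p^{\alpha-4}$ in its fourth derivative) and $r\in C^4[0,h]$, then rescaling $[0,h]\to[-1,1]$ and invoking the classical approximation rate for $p^\alpha$ for $\psi$ together with an $L^{-4}$ Jackson estimate for $r$. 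With $L\asymp\ln n$ this gives $E_L\lesssim(n\ln n)^{-\alpha}$; summing over at most $k$ such symbols and applying Cauchy-Schwarz on the outer sum produces a bias contribution $\lesssim k^2/(n\ln n)^{2\alpha}$. For the variance, coefficient bounds $\abs*{a_m}\lesssim (4/h)^m 2^{2m} E_L$ from Markov's inequality applied to Chebyshev expansions, combined with the Poisson factorial-moment identity and $np_i\le 4\Delta_{n,k}$, keep the per-symbol variance $\lesssim E_L^2$; the explicit hypothesis $6C_1\ln 2 + 4\sqrt{C_1 C_2}(1+\ln 2)<1$ is precisely tuned so that the geometric sum $\sum_m a_m^2\Mean[((\tilde{N}_i)_m/n^m)^2]$ converges at this rate.

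\textbf{Plugin branch.}
For $p_i \ge \Delta_{n,k}/n$ paired with $\tilde{N}'_i \ge 2\Delta_{n,k}$, I would Taylor-expand $\bar\phi_{\Delta_{n,k}/n}(\tilde{N}_i/n)$ around $p_i$ to fourth order. The deterministic correction $-\tilde{N}_i\bar\phi_{\Delta_{n,k}/n}^{(2)}/(2n^2)$ exactly cancels the $O(1/n)$ bias term, leaving a remainder bounded by $\sup\abs*{\bar\phi_{\Delta_{n,k}/n}^{(4)}}\cdot\Mean\bracket*{(\tilde{N}_i/n - p_i)^4}$. Under \cref{asm:bound} the supremum over the relevant range is $\lesssim p_i^{\alpha-4}$ and the Poisson fourth central moment is $\lesssim p_i^2/n^2 + p_i/n^3$, so the per-symbol bias is of order $p_i^{\alpha-2}/n^2$. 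Squaring, summing, and optimising over $\dom{M}_k$ under the effective lower bound $p_i\gtrsim \ln n/n$ produces a contribution dominated by $k^2/(n\ln n)^{2\alpha}$, precisely under the extra hypothesis $k\gtrsim\ln^{4/3}n$ when $\alpha\le 1/2$. The per-symbol variance from the delta method is $\lesssim (\phi^{(1)}(p_i))^2 p_i/n\asymp p_i^{2\alpha-1}/n$; maximising $\sum_i p_i^{2\alpha-1}$ on the simplex (at the uniform distribution for $\alpha>1/2$, at the boundary $p_i\asymp \ln n/n$ for $\alpha\le 1/2$) delivers the $k^{2-2\alpha}/n$ term.

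\textbf{Main obstacle.}
The delicate step is the Jackson-type rate $E_L(\phi,[0,h])\lesssim h^\alpha L^{-2\alpha}$ for \emph{generic} $\phi$ satisfying \cref{asm:bound}: standard approximation theorems are stated for $p^\alpha$ or on fixed intervals, and transferring them requires careful bookkeeping of how rescaling interacts with the divergence-speed bound on $\phi^{(4)}$. Simultaneously, the coefficients $\abs*{a_m}$ of the optimal polynomial must be controlled tightly enough that, when multiplied by Poisson factorial moments $(np_i)^m$ with $np_i\lesssim\ln n$, the resulting variance stays $O(E_L^2)$ rather than blowing up exponentially in $L\asymp\ln n$. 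This balance is exactly what is enforced by the constant inequality $6C_1\ln 2 + 4\sqrt{C_1 C_2}(1+\ln 2)<1$, and checking it explicitly inside the variance accounting is the most technical part of the argument.
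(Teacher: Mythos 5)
Your architecture is the same as the paper's: Poissonize, use the independence of $\tilde{N}$ and $\tilde{N}'$ to decouple the selection rule from the estimate, dispose of the mismatched regimes by Poisson Chernoff bounds, bound the polynomial branch by the best-approximation error $E_L(\phi,[0,4\Delta_{n,k}/n]) \lesssim (\Delta_{n,k}/nL^2)^{\alpha}$ plus a variance term, and bound the plugin branch by a fourth-order Taylor expansion in which the Miller correction cancels the leading $O(1/n)$ bias. Your route to the Jackson rate (split $\phi = c\psi + r$ with $\psi$ carrying the $p^{\alpha}$ singularity and $r$ four times boundedly differentiable) differs from the paper, which instead substitutes $x\mapsto\Delta x^{2}$ and bounds the modulus of continuity of $\phi(\Delta x^{2})$ directly under \cref{asm:bound2}; your version is workable but needs the observation that $\phi^{(4)}$ has constant sign near $0$ (so that the two-sided bound on $\abs{\phi^{(4)}}$ actually yields $\phi^{(4)}(p)\mp\beta_{3}Wp^{\alpha-4}=O(1)$), and it uses the full strength of \cref{asm:bound} where the paper's lemma only needs the second derivative.

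Two concrete problems. First, the coefficient bound $\abs{a_m}\lesssim(4/h)^{m}2^{2m}E_L$ is false as stated: the coefficients of the best degree-$L$ approximant are controlled by $\norm{\phi}_{\infty}$ (giving, after rescaling to $[0,1]$, the bound $\abs{a_m}\le C2^{3L}$ the paper takes from \citet{petrushev2011rational}), not by $E_L$ — already for $L=0$ the best constant is $(\max\phi+\min\phi)/2$ while $E_0=(\max\phi-\min\phi)/2$. Consequently the per-symbol variance of the polynomial branch does not come out as $O(E_L^{2})$ by this route; what the correct argument yields is $\Delta^{3}L\,64^{L}(2e)^{2\sqrt{\Delta nL}}/n$, and it is precisely the hypothesis $6C_1\ln 2+4\sqrt{C_1C_2}(1+\ln 2)<1$ (via $64^{L}=n^{6C_1\ln2}$ and $(2e)^{2\sqrt{\Delta nL}}=n^{4\sqrt{C_1C_2}(1+\ln2)}$) that renders this term $o(n^{-1-2\alpha})$ — so your intuition about where that constant enters is right, but the quantitative claim supporting it is not. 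Second, you attach the condition $k\gtrsim\ln^{4/3}n$ to the plugin \emph{bias}; in fact the squared bias $k^{2}/(n^{2\alpha}(\ln n)^{4-2\alpha})$ is dominated by $k^{2}/(n\ln n)^{2\alpha}$ unconditionally. The condition is needed for the plugin \emph{variance}: for $\alpha\le 1/2$ the term $\sum_i p_i^{2\alpha-1}/n$ maximized under $p_i\ge\Delta_{n,k}/n$ gives $k/(n^{2\alpha}(\ln n)^{1-2\alpha})$, and absorbing this into $k^{2}/(n\ln n)^{2\alpha}$ requires $k\gtrsim(\ln n)^{4\alpha-1}$, for which $k\gtrsim\ln^{4/3}n$ suffices. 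With these two points repaired, the plan matches the paper's proof.
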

To prove \cref{thm:upper-bound}, we derive the bias and the variance of $\hat\theta$.
\begin{lemma}\label{lem:upper-ind-bias}
  Given $P \in \dom{M}_k$, for $1 \lesssim \Delta_{n,k} \le n$, the bias of $\hat\theta$ is bounded above by
  \begin{multline}
   \Bias\bracket*{\tilde\theta\paren{\tilde{N}} - \theta(P)} \lesssim
   \sum_{i = 1}^k \paren[\Bigg]{
     (e/4)^{\Delta_{n,k}} + \Bias\bracket*{\phi_{\rm plugin}(\tilde{N}_i) - \phi(p_i) }\ind{np_i > \Delta_{n,k}} \\
     + \Bias\bracket*{\phi_{\rm poly}(\tilde{N}_i) - \phi(p_i) }\ind{np_i \le 4\Delta_{n,k}} + e^{-\Delta_{n,k}/8}
   }.
 \end{multline}
\end{lemma}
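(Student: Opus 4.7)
My plan is to decompose the bias index-by-index using the sample-splitting independence and then match the three types of terms on the right-hand side to three regimes of $np_i$.

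By linearity, $\Bias\bracket*{\tilde\theta(\tilde{N}) - \theta(P)} = \sum_{i=1}^k B_i$ with
\begin{align*}
  B_i = \Mean\bracket*{\ind{\tilde{N}'_i \ge 2\Delta_{n,k}}\phi_{\rm plugin}(\tilde{N}_i) + \ind{\tilde{N}'_i < 2\Delta_{n,k}}\phi_{\rm poly}(\tilde{N}_i)} - \phi(p_i).
\end{align*}
Sample-splitting makes $\tilde{N}_i$ independent of $\tilde{N}'_i$, so the indicators factor out of the expectation and
\begin{align*}
  B_i = q_i\,B_i^{\rm plug} + (1-q_i)\,B_i^{\rm poly},
\end{align*}
where $q_i = \p\cbrace{\tilde{N}'_i \ge 2\Delta_{n,k}}$, $B_i^{\rm plug} = \Mean\bracket*{\phi_{\rm plugin}(\tilde{N}_i)} - \phi(p_i)$, and $B_i^{\rm poly} = \Mean\bracket*{\phi_{\rm poly}(\tilde{N}_i)} - \phi(p_i)$.

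Next, I split into three regimes for $np_i$. When $np_i > 4\Delta_{n,k}$, a Poisson lower-tail Chernoff bound gives $1 - q_i \le e^{-\Delta_{n,k}/8}$, so the poly-side term $(1-q_i)B_i^{\rm poly}$ is absorbed into the $e^{-\Delta_{n,k}/8}$ summand (once $\abs*{B_i^{\rm poly}}$ is bounded by a constant), while the plug-in bias is retained via $\ind{np_i > \Delta_{n,k}}=1$. When $\Delta_{n,k} < np_i \le 4\Delta_{n,k}$ I simply use $q_i,1-q_i \le 1$ and keep both biases, each matched by its indicator. When $np_i \le \Delta_{n,k}$, optimizing the upper-tail Chernoff exponent yields
\begin{align*}
  q_i \le \paren*{\tfrac{np_i}{2\Delta_{n,k}}}^{2\Delta_{n,k}} e^{2\Delta_{n,k} - np_i} \le (e/4)^{\Delta_{n,k}},
\end{align*}
so the plug-in side $q_i B_i^{\rm plug}$ is absorbed into the $(e/4)^{\Delta_{n,k}}$ summand, and the poly bias is retained via $\ind{np_i \le 4\Delta_{n,k}}=1$. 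Summing over $i$ and applying the triangle inequality gives the stated bound.

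The main technical step supporting this outline is the uniform constant bound on $\abs*{B_i^{\rm plug}}$ and $\abs*{B_i^{\rm poly}}$ used whenever the respective tail probabilities are invoked. For $\phi_{\rm poly}$, the explicit truncation to $[\phi_{{\rm inf},\Delta_{n,k}/n},\phi_{{\rm sup},\Delta_{n,k}/n}]$ together with continuity of $\phi$ on $[0,1]$ (which holds under \cref{asm:bound}) makes $\abs*{B_i^{\rm poly}} = O(1)$. For $\phi_{\rm plugin}$, the Hermite-interpolated $\bar\phi_{\Delta_{n,k}/n}$ is four times differentiable on $\RealSet_+$ with all nontrivial derivatives supported in $[\Delta_{n,k}/(2n),2]$, so both $\bar\phi_{\Delta_{n,k}/n}(\tilde{N}_i/n)$ and the correction $\frac{\tilde{N}_i}{2n^2}\bar\phi_{\Delta_{n,k}/n}^{(2)}(\tilde{N}_i/n)$ are bounded by absolute constants for every realization of $\tilde{N}_i$, giving $\abs*{B_i^{\rm plug}} = O(1)$.
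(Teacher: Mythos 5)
Your proposal is correct and follows essentially the same route as the paper: factor the indicators out via the independence of $\tilde{N}$ and $\tilde{N}'$, use the Poisson Chernoff bounds $\p\cbrace{\tilde{N}'_i \ge 2\Delta_{n,k}} \le (e/4)^{\Delta_{n,k}}$ for $np_i \le \Delta_{n,k}$ and $\p\cbrace{\tilde{N}'_i < 2\Delta_{n,k}} \le e^{-\Delta_{n,k}/8}$ for $np_i > 4\Delta_{n,k}$, and absorb the mismatched branch using uniform $O(1)$ bounds on the two per-symbol biases. The only place you are lighter than the paper is the $O(1)$ bound on the plugin bias, where the paper explicitly estimates the Hermite pieces and uses $\Delta_{n,k}\gtrsim 1$ to get $\abs*{\tfrac{p}{2n}\bar\phi^{(2)}_{\Delta_{n,k}/n}(p)}\lesssim n^{-\alpha}$; your appeal to the bounded support of the derivatives alone does not yet give this, but you correctly flag it as the technical step to be filled in.
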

\begin{lemma}\label{lem:upper-ind-var}
 Given $P \in \dom{M}_k$, for $1 \lesssim \Delta_{n,k} \le n$, the variance of $\hat\theta$ is bounded above by
 \begin{multline}
  \Var\bracket*{\tilde\theta\paren{\tilde{N}} - \theta(P)} \lesssim
   \sum_{i = 1}^k \paren[\Bigg]{
   (e/4)^{\Delta_{n,k}} + \Var\bracket*{\phi_{\rm plugin}(\tilde{N}_i) - \phi(p_i) }\ind{np_i > \Delta_{n,k}} \\
   + \Var\bracket*{\phi_{\rm poly}(\tilde{N}_i) - \phi(p_i)  }\ind{np_i \le 4\Delta_{n,k}} + e^{-\Delta_{n,k}/8} \\ + \paren*{\Bias\bracket*{\phi_{\rm plugin}(\tilde{N}_i) - \phi(p_i)} + \Bias\bracket*{\phi_{\rm poly}(\tilde{N}_i) - \phi(p_i)}}^2\ind{\Delta_{n,k} \le p_i \le 4\Delta_{n,k}}
  }.
 \end{multline}
\end{lemma}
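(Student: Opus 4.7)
The plan is to use the per-letter independence of the Poisson sampling model to reduce $\Var\bracket*{\tilde\theta(\tilde{N})}$ to a sum of per-letter variances, and then bound each one by conditioning on the sample-splitting selector. Because $\tilde{N}$ and $\tilde{N}'$ are independent histograms of independent Poissons, the random variables
\begin{align*}
Y_i = \ind{\tilde{N}'_i \ge 2\Delta_{n,k}}\phi_{\rm plugin}(\tilde{N}_i) + \ind{\tilde{N}'_i < 2\Delta_{n,k}}\phi_{\rm poly}(\tilde{N}_i)
\end{align*}
are mutually independent across $i$, so $\Var\bracket*{\tilde\theta(\tilde{N}) - \theta(P)} = \sum_{i=1}^k \Var\bracket*{Y_i}$; it suffices to bound each $\Var\bracket*{Y_i}$ by the corresponding summand in the lemma. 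For fixed $i$, write $q_i = \p\cbrace{\tilde{N}'_i \ge 2\Delta_{n,k}}$ and $D_i = \Mean\bracket*{\phi_{\rm plugin}(\tilde{N}_i)} - \Mean\bracket*{\phi_{\rm poly}(\tilde{N}_i)}$. Since $\tilde{N}'_i$ is independent of $\tilde{N}_i$, the law of total variance yields
\begin{align*}
\Var\bracket*{Y_i} = q_i \Var\bracket*{\phi_{\rm plugin}(\tilde{N}_i)} + (1-q_i)\Var\bracket*{\phi_{\rm poly}(\tilde{N}_i)} + q_i(1-q_i) D_i^2,
\end{align*}
and the triangle inequality $\abs*{D_i} \le \abs*{\Bias\bracket*{\phi_{\rm plugin}(\tilde{N}_i) - \phi(p_i)}} + \abs*{\Bias\bracket*{\phi_{\rm poly}(\tilde{N}_i) - \phi(p_i)}}$ reshapes the cross-term into the squared-sum-of-biases term stated in the lemma.

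The indicators are then inserted by a three-way case split on $np_i$ driven by standard Poisson Chernoff estimates: $q_i \le (e/4)^{\Delta_{n,k}}$ whenever $np_i \le \Delta_{n,k}$, and $1 - q_i \le e^{-\Delta_{n,k}/8}$ whenever $np_i > 4\Delta_{n,k}$. In the small regime $np_i \le \Delta_{n,k}$, the weight $q_i$ in front of $\Var\bracket*{\phi_{\rm plugin}(\tilde{N}_i)}$ is exponentially small; together with a crude uniform bound on the ranges of the two sub-estimators (guaranteed by the truncations in their definitions combined with the polynomial growth of $\phi$ implied by \cref{asm:bound}), both this term and $q_i(1-q_i)D_i^2$ are absorbed into the $(e/4)^{\Delta_{n,k}}$ summand of the lemma, while the surviving $(1-q_i)\Var\bracket*{\phi_{\rm poly}(\tilde{N}_i)}$ is retained with the still-active indicator $\ind{np_i \le 4\Delta_{n,k}}$. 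Symmetrically, in the large regime $np_i > 4\Delta_{n,k}$, the polynomial-variance and cross-term contributions are absorbed into $e^{-\Delta_{n,k}/8}$, leaving the plugin variance with $\ind{np_i > \Delta_{n,k}}$. In the intermediate band $\Delta_{n,k} < np_i \le 4\Delta_{n,k}$ all three indicators in the lemma are active, so no term needs to be discarded.

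The main obstacle is the uniform $L^\infty$ control on $\phi_{\rm plugin}(\tilde{N}_i)$ and $\phi_{\rm poly}(\tilde{N}_i)$ required to absorb contributions into the Chernoff factors. Since $\Delta_{n,k} \asymp \ln n$, the factors $(e/4)^{\Delta_{n,k}}$ and $e^{-\Delta_{n,k}/8}$ are only polynomially small in $n$, so the envelope of the sub-estimators has to be controlled at a matching polynomial rate; this is the same ingredient already needed in the proof of \cref{lem:upper-ind-bias}, and once it is in hand the calculation above reduces to routine bookkeeping.
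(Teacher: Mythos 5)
Your proposal is correct and follows essentially the same route as the paper: the per-letter variance decomposition you obtain via the law of total variance is exactly the identity the paper imports as \cref{lem:cai-lem4}, and the subsequent three-way Chernoff case split with absorption of the exponentially weighted terms via a uniform envelope on the sub-estimators is the paper's argument verbatim. One small correction: the uniform bound on $\phi_{\rm plugin}(\tilde{N}_i)-\phi(p_i)$ does not come from a truncation (only $\phi_{\rm poly}$ and the final $\hat\theta$ are truncated) but from the Hermite-interpolated construction of $\bar\phi_{\Delta_{n,k}/n}$ together with \cref{lem:asm2lower}, exactly as established in the proof of \cref{lem:upper-ind-bias}.
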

The proofs of \cref{lem:upper-ind-bias,lem:upper-ind-var} are left to \cref{sec:proofs-upper-bounds}. As proved in \cref{lem:upper-ind-bias,lem:upper-ind-var}, the bounds on the bias and the variance of our estimator are obtained with the bias and the variance of the plugin and the best polynomial estimators for each individual alphabet. Thus, we next analyze the bias and the variance of the plugin and the best polynomial estimators.

\noindent{\bfseries Analysis of the best polynomial estimator.}
The following lemmas provide the upper bounds on the bias and the variance of the best polynomial estimator.
\begin{lemma}\label{lem:poly-bias}
 Let $\tilde{N} \sim \Poi(np)$. Given an integer $L$ and a positive real $\Delta$, let $\phi_L(p) = \sum_{m=0}^L a_mp^m$ be the optimal uniform approximation of $\phi$ by degree-$L$ polynomials on $[0,\Delta]$, and $g_L(\tilde{N}) = \sum_{m=0}^L a_m(\tilde{N})_m/n^m$ be an unbiased estimator of $\phi_L(p)$. Under \cref{asm:bound2}, we have
 \begin{align}
  \Bias\bracket*{(g_L(\tilde{N}) \land \phi_{{\rm sup},\Delta})\lor \phi_{{\rm inf},\Delta} - \phi(p) } \lesssim \sqrt{\Var\bracket*{g_L(\tilde{N}) - \phi_L(p)}} + \paren*{\frac{\Delta}{L^2}}^{\alpha}.
 \end{align}
\end{lemma}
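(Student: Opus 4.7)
Write $\tilde g_L := (g_L(\tilde N) \land \phi_{{\rm sup},\Delta})\lor \phi_{{\rm inf},\Delta}$ for the clipped estimator. Since $g_L$ is unbiased for $\phi_L(p)$, the bias $\abs*{\Mean\bracket*{\tilde g_L} - \phi(p)}$ decomposes into a stochastic fluctuation of $g_L$ about its mean $\phi_L(p)$, plus the deterministic polynomial approximation gap $\phi_L(p)-\phi(p)$. I would bound the first contribution by $\sqrt{\Var\bracket*{g_L(\tilde N)-\phi_L(p)}}$ via Jensen, and the second by the best polynomial approximation error $E_L(\phi,[0,\Delta])$, then show the latter is of order $(\Delta/L^2)^\alpha$ under \cref{asm:bound2}.

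\textbf{Reducing the bias to a variance and a polynomial approximation error.} The first tool is the contractive property of clipping: if $c \in [a,b]$ then $\abs*{(y\land b)\lor a - c} \le \abs*{y-c}$ for every real $y$. Since $\phi(p) \in [\phi_{{\rm inf},\Delta},\phi_{{\rm sup},\Delta}]$ for $p \in [0,\Delta]$, applying this pointwise and then taking expectations yields
\begin{align}
\abs*{\Mean\bracket*{\tilde g_L} - \phi(p)} \le \Mean\bracket*{\abs*{g_L(\tilde N) - \phi(p)}} \le \Mean\bracket*{\abs*{g_L(\tilde N)-\phi_L(p)}} + \abs*{\phi_L(p)-\phi(p)}.
\end{align}
By Jensen's inequality and $\Mean\bracket*{g_L(\tilde N)}=\phi_L(p)$, the first summand on the right is at most $\sqrt{\Var\bracket*{g_L(\tilde N)-\phi_L(p)}}$; by definition of the best polynomial approximation, the second is at most $E_L(\phi,[0,\Delta])$. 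For $p$ slightly above $\Delta$, continuity of $\phi$ lets me absorb the extra offset $\abs*{\phi(p)-\phi_{{\rm sup},\Delta}}$ into the same approximation-error term.

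\textbf{Controlling $E_L(\phi,[0,\Delta])$.} The remaining task is to show $E_L(\phi,[0,\Delta])\lesssim (\Delta/L^2)^\alpha$. I would split $\phi(p) = c_\alpha W\, p^\alpha + r(p)$, choosing $c_\alpha$ so that the singular contribution $\beta_1 Wp^{\alpha-2}$ of $\phi^{(2)}$ in \cref{def:div-speed} is exactly matched by that of $c_\alpha W(p^\alpha)^{(2)}$; \cref{asm:bound2} then forces $r^{(2)}$ to be uniformly bounded on $[0,1]$, so $r \in C^2[0,1]$. For the singular part, Bernstein's classical theorem on $[0,1]$, rescaled to $[0,\Delta]$ via $x \mapsto x/\Delta$, gives $E_L(p^\alpha,[0,\Delta]) \asymp (\Delta/L^2)^\alpha$; for the smooth remainder, Jackson's theorem gives $E_L(r,[0,\Delta])\lesssim \Delta^2/L^2$, which is dominated by $(\Delta/L^2)^\alpha$ whenever $\alpha \in (0,1)$ and $\Delta \le 1$. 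Adding the two pieces and combining with the Jensen bound from the previous step finishes the proof.

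\textbf{Main obstacle.} The delicate step is the decomposition in Step 3: one must argue that the remainder $r$, obtained by integrating $\phi^{(2)} - c_\alpha W (p^\alpha)^{(2)}$ twice, really is $C^2$ up to and including $p=0$, with constants depending only on $W$, $\alpha$, and the additive constants $c_2,c_2'$ from \cref{def:div-speed}. Once the decomposition is clean, Bernstein and Jackson are off-the-shelf, but making sure the final constants in $\lesssim$ are uniform in $\Delta\in(0,1]$ and $L \ge 1$ requires careful bookkeeping of the affine rescaling from $[0,\Delta]$ to $[0,1]$.
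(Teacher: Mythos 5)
Your argument is correct, and both halves of it take a genuinely different route from the paper's. For the bias decomposition, you exploit the contraction property of clipping around the point $\phi(p)$ (valid since $\phi(p)\in[\phi_{{\rm inf},\Delta},\phi_{{\rm sup},\Delta}]$ for $p\le\Delta$, the regime in which the lemma is used), which gives $\abs{\Mean[\tilde g_L]-\phi(p)}\le\Mean\abs{g_L-\phi_L(p)}+\abs{\phi_L(p)-\phi(p)}\le\sqrt{\Var[g_L-\phi_L(p)]}+E_L(\phi,[0,\Delta])$ in one stroke. The paper instead introduces widened truncation levels $\phi'_{{\rm sup},\Delta}=\phi_{{\rm sup},\Delta}\lor\sup_{[0,\Delta]}\phi_L$ and $\phi'_{{\rm inf},\Delta}=\phi_{{\rm inf},\Delta}\land\inf_{[0,\Delta]}\phi_L$, splits the bias into three pieces, and invokes the Chebyshev alternation theorem to show the gap between the two clipping levels is at most $E_L(\phi,[0,\Delta])$; your version avoids that auxiliary construction entirely and is cleaner (it also yields $E_L$ rather than $2E_L$, immaterial up to constants). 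For the approximation error, you decompose $\phi=c_\alpha W p^\alpha+r$ and apply Bernstein's theorem to the singular part and Jackson to the remainder, whereas the paper (\cref{lem:best-poly-approx}) substitutes $x\mapsto\Delta x^2$ and bounds the (first-order or Ditzian--Totik-type) modulus of continuity of $\phi(\Delta x^2)$ directly from the divergence-speed bounds on $\phi^{(1)}$ and $\phi^{(2)}$. Your decomposition does go through: near $0$ the lower bound in \cref{def:div-speed} forces $\phi^{(2)}$ to have constant sign, so with the sign of $c_\alpha$ chosen accordingly $r^{(2)}$ is bounded on $(0,1]$; note that bounded (hence $r^{(1)}$ Lipschitz) is all Jackson needs, so your stated goal of $r\in C^2[0,1]$ is slightly stronger than necessary and need not be verified. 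The paper's route has the advantage of producing the matching lower bound on $E_L$ (\cref{lem:lower-best-approx}) from the same modulus-of-smoothness machinery; your route is closer to the $\phi(p)=p^\alpha$ literature and makes the source of the rate $(\Delta/L^2)^\alpha$ more transparent.
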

\begin{lemma}\label{lem:poly-var}
 Let $\tilde{N} \sim \Poi(np)$. Given an integer $L$ and a positive real $\Delta \gtrsim \frac{1}{n}$, let $\phi_L(p) = \sum_{m=0}^L a_mp^m$ be the optimal uniform approximation of $\phi$ by degree-$L$ polynomials on $[0,\Delta]$, and $g_L(\tilde{N}) = \sum_{m=0}^L a_m(\tilde{N})_m/n^m$ be an unbiased estimator of $\phi_L(p)$. Assume \cref{asm:bound2}. If $p \le \Delta$ and $2\Delta^3L \le n$, we have
 \begin{align}
  \Var\bracket*{(g_L(\tilde{N}) \land \phi_{{\rm sup},\Delta})\lor \phi_{{\rm inf},\Delta} - \phi(p) } \lesssim \frac{\Delta^3 L 64^L (2e)^{2\sqrt{\Delta n L }}}{n}.
 \end{align}
\end{lemma}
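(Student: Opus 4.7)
The plan is to reduce the statement to a variance bound on the untruncated polynomial estimator, derive an exact expression for $\Var[g_L(\tilde{N})]$ from Poisson factorial moment identities, and then estimate the resulting sum by Stirling-type asymptotics.

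\textbf{Step 1 (Reduction via contraction).} The truncation $x \mapsto (x \land \phi_{{\rm sup},\Delta})\lor \phi_{{\rm inf},\Delta}$ is $1$-Lipschitz, so it is a contraction for the $L^2$ seminorm induced by variance; subtracting the constant $\phi(p)$ does not affect variance. Hence
\begin{align}
  \Var\bigl[(g_L(\tilde{N}) \land \phi_{{\rm sup},\Delta})\lor \phi_{{\rm inf},\Delta} - \phi(p)\bigr] \le \Var\bigl[g_L(\tilde{N})\bigr],
\end{align}
reducing the task to bounding the right-hand side.

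\textbf{Step 2 (Closed-form variance identity).} I would use the Vandermonde-type identity for falling factorials, $(x)_m(x)_{m'} = \sum_{j=0}^{\min(m,m')}\binom{m}{j}\binom{m'}{j}j!(x)_{m+m'-j}$, together with $\Mean[(\tilde{N})_r] = (np)^r$ for $\tilde{N}\sim\Poi(np)$, to derive
\begin{align}
  \Var\bigl[g_L(\tilde{N})\bigr] = \sum_{k=1}^L \frac{p^k}{k!\,n^k}\bigl(\phi_L^{(k)}(p)\bigr)^2,
\end{align}
by recognizing the inner sum $\sum_m a_m\binom{m}{k}p^m = p^k\phi_L^{(k)}(p)/k!$ that arises after swapping the order of summation.

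\textbf{Step 3 (Controlling $\phi_L$ and its derivatives).} Under \cref{asm:bound2}, integrating $|\phi^{(2)}(p)| \lesssim p^{\alpha-2}$ twice on $(0,\Delta]$ shows that $\phi$ stays bounded near $0$ and hence $\|\phi\|_{\infty,[0,\Delta]} \lesssim 1$; combined with the best-approximation estimate $E_L(\phi,[0,\Delta]) \lesssim (\Delta/L^2)^\alpha$ (the same rate used in \cref{lem:poly-bias}), the triangle inequality yields $M := \|\phi_L\|_{\infty,[0,\Delta]} \lesssim 1$. The Markov brothers inequality rescaled to $[0,\Delta]$ then supplies, for all $p \in [0,\Delta]$,
\begin{align}
  |\phi_L^{(k)}(p)| \le \Bigl(\frac{2}{\Delta}\Bigr)^k\frac{L^{2k}}{(2k-1)!!}\, M.
\end{align}

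\textbf{Step 4 (Sum estimation).} Plugging the derivative bound into the variance identity and using $p \le \Delta$ reduces the task to estimating a sum of the shape
\begin{align}
  M^2\sum_{k=1}^L \frac{(4L^4)^k}{k!\,((2k-1)!!)^2\,(n\Delta)^k}.
\end{align}
Stirling's formula rewrites each summand as $(c_k)^k$ times a slowly varying prefactor; the hypothesis $2\Delta^3 L \le n$ keeps the saddle point of the summand inside $[1,L]$, so the sum is controlled by its dominant term plus a geometric tail, yielding the exponential factor $(2e)^{2\sqrt{\Delta n L}}$. The polynomial prefactor $\Delta^3 L / n$ and the $64^L = 2^{6L}$ factor (the latter originating from the conversion between the $L_\infty$ norm and monomial coefficients, i.e. from bounding coefficient magnitudes in the Chebyshev representation of $\phi_L$) are then tracked through the calculation.

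The principal obstacle is \textbf{Step 4}: locating the saddle point $k^\ast$ of the summand and extracting the correct constants in the Stirling expansion so that the exponent matches $2\sqrt{\Delta n L}\,\log(2e)$. This is pure asymptotic bookkeeping, but its delicacy lies in reconciling the polynomial prefactors with the $64^L$ factor and ensuring that the hypothesis $2\Delta^3 L \le n$ is exactly the condition that prevents the sum from being dominated by its boundary term at $k=L$.
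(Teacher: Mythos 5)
Your Steps 1--3 are individually sound: truncation at constants never increases variance, the identity $\Var[g_L(\tilde N)]=\sum_{k=1}^L\frac{p^k}{k!\,n^k}\bigl(\phi_L^{(k)}(p)\bigr)^2$ does follow from the falling-factorial product formula, and the rescaled Markov brothers inequality is correctly stated. The gap is that Step 4 cannot be completed: the bound your method produces is not $\lesssim$ the claimed one. Already the $k=1$ term of your majorant is $\frac{4L^4M^2}{n\Delta}$ (and even if you subtract $a_0$ so that $M\lesssim\Delta^\alpha$, it is $\frac{4L^4\Delta^{2\alpha-1}}{n}$), whereas the target is $\frac{\Delta^3L\,64^L(2e)^{2\sqrt{\Delta nL}}}{n}$; the ratio is of order $L^3\Delta^{2\alpha-4}/\bigl(64^L(2e)^{2\sqrt{\Delta nL}}\bigr)$, which in the regime where the lemma is invoked ($\Delta\asymp\ln n/n$, $L\asymp\ln n$, constants chosen so that $64^L(2e)^{2\sqrt{\Delta nL}}\le n^{1-\delta}$) is of order $n^{3-2\alpha+\delta}\to\infty$. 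Relatedly, the saddle point of your summand $\bigl(L^4e^3/(k^3n\Delta)\bigr)^k$ sits at $k^\ast\asymp L^{4/3}(n\Delta)^{-1/3}$ and yields an exponential factor $\exp\bigl(3L^{4/3}(n\Delta)^{-1/3}\bigr)$, not $(2e)^{2\sqrt{\Delta nL}}$ --- these have opposite monotonicity in $n\Delta$, so no amount of Stirling bookkeeping will reconcile them. The hypothesis $2\Delta^3L\le n$ also plays no role in your sum (your term ratio involves $L^4/(k^3n\Delta)$, not $\Delta^3L/n$), which is a further sign that the route has diverged from what the stated bound encodes.

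The paper's proof takes a different path whose two ingredients are exactly the sources of the $64^L$ and $(2e)^{2\sqrt{\Delta nL}}$ factors: it bounds $\sqrt{\Var[g_L]}$ by the sum of the individual standard deviations $\sum_m\frac{|a_m|}{n^m}\sqrt{\Var(\tilde N)_m}$, controls the coefficients of the rescaled best-approximating polynomial by $|a_m|\le C2^{3L}$ (a consequence of the boundedness of $\phi$ and a Chebyshev-coefficient estimate), and uses the Poisson factorial-moment variance bound $\Var(X)_m\le(\lambda m)^m\bigl((2e)^{2\sqrt{\lambda m}}/(\pi\sqrt{\lambda m})\lor1\bigr)$ together with its monotonicity in $\lambda$; the condition $2\Delta^3L\le n$ then makes the resulting series geometric with ratio at most $1/\sqrt2$, so it is dominated by its first term, giving the prefactor $\Delta^3L/n$. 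If you want to keep your exact-variance identity (which is a legitimate device used elsewhere in this literature), you must replace the Markov inequality in Step 3 by derivative bounds inherited from the coefficient bound $|a_m|\lesssim 2^{3L}$, or abandon the identity in favour of the standard-deviation triangle inequality; otherwise the $\Delta$-scaling of the final bound comes out wrong by several powers of $\Delta$.
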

The proofs of \cref{lem:poly-bias,lem:poly-var} can be found in \cref{sec:proofs-upper-bounds}.

\noindent{\bfseries Analysis of the plugin estimator.}
The following lemmas provide the upper bounds for the bias and the variance of the plugin estimator.
\begin{lemma}\label{lem:plugin-bias}
 Assume \cref{asm:bound} and $\frac{1}{n} \lesssim \Delta < p \le 1$. Let $\tilde{N} \sim \Poi(np)$. Then, we have
 \begin{align}
  \Bias\bracket*{\bar\phi_\Delta\paren*{\frac{\tilde{N}}{n}} - \frac{\tilde{N}}{2n^2}\bar\phi_\Delta^{(2)}\paren*{\frac{\tilde{N}}{n}} - \phi(p) } \lesssim \frac{1}{n^2\Delta^{2-\alpha}} + \frac{p}{n^2}.
 \end{align}
\end{lemma}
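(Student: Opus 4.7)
The plan is to Taylor expand both components of the estimator around $p$ and exploit the exact cancellation of the leading $O(1/n)$ bias that Miller's correction is designed to produce. Put $Y = \tilde N/n - p$; since $\tilde N \sim \Poi(np)$, one has $\Mean[Y]=0$, $\Mean[Y^2]=p/n$, $\Mean[Y^3]=p/n^2$, and $\Mean[Y^4]=3p^2/n^2 + p/n^3$. Expand $\bar\phi_\Delta(p+Y)$ to third order with integral remainder $R_1$ in $\bar\phi_\Delta^{(4)}$, and $\bar\phi_\Delta^{(2)}(p+Y)$ to first order with remainder $R_2$ in $\bar\phi_\Delta^{(4)}$; multiply the second by $\tilde N/(2n^2) = (p+Y)/(2n)$ and subtract.

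Taking expectations, every odd-in-$Y$ deterministic term vanishes, and because $p > \Delta$ implies $\bar\phi_\Delta^{(j)}(p) = \phi^{(j)}(p)$, the $O(1/n)$ plugin bias $p\phi^{(2)}(p)/(2n)$ is cancelled exactly by the leading piece of the correction. The surviving deterministic residual is $-p\phi^{(3)}(p)/(3n^2)$. By \cref{asm:bound} integrated once (\cref{lem:asm2lower}), $|\phi^{(3)}(p)| \lesssim p^{\alpha-3} + 1$, so this residual is bounded by $p^{\alpha-2}/n^2 + p/n^2$, and monotonicity combined with $p > \Delta$ upgrades $p^{\alpha-2}/n^2$ to $1/(n^2\Delta^{2-\alpha})$, giving both target summands.

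The stochastic remainders involve $\bar\phi_\Delta^{(4)}(\xi)$ at a random $\xi$ between $p$ and $\tilde N/n$. I first obtain the global pointwise estimate $|\bar\phi_\Delta^{(4)}(x)| \lesssim (x \lor \Delta)^{\alpha-4}$ on $[0,2]$ and $0$ elsewhere: this follows from \cref{asm:bound} on $[\Delta,1]$, from direct differentiation of the Hermite formula on $[\Delta/2,\Delta]\cup[1,2]$ (where derivatives are controlled by the boundary values $\phi^{(j)}(\Delta)$ and $\phi^{(j)}(1)$), and from the construction on the complement. Then split each remainder's expectation by the good event $\mathcal{G}=\{|Y| \le p/2\}$. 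On $\mathcal{G}$, $\xi \ge p/2 \ge \Delta/2$ forces $|\bar\phi_\Delta^{(4)}(\xi)| \lesssim p^{\alpha-4}$; combining with $\Mean[Y^4] \asymp p^2/n^2$ yields a contribution of order $p^{\alpha-2}/n^2$. The further cross-terms produced by the product $(p+Y)\bar\phi_\Delta^{(2)}(p+Y)$ are bounded analogously via $\Mean[Y^2] = p/n$ and $\Mean[|Y|^3] \lesssim (p/n)^{3/2}$, each fitting inside the target once $np \gtrsim n\Delta \gtrsim 1$.

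The principal obstacle is the rare event $\mathcal{G}^c$, where $\xi$ can be close to $0$ and only the coarse bound $|\bar\phi_\Delta^{(4)}(\xi)| \lesssim \Delta^{\alpha-4}$ is available, forcing the extra factor $\Delta^{-2}$ to be absorbed by Poisson tail decay. I plan to use the Chernoff bound $\Pr(\mathcal{G}^c) \le 2e^{-np/8}$ together with the Poisson estimate $\Mean[Y^8] \lesssim p^4/n^4$: Cauchy--Schwarz then yields $\Mean[Y^4\,\ind{\mathcal{G}^c}] \lesssim (p^2/n^2)\,e^{-np/16}$, making the rare-event contribution $\lesssim (\Delta^{\alpha-4} p^2/n^2)\,e^{-np/16}$. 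In the regime of \cref{thm:upper-bound}, $\Delta \asymp \ln n/n$ with the constant chosen large enough makes $np \gtrsim \ln n$ and hence $e^{-np/16}$ smaller than any fixed negative power of $n$, so the rare-event contribution is subsumed into $1/(n^2\Delta^{2-\alpha}) + p/n^2$. The bookkeeping of the several cross-terms that arise from the product expansion is the most error-prone piece of the argument.
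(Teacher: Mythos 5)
Your skeleton is the same as the paper's --- third-order Taylor expansion of $\bar\phi_\Delta(\tilde N/n)$, first-order expansion of the correction term, exact cancellation of the $p\phi^{(2)}(p)/2n$ bias, a surviving $-p\phi^{(3)}(p)/3n^2$ term controlled via \cref{lem:asm2lower}, and everything else pushed into remainders involving $\bar\phi_\Delta^{(4)}$ --- but you handle the remainders by a genuinely different device, and that is the crux of the lemma. The paper writes each remainder in a weighted (Cauchy-type) mean-value form with an auxiliary function, e.g.\ $G(x)=x^{-2}(\hat p-x)^4$, which yields $R_3 = -\frac{\xi^3\bar\phi_\Delta^{(4)}(\xi)}{12p^2(\xi+\hat p)}(\hat p-p)^4$; the extra powers of $\xi$ tame the singularity so that $\sup_{\xi}\xi^2|\bar\phi_\Delta^{(4)}(\xi)|\lesssim\Delta^{\alpha-2}$ and $\sup_{\xi}\xi^3|\bar\phi_\Delta^{(4)}(\xi)|\lesssim\Delta^{\alpha-1}$ hold \emph{globally}, and the bound follows from the fourth Poisson central moment alone with no event splitting. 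You instead use the unweighted bound $|\bar\phi_\Delta^{(4)}(\xi)|\lesssim(\xi\lor\Delta)^{\alpha-4}$ together with a good/bad event split, Chernoff, and Cauchy--Schwarz. Both routes are valid; the paper's is shorter once the identity for $R_3$ is set up, while yours is more standard and robust at the price of the tail bookkeeping you flag. One improvement to your write-up: you do not need to retreat to the regime $\Delta\asymp\ln n/n$ of \cref{thm:upper-bound} to absorb the rare-event term. Since $\sup_{x>0}x^2e^{-nx/16}\asymp 1/n^2$ and $\Delta\gtrsim 1/n$, you get $p^2e^{-np/16}\lesssim 1/n^2\lesssim\Delta^2$ unconditionally, so the rare-event contribution $\Delta^{\alpha-4}(p^2/n^2)e^{-np/16}\lesssim\Delta^{\alpha-2}/n^2$ already under the lemma's stated hypothesis $\frac1n\lesssim\Delta<p$; stating it that way keeps the lemma at full strength.
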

\begin{lemma}\label{lem:plugin-var}
 Assume \cref{asm:bound} and $\frac{1}{n} \lesssim \Delta < p \le 1$. Let $\tilde{N} \sim \Poi(np)$. Then, we have
 \begin{align}
   \Var\bracket*{\bar\phi_\Delta\paren*{\frac{\tilde{N}}{n}} - \frac{\tilde{N}}{2n^2}\bar\phi_\Delta^{(2)}\paren*{\frac{\tilde{N}}{n}} - \phi(p) + \frac{p\phi^{(2)}(p)}{2n}} \lesssim \frac{p^{2\alpha-1}}{n} + \frac{1}{n^4\Delta^{4-2\alpha}} + \frac{p}{n}.
 \end{align}
\end{lemma}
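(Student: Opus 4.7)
The plan is to bound the variance via Taylor expansion of the estimator around the true probability $p$. Since the variance is translation-invariant, subtracting $\phi(p) - \tfrac{p\phi^{(2)}(p)}{2n}$ inside is cosmetic but telling: Miller's correction is designed precisely so that, after expansion, the leading $O(p/n)$ contribution to the bias cancels, leaving only a much smaller residue. I will split the analysis over a typical event $\event = \cbrace{\tilde{N}/n \in [\Delta,1]}$, on which $\bar\phi_\Delta(\tilde{N}/n) = \phi(\tilde{N}/n)$ and $\bar\phi_\Delta^{(2)}(\tilde{N}/n) = \phi^{(2)}(\tilde{N}/n)$, and its complement, handled by Poisson concentration. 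Throughout I write $Y = \tilde{N}/n - p$ and use the standard Poisson moment identities, in particular $\Mean\bracket*{Y^2}=p/n$ and $\Mean\bracket*{Y^{2k}} \lesssim (p/n)^k$ for $k \ge 2$, together with the derivative bounds $\abs*{\phi^{(m)}(p)} \lesssim p^{\alpha-m} + 1$ for $m \in \cbrace{1,2,3,4}$ that follow from \cref{asm:bound} via \cref{lem:asm2lower}.

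On the typical event, I expand $\phi(\tilde{N}/n)$ to fourth order and $\phi^{(2)}(\tilde{N}/n)$ to second order with Lagrange remainders, multiply the latter by $\tilde{N}/(2n^2) = (p+Y)/(2n)$, and subtract the constant $\phi(p) - \tfrac{p\phi^{(2)}(p)}{2n}$. The resulting random expression decomposes into groups of terms according to the power of $Y$ to which they are proportional. The linear-in-$Y$ piece is $Y\bracket*{\phi^{(1)}(p) - \phi^{(2)}(p)/(2n) - p\phi^{(3)}(p)/(2n)}$, whose second moment yields $\phi^{(1)}(p)^2 \cdot p/n$ plus lower-order corrections; invoking $\abs*{\phi^{(1)}(p)} \lesssim p^{\alpha-1} + 1$ splits this into $p^{2\alpha-1}/n + p/n$, which accounts for the first and third summands of the lemma. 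The quadratic and cubic contributions have second moments of order $\phi^{(2)}(p)^2 p^2/n^2 \lesssim p^{2\alpha-2}/n^2$ and $\phi^{(3)}(p)^2 p^3/n^3 \lesssim p^{2\alpha-3}/n^3$ respectively, both dominated by $p^{2\alpha-1}/n$ whenever $np \ge n\Delta \gtrsim 1$. The fourth-order Lagrange remainder $Y^4\phi^{(4)}(\xi_1)/24$, together with the $(p+Y)Y^2\phi^{(4)}(\xi_2)/(4n)$ piece coming from expanding $\phi^{(2)}(\tilde{N}/n)$, produces the leading remaining contribution, of order $\phi^{(4)}(p)^2 \cdot \Mean\bracket*{Y^8} \lesssim p^{2\alpha-8} \cdot p^4/n^4 = p^{2\alpha-4}/n^4$; since $p \ge \Delta$ and $2\alpha - 4 < 0$, this is bounded by $\Delta^{2\alpha-4}/n^4 = 1/(n^4 \Delta^{4-2\alpha})$, giving the second stated summand.

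Outside the typical event, Poisson concentration gives $\p\cbrace{\event^c} \le 2e^{-c' np}$. The estimator is uniformly bounded: $\bar\phi_\Delta$ is bounded on $[0,\infty)$ by its construction via Hermite interpolation, and the product $(\tilde{N}/n)\,\bar\phi_\Delta^{(2)}(\tilde{N}/n)$ is controlled because $\bar\phi_\Delta^{(2)}$ vanishes for $\tilde{N}/n \ge 2$ while on $[\Delta/2, \Delta]$ the Hermite interpolant can be bounded in terms of $\abs*{\phi^{(m)}(\Delta)} \lesssim \Delta^{\alpha-m}$, giving a product at most $\Delta^{\alpha-1}$. The resulting tail contribution decays exponentially in $n\Delta$ and is readily absorbed given $n\Delta \gtrsim 1$.

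The main obstacle will be the bookkeeping of the many cross terms in the Taylor expansion of $g(\tilde{N}) = \bar\phi_\Delta(\tilde{N}/n) - \tfrac{\tilde{N}}{2n^2}\bar\phi_\Delta^{(2)}(\tilde{N}/n)$: every product of a power of $Y$ with a derivative of $\phi$ has to be checked to land within one of the three target scales $p^{2\alpha-1}/n$, $\Delta^{2\alpha-4}/n^4$, or $p/n$. A secondary subtlety is the boundary regime in which $p$ is only marginally larger than $\Delta$; there the event $\event$ has nontrivial lower-tail mass, and one either refines the splitting with an additional Poisson tail estimate or exploits $p \asymp \Delta$ to convert $p$-scaled bounds into $\Delta$-scaled ones, checking that the final expression still matches the claimed variance bound.
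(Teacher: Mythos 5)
Your overall architecture differs from the paper's: you split on the event $\event=\cbrace{\tilde{N}/n\in[\Delta,1]}$ and Taylor-expand $\phi$ only on $\event$, whereas the paper never conditions on an event --- it Taylor-expands the smoothed function $\bar\phi_\Delta$ globally around $p$ (whose derivatives at $p$ agree with $\phi$'s since $p>\Delta$) and controls the remainders through the weighted suprema $\sup_{\xi\in\RealSet_+}\xi^2\abs*{\bar\phi_\Delta^{(4)}(\xi)}\lesssim\Delta^{\alpha-2}$ and $\sup_{\xi\in\RealSet_+}\xi^3\abs*{\bar\phi_\Delta^{(4)}(\xi)}\lesssim\Delta^{\alpha-1}$, obtained from a Cauchy-type mean-value form of the remainder with $G(x)=x^{-2}(\hat p-x)^4$. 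This global treatment is not a stylistic choice: it is what makes the atypical region harmless. Your handling of $\event^c$ is where the argument breaks. The claim $\p\cbrace{\event^c}\le 2e^{-c'np}$ is false when $p$ is only marginally above $\Delta$: for $p=(1+o(1))\Delta$ one has $\p\cbrace{\tilde{N}/n<\Delta}\approx 1/2$. Pairing a constant-order probability with the uniform bound $\abs*{\bar\phi_\Delta(\hat p)-\phi(p)}\lesssim 1$ yields a contribution of order $1$, while the target in that regime (say $p\asymp\Delta\asymp 1/n$) is
\begin{align}
\frac{p^{2\alpha-1}}{n}+\frac{1}{n^4\Delta^{4-2\alpha}}+\frac{p}{n}\asymp n^{-2\alpha}\to 0 .
\end{align}
Your flagged "secondary subtlety" is therefore not secondary, and neither of the proposed fixes resolves it: converting $p$-scales to $\Delta$-scales does not shrink a constant, and no Poisson tail estimate makes $\event^c$ rare there. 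The only repair is to exploit that $\bar\phi_\Delta$ remains $C^4$ across the threshold and keep Taylor-expanding on $\event^c$ --- i.e., to abandon the event splitting and do what the paper does.

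A second, related gap is your treatment of the Lagrange remainder: you replace $\phi^{(4)}(\xi_1)$ by $\phi^{(4)}(p)$ before taking expectations. Since $\xi_1$ is random and lies between $p$ and $\hat p$, on the lower tail it can be of order $\Delta\ll p$, where $\abs*{\phi^{(4)}(\xi_1)}\asymp\Delta^{\alpha-4}\gg p^{\alpha-4}$; one must bound $\Mean\bracket*{\phi^{(4)}(\xi_1)^2Y^8}$, not $\phi^{(4)}(p)^2\Mean\bracket*{Y^8}$. Under the lemma's sole hypothesis $n\Delta\gtrsim 1$ (not $n\Delta\gtrsim\ln n$), a dyadic split on $\cbrace{\hat p<p/2}$ does not obviously recover $\Delta^{2\alpha-4}/n^4$ uniformly in $p$. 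The paper sidesteps this by writing the remainder as $-\frac{\xi^3\bar\phi_\Delta^{(4)}(\xi)}{12p^2(\xi+\hat p)}(\hat p-p)^4$, so that only the deterministic quantity $\sup_\xi\xi^2\abs*{\bar\phi_\Delta^{(4)}(\xi)}$ enters and the eighth central moment of the Poisson variable can be taken exactly. Your identification of the Miller-correction cancellation, the three target scales, and the moment bookkeeping for the $Y$, $Y^2$, $Y^3$ terms matches the paper, but the proof as proposed does not close.
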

The proofs of \cref{lem:plugin-bias,lem:plugin-var} are left to \cref{sec:proofs-upper-bounds}.

\noindent{\bfseries Proof for the Upper Bound.}
Combining \cref{lem:upper-ind-bias,lem:upper-ind-var,lem:poly-bias,lem:poly-var,lem:plugin-bias,lem:plugin-var}, we prove \cref{thm:upper-bound}.
\begin{proof}[Proof of \cref{thm:upper-bound}]
 Set $L = \floor{C_1 \ln n}$ and $\Delta_{n,k} = C_2 \ln n$ where $C_1$ and $C_2$ are some positive constants. Substituting \cref{lem:poly-bias,lem:poly-var,lem:plugin-bias,lem:plugin-var} into \cref{lem:upper-ind-bias,lem:upper-ind-var} yields
 \begin{align}
   & \Bias\bracket*{\hat\theta\paren{\tilde{N}} - \theta(P)} \\
   \lesssim& \begin{multlined}[t][\textwidth-1.5em]
    \sum_{i = 1}^k \paren[\Bigg]{\frac{1}{n^{C_2(\ln 4 - 1)}} + \frac{1}{n^{\alpha}(\ln n)^{2-\alpha}} + \frac{p_i}{n^2} + \frac{(\ln n)^2 n^{3C_1\ln 2 + 2\sqrt{C_1C_2}(\ln 2 + 1)}}{n^2} \\ + \frac{1}{(n\ln n)^{\alpha}} + \frac{1}{n^{C_2/8}} }
   \end{multlined} \\
   \le& \begin{multlined}[t][\textwidth-1.5em]
    \frac{k}{n^{C_2(\ln 4 - 1)}} + \frac{k}{n^{\alpha}(\ln n)^{2-\alpha}} + \frac{1}{n^2} + \frac{k(\ln n)^2 n^{3C_1\ln 2 + 2\sqrt{C_1C_2}(\ln 2 + 1)}}{n^2} \\ + \frac{k}{(n\ln n)^{\alpha}} + \frac{k}{n^{C_2/8}},
   \end{multlined}
 \end{align}
 and
 \begin{align}
   & \Var\bracket*{\hat\theta\paren{\tilde{N}} - \theta(P)} \\
   \lesssim&\begin{multlined}[t]
    \sum_{i = 1}^k \paren[\Bigg]{
      \frac{1}{n^{C_2(\ln 4 - 1)}} + \ind{p_i \ge C_2\ln n/n}\frac{p_i^{2\alpha-1}}{n} + \frac{1}{n^{2\alpha}(\ln n)^{4-2\alpha}} + \frac{p_i}{n} + \frac{(\ln n)^4 n^{6C_1\ln 2 + 4\sqrt{C_1C_2}(\ln 2 + 1)}}{n^4} + \\ \frac{1}{n^{C_2/8}} +
      \paren*{\frac{1}{n^{\alpha}(\ln n)^{2-\alpha}} + \frac{p_i}{n^2} + \frac{(\ln n)^2 n^{3C_1\ln 2 + 2\sqrt{C_1C_2}(\ln 2 + 1)}}{n^2} + \frac{1}{(n\ln n)^{\alpha}}}^2
    }
   \end{multlined} \\
   \lesssim&\begin{multlined}[t]
     \frac{k}{n^{C_2(\ln 4 - 1)}} + \frac{k^{2-2\alpha}}{n}\lor\frac{k}{n^{2\alpha}\ln^{1-2\alpha}n} + \frac{k}{n^{2\alpha}(\ln n)^{4-2\alpha}} + \frac{1}{n} \\ + \frac{k(\ln n)^4 n^{6C_1\ln 2 + 4\sqrt{C_1C_2}(\ln 2 + 1)}}{n^4} + \frac{k}{n^{C_2/8}} + \frac{k}{n^{2\alpha}(\ln n)^{4-2\alpha}} + \frac{1}{n^4} \\ + \frac{k(\ln n)^4 n^{6C_1\ln 2 + 4\sqrt{C_1C_2}(\ln 2 + 1)}}{n^4} + \frac{k}{(n\ln n)^{2\alpha}},
   \end{multlined}
 \end{align}
 where we use \cref{lem:bound-sum-alpha,lem:bound-sum-alpha-minus}. For $\delta > 0$, as long as $C_2(\ln 4 - 1) \ge 2\alpha + \delta$, $6C_1\ln 2 + 4\sqrt{C_1C_2}(\ln 2 + 1) \le 3 - 2\alpha - \delta$, and $C_2/8 \ge 2\alpha + \delta$, we have
 \begin{align}
   \Bias\bracket*{\hat\theta\paren{\tilde{N}} - \theta(P)}^2 \lesssim& \frac{1}{n^4} + \frac{k^2}{n^{2\alpha + \delta}} + \frac{k^2}{(n\ln n)^{2\alpha}} \lesssim \frac{1}{n^4} + \frac{k^2}{(n\ln n)^{2\alpha}} \label{eq:last-bias}\\
   \Var\bracket*{\hat\theta\paren{\tilde{N}} - \theta(P)} \lesssim& \frac{k^{2-2\alpha}}{n}\lor\frac{k}{n^{2\alpha}\ln^{1-2\alpha}n} + \frac{k}{n^{2\alpha + \delta}} + \frac{k}{(n\ln n)^{2\alpha}} \\ \lesssim& \frac{k^{2-2\alpha}}{n}\lor\frac{k}{n^{2\alpha}\ln^{1-2\alpha}n} + \frac{k}{(n\ln n)^{2\alpha}} \label{eq:last-var}
 \end{align}
 There exist the constants $C_1$ and $C_2$ that satisfies these conditions, for example, $C_1 < 1/6\ln 2$ and $C_2 > 16$. Since $\hat\theta(\tilde{N}),\theta(P) \in [\theta_{\rm inf}, \theta_{\rm sup}]$, the bias-variance decomposition gives
 \begin{align}
  \sup_{P \in \dom{M}_k} \Mean\bracket*{\paren*{\hat\theta\paren{\tilde{N}} - \theta(P)}^2} \le& \sup_{P \in \dom{M}_k} \Mean\bracket*{\paren*{\tilde\theta\paren{\tilde{N}} - \theta(P)}^2}\\
  \le& \paren*{\Bias\bracket*{\tilde\theta\paren{\tilde{N}} - \theta(P)}}^2 + \Var\bracket*{\tilde\theta\paren{\tilde{N}} - \theta(P)}. \label{eq:dec-bias-var}
 \end{align}
 Substituting \cref{eq:last-bias,eq:last-var} into \cref{eq:dec-bias-var} yields
 \begin{align}
  \sup_{P \in \dom{M}_k} \Mean\bracket*{\paren*{\hat\theta\paren{\tilde{N}} - \theta(P)}^2} \lesssim \frac{k^{2-2\alpha}}{n}\lor\frac{k}{n^{2\alpha}\ln^{1-2\alpha}n} + \frac{k^2}{(n\ln n)^{2\alpha}}.
 \end{align}
 If $\alpha \in (0,1/2]$ and $k \gtrsim \ln^{\frac{4}{3}}$, the last term is dominated. If $\alpha \in (1/2,1)$, the term $\frac{k}{n^{2\alpha}\ln^{1-2\alpha}n}$ is dominated by $\frac{k^{2-2\alpha}}{n}$.
\end{proof}

\section*{Acknowledgment}
This work is supported by JST CREST and KAKENHI No. 16H02864.

\bibliographystyle{unsrtnat}
\bibliography{./references}

\appendix

\section{Error Rate of Best Polynomial Approximation}\label{sec:err-best-poly}
Here, we analyze the upper bound and the lower bound of the best polynomial approximation error $E_L\paren*{\phi, [0,\Delta]}$. The upper bound and the lower bound are derived as follows.
\begin{lemma}\label{lem:best-poly-approx}
  Under \cref{asm:bound2}, for $\Delta \in (0,1]$, we have
  \begin{align}
    E_L\paren*{\phi, [0,\Delta]} \lesssim \paren*{\frac{\Delta}{L^2}}^{\alpha}.
  \end{align}
\end{lemma}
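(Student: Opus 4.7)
The plan is to reduce the problem on $[0,\Delta]$ to the unit interval by rescaling and then apply a Jackson-type direct theorem from weighted polynomial approximation. Set $\tilde\phi(x) = \phi(\Delta x)$ for $x \in [0,1]$. Since $p = \Delta x$ is a bijection between degree-$L$ polynomials on $[0,\Delta]$ and those on $[0,1]$, one has $E_L(\phi, [0,\Delta]) = E_L(\tilde\phi, [0,1])$, so it suffices to prove $E_L(\tilde\phi, [0,1]) \lesssim \Delta^\alpha L^{-2\alpha}$.

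First I would translate \cref{asm:bound2} into a pointwise bound on $\tilde\phi^{(2)}$. By the chain rule together with \cref{def:div-speed},
\begin{align}
|\tilde\phi^{(2)}(x)| = \Delta^2|\phi^{(2)}(\Delta x)| \le \Delta^2\bigl((2-\alpha)W(\Delta x)^{\alpha-2} + c_2\bigr) \lesssim \Delta^\alpha x^{\alpha-2}
\end{align}
for $x \in (0,1]$, where the final step absorbs $\Delta^2 c_2$ into the singular term using $\Delta \le 1$, $\alpha \in (0,1)$, and $x^{\alpha-2} \ge 1$. Integrating this estimate twice (and using that $\tilde\phi$ admits a continuous extension to $0$ since $\alpha > 0$) also yields the auxiliary bound $|\tilde\phi(x) - \tilde\phi(0)| \lesssim \Delta^\alpha x^\alpha$ on $[0,1]$.

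Next, I would invoke the Ditzian--Totik direct theorem~\citep{ditzian2012moduli}: for every $f \in C[0,1]$,
\begin{align}
E_L(f, [0,1]) \lesssim \omega_\varphi^2(f, 1/L), \qquad \varphi(x) = \sqrt{x(1-x)},
\end{align}
where $\omega_\varphi^2$ is the second-order Ditzian--Totik modulus of smoothness. It then suffices to establish $\omega_\varphi^2(\tilde\phi, h) \lesssim \Delta^\alpha h^{2\alpha}$ for all $h \in (0,1/2]$ and then take $h = 1/L$.

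The main obstacle is this modulus estimate, because $\tilde\phi^{(2)}$ is singular at the origin; I would resolve it by splitting according to whether $x$ is large or small compared with $h^2$. For $x \ge 4h^2$, the interval $[x-h\varphi(x),\, x+h\varphi(x)]$ is contained in $[x/2,\, 3x/2]$, so a second-order Taylor expansion together with $\varphi(x)^2 \le x$ gives
\begin{align}
\bigl|\Delta_{h\varphi(x)}^2 \tilde\phi(x)\bigr| \le (h\varphi(x))^2 \sup_{\xi \in [x/2,\, 3x/2]} |\tilde\phi^{(2)}(\xi)| \lesssim \Delta^\alpha h^2 x^{\alpha-1},
\end{align}
and since $\alpha - 1 < 0$ and $x \gtrsim h^2$, the factor $x^{\alpha-1} \lesssim h^{2\alpha-2}$ reduces this to $\Delta^\alpha h^{2\alpha}$. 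For $x < 4h^2$ (within the range where the second difference is defined), the three arguments $x$ and $x \pm h\varphi(x)$ all lie in $[0, O(h^2)]$, so the auxiliary bound makes each of the three values of $\tilde\phi$ differ from $\tilde\phi(0)$ by $O(\Delta^\alpha h^{2\alpha})$, and the triangle inequality applied to the second difference yields the same $\Delta^\alpha h^{2\alpha}$ bound. Combining the two regions gives $\omega_\varphi^2(\tilde\phi, 1/L) \lesssim \Delta^\alpha L^{-2\alpha}$, and plugging into the direct theorem completes the proof.
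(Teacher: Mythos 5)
Your proof is correct, but it follows a genuinely different route from the paper's. The paper maps $[0,\Delta]$ to $[-1,1]$ via the \emph{quadratic} substitution $x\mapsto \Delta x^2$ (which converts the $\alpha$-H\"older singularity at $0$ into uniform $2\alpha$-H\"older behavior) and then applies classical Jackson inequalities for trigonometric approximation; because a first-order modulus of continuity cannot exceed exponent $1$, the paper must split into the cases $\alpha\in(0,1/2]$ (bounding $\omega(\phi_\Delta,1/L)$) and $\alpha\in(1/2,1)$ (bounding $\tfrac1L\omega(\phi_\Delta^{(1)},1/L)$ via the second derivative). You instead rescale linearly and invoke the Ditzian--Totik direct theorem $E_L(f,[0,1])\lesssim\omega^2_\varphi(f,1/L)$, estimating the second-order weighted modulus by the standard dichotomy $x\gtrsim h^2$ (Taylor with $\varphi(x)^2\le x$ and the singular bound $|\tilde\phi^{(2)}(\xi)|\lesssim\Delta^\alpha\xi^{\alpha-2}$) versus $x\lesssim h^2$ (all three evaluation points within $O(h^2)$ of the origin, so the H\"older bound at $0$ applies). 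The weight $\varphi$ plays exactly the role of the paper's quadratic substitution, and using a \emph{second}-order modulus removes the case split, handling all $\alpha\in(0,1)$ uniformly; your argument also dovetails naturally with the paper's lower bound (\cref{lem:lower-best-approx}), which is already phrased in terms of $\omega^2_\varphi$, so the two bounds become a matched pair for the same modulus. Two small points to tighten: the auxiliary estimate $|\tilde\phi(x)-\tilde\phi(0)|\lesssim\Delta^\alpha x^\alpha$ is cleanest obtained not by "integrating twice'' from $\tilde\phi^{(2)}$ alone (which requires a boundary value for $\tilde\phi^{(1)}$) but directly from the first-derivative bound $|\phi^{(1)}(p)|\lesssim p^{\alpha-1}$ supplied by \cref{lem:asm2lower}, i.e.\ exactly the paper's \cref{lem:est-bounded}; and the direct theorem with $r=2$ requires $L\ge 2$, the remaining degrees being trivial since $E_L\le E_0\lesssim\Delta^\alpha\lesssim(\Delta/L^2)^\alpha$ for $L=1$. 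Neither affects the validity of the argument.
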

\begin{lemma}\label{lem:lower-best-approx}
  Under \cref{asm:bound2}, for $\Delta \in (0,1]$ there is a positive constant $c$ such that
  \begin{align}
    \liminf_{L \to \infty} \paren*{\frac{L^2}{\Delta}}^{\alpha} E_L(\phi, [0,\Delta]) > c.
  \end{align}
\end{lemma}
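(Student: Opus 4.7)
My plan is to rescale the problem to the unit interval and then reduce to the classical Bernstein--Achieser lower bound for $z^\alpha$. Define $\psi_\Delta(y) := \Delta^{-\alpha}\phi(\Delta y)$ on $[0,1]$. Since $y\mapsto\Delta y$ is an affine bijection preserving polynomial degree, $E_L(\phi,[0,\Delta]) = \Delta^{\alpha} E_L(\psi_\Delta,[0,1])$. A direct calculation from \cref{asm:bound2} gives
\begin{align}
(2-\alpha)W y^{\alpha-2} + c'_2\Delta^{2-\alpha} \le |\psi_\Delta^{(2)}(y)| \le (2-\alpha)W y^{\alpha-2} + c_2\Delta^{2-\alpha},
\end{align}
so it suffices to prove $E_L(\psi_\Delta,[0,1])\gtrsim L^{-2\alpha}$ uniformly in $\Delta\in(0,1]$.

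\textbf{Constant-sign region and decomposition.} The lower bound on $|\psi_\Delta^{(2)}|$ is strictly positive for $y$ below some threshold $y_0>0$ that can be chosen independently of $\Delta\le 1$ (since $c'_2\Delta^{2-\alpha}$ is bounded and the singular term $y^{\alpha-2}$ dominates near $0$). By continuity, $\psi_\Delta^{(2)}$ has constant sign on $(0,y_0)$; replacing $\psi_\Delta$ by $-\psi_\Delta$ if necessary (which preserves $E_L$), I may assume $\psi_\Delta^{(2)}\ge 0$ on $(0,y_0)$. Using $E_L(\psi_\Delta,[0,1])\ge E_L(\psi_\Delta,[0,y_0])$ and rescaling $[0,y_0]$ to $[0,1]$ by $z=y/y_0$, I obtain a function $\tilde\psi_\Delta$ on $[0,1]$ with
\begin{align}
\tilde\psi_\Delta^{(2)}(z) = (2-\alpha)Wy_0^{\alpha}\, z^{\alpha-2} + r_\Delta(z),
\end{align}
where $\|r_\Delta\|_\infty$ is bounded by a constant independent of $\Delta$. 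Integrating twice from $z=1$ gives $\tilde\psi_\Delta(z) = a_\Delta + b_\Delta z + c_\star z^{\alpha} + R_\Delta(z)$, with $c_\star = \tfrac{(2-\alpha)Wy_0^\alpha}{\alpha(\alpha-1)}$ a fixed nonzero constant and $R_\Delta$ the double primitive of $r_\Delta$, so $\|R_\Delta''\|_\infty$ is uniformly bounded in $\Delta$.

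\textbf{Conclusion via Bernstein.} Jackson's theorem then yields $E_L(R_\Delta,[0,1])\lesssim L^{-2}$ uniformly in $\Delta$, while the linear part $a_\Delta+b_\Delta z$ is absorbed by any degree-$L$ approximant. Combining $E_L(f+g)\ge E_L(f)-E_L(g)$ with the classical Bernstein--Achieser lower bound $E_L(z^\alpha,[0,1])\gtrsim L^{-2\alpha}$ for $\alpha\in(0,1)$ (see e.g.\ \citep{achieser2013theory,petrushev2011rational,timan1965theory}) produces
\begin{align}
E_L(\tilde\psi_\Delta,[0,1]) \ge |c_\star|\, E_L(z^\alpha,[0,1]) - E_L(R_\Delta,[0,1]) \ge c_1 L^{-2\alpha} - c_2 L^{-2} \ge c_3 L^{-2\alpha}
\end{align}
for all sufficiently large $L$, with constants uniform in $\Delta\in(0,1]$. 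Unwinding the two scalings delivers $\liminf_{L\to\infty}(L^2/\Delta)^{\alpha} E_L(\phi,[0,\Delta])\ge c_3 > 0$.

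\textbf{Main obstacle.} The delicate point is the uniformity of all constants in $\Delta\in(0,1]$: I need $y_0$, $c_\star$, $\|r_\Delta\|_\infty$, and $\|R_\Delta''\|_\infty$ to be controlled independently of $\Delta$. This follows because $c_2\Delta^{2-\alpha}$ and $c'_2\Delta^{2-\alpha}$ are uniformly bounded for $\Delta\le 1$, but the bookkeeping must be done carefully so that the $c_\star$-term does not degenerate. The remaining ingredient---the classical lower bound $E_L(z^\alpha,[0,1])\gtrsim L^{-2\alpha}$ whose proof proceeds by testing against a Chebyshev-node divided-difference functional---is invoked as a black box from \citep{achieser2013theory,timan1965theory}.
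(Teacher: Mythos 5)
Your argument is correct, and it takes a genuinely different route from the paper. The paper proves the lower bound via the second-order Ditzian--Totik modulus of smoothness $\omega^2_\varphi$: it rescales to $[-1,1]$, derives $\omega^2_\varphi(\phi_\Delta,t)\gtrsim\Delta^\alpha t^{2\alpha}$ by a direct Taylor computation near the endpoint $-1$, then invokes the converse (Stechkin-type) inequality $\frac{1}{L^2}\sum_{m=1}^L(m+1)E_m\gtrsim\omega^2_\varphi(\cdot,L^{-1})$ and carefully bounds the partial sum over $m\in\{L+1,\dots,c_\ell L\}$ to isolate $E_L$. You instead rescale so that the divergence-speed condition reveals $\phi$ as an affine function plus $c_\star z^\alpha$ plus a remainder with uniformly bounded second derivative, and then combine the classical Bernstein--Achieser lower bound for $E_L(z^\alpha,[0,1])$ with a Jackson upper bound for the smooth remainder via $E_L(f+g)\ge E_L(f)-E_L(g)$. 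Your route is more modular: it reduces the abstract hypothesis on the divergence speed of $\phi^{(2)}$ to the single prototypical case $z^\alpha$, for which the lower bound is a textbook black box, whereas the paper re-derives the rate from scratch through the modulus. What the paper's route buys is that it never leaves the Ditzian--Totik framework and therefore aligns naturally with its upper-bound companion (Lemma~\ref{lem:best-poly-approx}); yours trades that symmetry for conceptual simplicity.

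Two small cautions worth noting. First, the coefficient in your bound on $|\psi_\Delta^{(2)}|$ should match the paper's own usage: Lemma~\ref{lem:asm2lower} uses $\alpha_1=1-\alpha$ (not $2-\alpha$) as the factor multiplying $Wp^{\alpha-2}$, so your $c_\star$ should be $\frac{(1-\alpha)Wy_0^\alpha}{\alpha(\alpha-1)}=-\frac{Wy_0^\alpha}{\alpha}$; this is still nonzero and bounded away from zero uniformly in $\Delta\le 1$, so nothing breaks. Second, you rely on $E_L(-f)=E_L(f)$ to reduce to the nonnegative-sign case and on $E_L(f+g)\ge E_L(f)-E_L(g)$; both are easy (the latter by comparing any degree-$L$ competitor for $f+g$ against $q-p_g^*$ where $p_g^*$ is the best approximant of $g$), but they should be stated explicitly since the whole argument hinges on them.
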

Combining \cref{lem:best-poly-approx,lem:lower-best-approx}, we can conclude $E_(\phi, [0,\Delta]) \asymp \paren*{\frac{\Delta}{L^2}}^{\alpha}$. The proofs of these lemmas are given as follows.
\begin{proof}[Proof of \cref{lem:best-poly-approx}]
  Letting $\phi_\Delta(p) = \phi(\Delta x^2)$, we have $E_L\paren*{\phi, [0,\Delta]} = E_L\paren*{\phi_\Delta, [-1,1]}$. We utilize the Jackson's inequality to upper bound the best polynomial approximation error $E_L$ by using the modulus of continuity defined as
  \begin{align}
   \omega(f, \delta) = \sup_{x, y \in [-1,1]}\cbrace{\abs*{f(x)-f(y)} : \abs*{x-y} \le \delta}.
  \end{align}
  To derive the upper bound of $E_L$, we divide into two cases: $\alpha \in (0, 1/2]$ and $\alpha \in (1/2,1)$.

 {\bfseries Case $\alpha \in (0, 1/2]$.}
 From the Jackson's inequality~\citep{achieser2013theory}, there is a trigonometric polynomial $T_L$ with degree-$L$ such that \begin{align}
  \sup_{x \in [0,2\pi]}\abs*{f(x) - T_L(x)} \lesssim \sup_{x, y \in [0,2\pi]}\cbrace*{\abs*{f(x)-f(y)} : \abs*{x-y} \le \frac{1}{L}}.
 \end{align}
 By the definition of $E_L$, we have
 \begin{align}
   E_L(f, [-1,1]) =& \inf_{g \in \dom{P}_L}\sup_{x \in [-1,1]}\abs*{f(x) - g(x)} \\
    =& \inf_{g \in \dom{P}_L}\sup_{x \in [0,2\pi]}\abs*{f(\cos(x)) - g(\cos(x))} \\
    \lesssim& \sup_{x, y \in [0,2\pi]}\cbrace*{\abs*{f(\cos(x))-f(\cos(y))} : \abs*{x-y} \le \frac{1}{L}} \\
    =& \sup_{x, y \in [-1,1]}\cbrace*{\abs*{f(x)-f(y)} : \abs*{\cos^{-1}(x)-\cos^{-1}(y)} \le \frac{1}{L}} \\
    \le& \sup_{x, y \in [-1,1]}\cbrace*{\abs*{f(x)-f(y)} : \abs*{x-y} \le \frac{1}{L}} = \omega\paren*{f,\frac{1}{L}}, \label{eq:jackson-0}
 \end{align}
 where we use the fact that $\abs*{\cos^{-1}(x)-\cos^{-1}(y)} \ge \abs*{x-y}$ for $x,y \in [-1,1]$ to derive the last line. From \cref{lem:asm2lower} and the fact that $p^{\alpha-1} \ge 1$ for $p \in (0,1]$, we have $\abs*{\phi^{(1)}(p)} \le (W+\abs*{c_1})p^{\alpha-1}$ for $p \in (0,1]$. From the absolute continuousness of $\phi$ on $(0,1]$, for $x, y \in (-1,1]$ where $x \le y$ we have
   \begin{align}
     \abs*{\phi_\Delta(x) - \phi_\Delta(y)}
      \le& \int_x^y \abs*{2\Delta t \phi^{(1)}\paren*{\Delta t^2}}dt \\
      \le& 2\Delta^\alpha(W+\abs*{c_1}) \int_x^y t^{2\alpha-1} dt \\
      =& \frac{\Delta^\alpha(W+\abs*{c_1})}{\alpha}\paren*{y^{2\alpha} - x^{2\alpha}} \\
      \le& \frac{\Delta^\alpha(W+\abs*{c_1})}{\alpha}(y-x)^{2\alpha},
  \end{align}
  where the last line is obtained since $x^\beta$ for $\beta \in (0,1]$ is $\beta$-Holder continuous. This is valid for the case $x = 0$ since $\abs*{\phi_\Delta(0) - \phi_\Delta(y)} = \lim_{x \to 0} \abs*{\phi_\Delta(x) - \phi_\Delta(y)}$. Thus, we have
  \begin{align}
   \omega(\phi_\Delta, \delta) \le \frac{\Delta^\alpha(W+\abs*{c_1})}{\alpha}\delta^{2\alpha}.
  \end{align}
 Substituting this into \cref{eq:jackson-0}, we have
 \begin{align}
  E_L(\phi_\Delta, [-1,1]) \lesssim \frac{\Delta^\alpha(W+\abs*{c_1})}{\alpha}\frac{1}{L^{2\alpha}} \lesssim \paren*{\frac{\Delta}{L^2}}^\alpha.
 \end{align}

 {\bfseries Case $\alpha \in (1/2, 1)$.}
 From the Jackson's inequality~\citep{achieser2013theory}, there is a trigonometric polynomial $T_L$ with degree-$L$ such that \begin{align}
  \sup_{x \in [0,2\pi]}\abs*{f(x) - T_L(x)} \lesssim \frac{1}{L}\sup_{x, y \in [0,2\pi]}\cbrace*{\abs*{f^{(1)}(x)-f^{(1)}(y)} : \abs*{x-y} \le \frac{1}{L}}.
 \end{align}
 In the similar manner of the case $\alpha \in (0,1/2]$, we have
 \begin{align}
  E_L(\phi_\Delta, [-1,1]) =& \inf_{g \in \dom{P}_L}\sup_{x \in [0,2\pi]}\abs*{\phi_\Delta(\cos(x)) - g(\cos(x))}\\
   \lesssim& \frac{1}{L}\omega\paren*{\phi_\Delta^{(1)}, \frac{1}{L}}. \label{eq:jackson-1}
 \end{align}
 Since $p^{\alpha-2} \ge 1$ for $p \in (0,1]$ and \cref{asm:bound2}, we have $\abs*{\phi^{(2)}(p)} \le (\alpha_1 W+\abs*{c_2})p^{\alpha-2}$ for $p \in (0,1]$. From the absolute continuousness of $\phi^{(1)}$ on $(0,1]$, for $x, y \in (-1,1]$ where $x \le y$ we have
   \begin{align}
     \abs*{\phi_\Delta^{(1)}(x) - \phi_\Delta^{(1)}(y)}
      \le& \int_x^y \abs*{2\Delta \phi^{(1)}\paren*{\Delta t^2} + 4\Delta^2 t^2 \phi^{(2)}\paren*{\Delta t^2}}dt \\
      \le& \int_x^y \paren*{ 2\Delta^{\alpha}(W + \abs*{c_1})t^{2\alpha-2} + 4\Delta^{\alpha}(\alpha_1 W + \abs*{c_2}) t^{2\alpha-2} } dt \\
      =& \Delta^\alpha\frac{2(W+\abs*{c_1}) + 4(\alpha_1 W + \abs*{c_2})}{2\alpha-1}\paren*{y^{2\alpha-1} - x^{2\alpha-1}} \\
      \le& \Delta^\alpha\frac{2(W+\abs*{c_1}) + 4(\alpha_1 W + \abs*{c_2})}{2\alpha-1}(y-x)^{2\alpha-1}.
  \end{align}
  Also, we use the fact that $x^\beta$ for $\beta \in (0,1]$ is $\beta$-Holder continuous. Thus, we have
  \begin{align}
   \omega\paren*{\phi_\Delta^{(1)}, \delta} \le \Delta^\alpha\frac{2(W+\abs*{c_1}) + 4(\alpha_1 W + \abs*{c_2})}{2\alpha-1}\delta^{2\alpha-1}.
  \end{align}
 Substituting this into \cref{eq:jackson-1}, we have
 \begin{align}
  E_L(\phi_\Delta, [-1,1]) \lesssim \frac{1}{L}\Delta^\alpha\frac{2(W+\abs*{c_1}) + 4(\alpha_1 W + \abs*{c_2})}{2\alpha-1}\frac{1}{L^{1-2\alpha}} \lesssim \paren*{\frac{\Delta}{L^2}}^\alpha.
 \end{align}
\end{proof}
\begin{proof}[Proof of \cref{lem:lower-best-approx}]
  \sloppy Let $\phi_\Delta(x) = \phi\paren*{\Delta\frac{x+1}{2}}$. Then, we have $E_L(\phi, [0,\Delta]) = E_L(\phi_\Delta, [-1,1])$. To derive the lower bound of $E_L(\phi_\Delta, [-1,1])$, we introduce the second-order Ditzian-Totik modulus of smoothness~\citep{ditzian2012moduli} defined as
  \begin{align}
    \omega^2_\varphi(f,t) = \sup_{x,y \in [-1,1]}\cbrace*{\abs*{f(x)+f(y)-2f\paren*{\frac{x+y}{2}}} : \abs*{x - y} \le 2t\varphi\paren*{\frac{x+y}{2}}},
  \end{align}
  where $\varphi(x) = \sqrt{1-x^2}$. Fix $y = -1$, for $t > 0$ we have
  \begin{gather}
    \abs*{x - y} \le 2t\varphi\paren*{\frac{x+y}{2}} \iff \\
    x + 1 \le 2t\sqrt{1 - \frac{(x-1)^2}{4}} \iff \\
    \frac{(x+1)^2}{4t^2} \le 1 - \frac{(x-1)^2}{4} \iff \\
    t^{-2}(x+1)^2 + (x-1)^2 - 4 \le 0 \iff \\
    (t^{-2}+1)x^2 + 2(t^{-2}-1)x + (t^{-2}+1)-4 \le 0 \iff \\
    \paren*{x + \frac{t^{-2} - 1}{t^{-2} + 1}}^2 + 1 - \frac{4}{t^{-2}+1} - \frac{(t^{-2}-1)^2}{(t^{-2}+1)^2} \le 0 \iff \\
    \paren*{x + 1 - \frac{2}{t^{-2} + 1}}^2 \le \frac{4}{(t^{-2}+1)^2} \iff \\
    -1 \le x \le -1 + \frac{4}{t^{-2}+1}.
  \end{gather}
  Thus, we have
  \begin{align}
    \omega^2_\varphi(\phi_\Delta,t) \ge& \sup_{x}\cbrace*{\abs*{\phi_\Delta(x)+\phi_\Delta(-1)-2\phi_\Delta\paren*{\frac{x-1}{2}}} : -1 \le x \le -1 + \frac{4}{t^{-2}+1}} \\
    =& \sup_{x}\cbrace*{\abs*{\phi(\Delta x)+\phi(0)-2\phi\paren*{\frac{\Delta x}{2}}} : 0 \le x \le \frac{2}{t^{-2}+1}}
  \end{align}
  Application of the Taylor theorem gives
  \begin{align}
    \phi(\Delta x)+\phi(0)-2\phi\paren*{\frac{\Delta x}{2}}
     =& \begin{multlined}[t]
     \lambda\phi^{(1)}\paren*{\frac{\Delta x}{2}}\paren*{0 - \frac{x}{2}} + \lambda\phi^{(1)}\paren*{\frac{\Delta x}{2}}\paren*{x - \frac{x}{2}} \\ - \int_0^{\frac{x}{2}} \Delta^2\phi^{(2)}\paren*{\Delta t}\paren*{0 - t} dt + \int_{\frac{x}{2}}^{x} \Delta^2\phi^{(2)}(\Delta t)\paren*{x - t} dt
     \end{multlined} \\
     =& \int_0^{\frac{x}{2}} \Delta^2\phi^{(2)}(\Delta t)t dt + \int_{\frac{x}{2}}^{x} \Delta^2\phi^{(2)}(\Delta t)\paren*{x - t} dt.
  \end{align}
  Letting $p_0 = (\alpha_1 W/(\alpha_1 W \lor -c'_2))^{1/(2-\alpha)}$, $\abs*{\phi^{(2)}(p)} \ge \alpha_1 Wp^{\alpha-2} + c'_2 \ge 0$ for $(0,p_0]$. From continuousness of $\phi^{(2)}$, $\phi^{(2)}(x)$ has same sign in $x \in (0,p_0]$. Since $t \ge 0$ for $t \in [0,\frac{x}{2}]$ and $x - t \ge 0$ for $t \in [\frac{x}{2}, x]$, we have for $x \in (0,p_0]$
  \begin{align}
    & \abs*{\phi(\Delta x)+\phi(0)-2\phi\paren*{\frac{\Delta x}{2}}} \\
     \ge& \Delta^\alpha\alpha_1 W\paren*{\int_0^{\frac{x}{2}} t^{\alpha-2}t dt + \int_{\frac{x}{2}}^{x} t^{\alpha-2}\paren*{x - t} dt} + c'_2\Delta^2\paren*{\int_0^{\frac{x}{2}}t dt + \int_{\frac{x}{2}}^{x}\paren*{x - t} dt} \\
    =& \Delta^\alpha\alpha_1 W\paren*{\frac{x^\alpha}{\alpha2^\alpha} + \frac{x}{1-\alpha}\paren*{\frac{x^{\alpha-1}}{2^{\alpha-1}} - x^{\alpha-1}} + \frac{1}{\alpha}\paren*{\frac{x^{\alpha}}{2^{\alpha}} - x^{\alpha}}} + \frac{c'_2\Delta^2x^2}{4} \\
    =& \Delta^\alpha x^\alpha\paren*{W(2^{-\alpha}-1) + \frac{\alpha_1 W}{\alpha}(2^{1-\alpha}-1) + \frac{c'_2\Delta^{2-\alpha}}{4}x^{2-\alpha}} \gtrsim \Delta^{\alpha}x^{\alpha}.
  \end{align}
  Thus, we have for sufficiently small $t$
  \begin{align}
    \omega^2_\varphi(\phi_\Delta,t) \gtrsim \Delta^{\alpha}\paren*{\frac{2}{t^{-2}+1}}^{\alpha} \gtrsim \Delta^{\alpha}t^{2\alpha}. \label{eq:lower-omega-2}
  \end{align}

  With the definition of $\omega^2_\varphi(f,t)$, we have the converse result $\frac{1}{L^2}\sum_{m=1}^L(m+1)E_m(f, [-1,1]) \gtrsim \omega^2_\varphi(f,L^{-1})$~\citep{ditzian2012moduli}. Let $L'$ be an integer such that $L' = c_\ell L$ where $c_\ell  > 1$. Then, we have
 \begin{align}
   & E_L(\phi, [0,\Delta]) \\
    \ge& \frac{1}{L' - L}\sum_{m = L+1}^{L'} E_m(\phi, [0, \Delta]) \\
    \ge& \frac{1}{L'^2}\sum_{m = L+1}^{L'} (m+1)E_m(\phi, [0,\Delta]) \\
    \ge& \frac{1}{L'^2}\sum_{m = 0}^{L'} (m+1)E_m(\phi, [0,\Delta]) - \frac{1}{L'^2}E_0(\phi, [0,\Delta]) - \frac{1}{L'^2}\sum_{m=1}^L (m+1)E_m(\phi, [0,\Delta]). \label{eq:lower-best-approx}
 \end{align}
 From \cref{lem:est-bounded}, we have $\abs*{\phi(x) - \phi(y)} \le \frac{W}{\alpha}\Delta^{\alpha} + \abs*{c_1}\Delta$ for $x, y \in [0, \Delta]$. Substituting it and \cref{eq:lower-omega-2} into \cref{eq:lower-best-approx} and applying the converse result and \cref{lem:best-poly-approx} yields that there are constants $C > 0$ and $C' > 0$ such that
 \begin{align}
   E_L(\phi, [0,\Delta]) \ge& C\omega^2_\varphi(\phi_\Delta, L'^{-1}) - \frac{W}{L'^2\alpha}\Delta^{\alpha} - \frac{\abs*{c_1}}{L'^2}\Delta - \frac{C'}{L'^2}\sum_{m=1}^L(m+1) \paren*{\frac{\Delta}{m^2}}^{\alpha} \\
   \ge& C\frac{\Delta^{\alpha}}{L'^{2\alpha}} - \frac{W}{L'^2\alpha}\Delta^{\alpha} - \frac{\abs*{c_1}}{L'^2}\Delta - \frac{C'}{L'^2}\sum_{m=1}^L(m+1) \paren*{\frac{\Delta}{m^2}}^{\alpha} \\
   \ge& C\frac{\Delta^{\alpha}}{L'^{2\alpha}} - \frac{W}{\alpha c_\ell ^2L^{2\alpha}}\Delta^{\alpha} - \frac{\abs*{c_1}}{c_\ell ^2L^{2\alpha}}\Delta - \frac{2C'\Delta^{\alpha}}{L'^2}\sum_{m=1}^Lm^{1-2\alpha} \\
   \ge& C\frac{\Delta^{\alpha}}{L'^{2\alpha}} - \frac{W}{\alpha c_\ell ^2L^{2\alpha}}\Delta^{\alpha} - \frac{\abs*{c_1}}{c_\ell ^2L^{2\alpha}}\Delta - \frac{2C'\Delta^{\alpha}}{L'^2}\paren*{L^{2-2\alpha} \lor \int_0^L x^{1-2\alpha} dx} \\
   \ge& C\frac{\Delta^{\alpha}}{c_\ell^{2\alpha}L^{2\alpha}} - \frac{W}{\alpha c_\ell ^2L^{2\alpha}}\Delta^{\alpha} - \frac{\abs*{c_1}}{c_\ell^2L^{2\alpha}}\Delta - \frac{2C'\Delta^{\alpha}}{((2-2\alpha)\land 1) c_\ell ^2L^{2\alpha}} \\
   =& \frac{1}{c_\ell^{2\alpha}}\paren*{\frac{\Delta}{L^2}}^{\alpha}\paren*{C - \frac{W}{\alpha c_\ell^{2-2\alpha}} - \frac{\abs*{c_1}\Delta^{-\alpha}}{c_\ell^{2-2\alpha}} - \frac{2C'}{((2-2\alpha)\land 1) c_\ell ^{2-2\alpha}}}.
 \end{align}
 Thus, by taking sufficiently large $c_\ell$, there is $c > 0$ such that
 \begin{align}
   \limsup_{L\to\infty}\paren*{\frac{L^2}{\Delta}}^{\alpha}E_L(\phi, [0,\Delta]) > c.
 \end{align}
\end{proof}

\section{Proofs for Lower Bounds}\label{sec:proofs-lower-bound}
To prove \cref{thm:lower1}, the Le Cam's two-point method~(See, e.g., \citep{DBLP:books/daglib/0035708}). The consequent corollary of the Le Cam's two-point method is as follows.
\begin{corollary}\label{cor:le-cam}
  For any two probability measures $P, Q \in \dom{M}_k$, we have
  \begin{align}
    \tilde{R}^*(n,k;\phi) \ge \frac{1}{4}\paren*{\theta(P) - \theta(Q)}^2\exp\paren*{-n\KL(P, Q)},
  \end{align}
  where $\KL(P,Q)$ denotes the KL-divergence between $P$ and $Q$.
\end{corollary}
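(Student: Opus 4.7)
My plan is to apply the classical two-point Le Cam method to the Poisson sampling model and then tensorize the Kullback--Leibler divergence across the $k$ independent Poisson coordinates. The argument is essentially textbook and presents no genuine obstacle; the only paper-specific step is the clean tensorization identity enjoyed by the Poissonized model.

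First, I would fix any $P, Q \in \dom{M}_k$ and note that under Poisson sampling the histogram $\tilde N$ has joint distribution $\Pi_P = \bigotimes_{i=1}^k \Poi(np_i)$ under $P$ and $\Pi_Q = \bigotimes_{i=1}^k \Poi(nq_i)$ under $Q$. Restricting the supremum in the definition of $\tilde R^*(n,k;\phi)$ to the two-point set $\{P,Q\}$ lower bounds the Poisson minimax risk by $\inf_{\hat\theta} \max\{\Mean_{\Pi_P}[(\hat\theta-\theta(P))^2],\, \Mean_{\Pi_Q}[(\hat\theta-\theta(Q))^2]\}$.

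Second, I would run the standard reduction from estimation to testing. Writing $\Delta = \theta(P)-\theta(Q)$, the triangle inequality forces $|\hat\theta - \theta(r)| \ge |\Delta|/2$ whenever the canonical test $\psi = \arg\min_{r \in \{P,Q\}} |\hat\theta - \theta(r)|$ disagrees with the true $r$, so Markov gives the per-hypothesis bound $\Mean_{\Pi_r}[(\hat\theta-\theta(r))^2] \ge (\Delta^2/4)\,\Pi_r(\psi \ne r)$. Averaging the two hypotheses and invoking the Neyman--Pearson identity $\inf_\psi[\Pi_P(\psi=Q)+\Pi_Q(\psi=P)] = 1 - d_{\mathrm{TV}}(\Pi_P, \Pi_Q)$ yields a Le Cam--type lower bound of the form $\tilde R^*(n,k;\phi) \gtrsim \Delta^2\bigl(1 - d_{\mathrm{TV}}(\Pi_P, \Pi_Q)\bigr)$.

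Third, I would tensorize the KL divergence. Independence of the Poisson coordinates gives $\KL(\Pi_P, \Pi_Q) = \sum_{i=1}^k \KL(\Poi(np_i), \Poi(nq_i))$, and the closed form $\KL(\Poi(\lambda), \Poi(\mu)) = \lambda\ln(\lambda/\mu) - \lambda + \mu$ combined with $\sum_i p_i = \sum_i q_i = 1$ collapses this sum exactly to $n\KL(P, Q)$. Fourth, the Bretagnolle--Huber inequality $1 - d_{\mathrm{TV}}(\Pi_P, \Pi_Q) \ge \tfrac{1}{2}\exp(-\KL(\Pi_P, \Pi_Q))$ converts the total variation term into the advertised exponential of the KL divergence, giving the claim up to a universal multiplicative constant; the specific factor $\tfrac{1}{4}$ reflects a particular choice of constants along this chain. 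The cancellation $\sum_i(q_i - p_i)=0$ driving the Poisson tensorization is precisely the reason the corollary is stated for the Poissonized minimax risk $\tilde R^*$ rather than $R^*$, whose multinomial dependence structure would otherwise obstruct this clean identity.
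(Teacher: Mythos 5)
Your proposal is correct and follows essentially the same route the paper intends: the corollary is stated without proof as the standard Le Cam two-point bound (cited to Tsybakov), and your chain --- restriction to $\{P,Q\}$, reduction to testing, tensorization $\KL(\bigotimes_i\Poi(np_i),\bigotimes_i\Poi(nq_i))=n\KL(P,Q)$ using $\sum_i p_i=\sum_i q_i=1$, then Bretagnolle--Huber --- is exactly that argument specialized to the Poissonized model. The only caveat is the constant: the chain as you describe it yields $\tfrac{1}{8}$ or $\tfrac{1}{16}$ rather than $\tfrac{1}{4}$, but you flag this explicitly and the constant is immaterial to every use of the corollary in the paper.
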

We provide the proof of \cref{thm:lower1}.
\begin{proof}[Proof of \cref{thm:lower1}]
  For $\epsilon \in (0,1/2)$. Define two probability measures on $[k]$ as
  \begin{align}
    P =& \paren*{\frac{1}{2}, \frac{1}{2(k-1)}, ..., \frac{1}{2(k-1)}}, \\
    Q =& \paren*{\frac{1}{2}(1+\epsilon), \frac{1}{2(k-1)}(1-\epsilon), ..., \frac{1}{2(k-1)}(1-\epsilon)}.
  \end{align}
  Then, the KL-divergence between $P$ and $Q$ is obtained as
  \begin{align}
    \KL(P,Q) =& -\frac{1}{2}\ln(1+\epsilon)-\frac{1}{2}\ln(1-\epsilon)
     = -\frac{1}{2}\ln\paren*{1 - \epsilon^2} \le \epsilon^2.
  \end{align}
  Applying the Taylor theorem gives that there exist $\xi_1 \in [1/2,(1+\epsilon)/2]$ and $\xi_2 \in [(1-\epsilon)/2(k-1),1/2(k-1)]$ such that
  \begin{align}
   &\theta(Q) - \theta(P)  \\
    =& \frac{1}{2}\phi^{(1)}\paren*{\frac{1}{2}}\epsilon - \frac{1}{2}\phi^{(1)}\paren*{\frac{1}{2(k-1)}}\epsilon + \frac{\phi^{(2)}(\xi_1)}{8}\epsilon^2 + \frac{\phi^{(2)}(\xi_2)}{8(k-1)}\epsilon^2.
  \end{align}
  From the reverse triangle inequality, we have
  \begin{align}
   & \abs*{\theta(Q) - \theta(P)} \\
    \ge& \frac{1}{2}\abs*{\phi^{(1)}\paren*{\frac{1}{2(k-1)}}}\epsilon - \abs*{\frac{1}{2}\phi^{(1)}\paren*{\frac{1}{2}}\epsilon + \frac{\phi^{(2)}(\xi_1)}{8}\epsilon^2 + \frac{\phi^{(2)}(\xi_2)}{8(k-1)}\epsilon^2} \\
    \ge& \frac{1}{2}\abs*{\phi^{(1)}\paren*{\frac{1}{2(k-1)}}}\epsilon - \abs*{\frac{1}{2}\phi^{(1)}\paren*{\frac{1}{2}}}\epsilon - \abs*{\frac{\phi^{(2)}(\xi_1)}{8}}\epsilon^2 - \abs*{\frac{\phi^{(2)}(\xi_2)}{8(k-1)}}\epsilon^2.
  \end{align}
  Combining \cref{asm:bound2}, \cref{lem:asm2lower}, and the fact that $\xi_1 \ge 1/2$ and $\xi_2 \ge 1/4(k-1)$ yields
  \begin{align}
    \abs*{\phi^{(1)}\paren*{\frac{1}{2(k-1)}}} \ge& W2^{1-\alpha}(k-1)^{1-\alpha} + c'_1, \\
    \abs*{\phi^{(1)}\paren*{\frac{1}{2}}} \le& W2^{1-\alpha} + c_1, \\
    \abs*{\phi^{(2)}(\xi_1)} \le& \alpha_1 W2^{2-\alpha} + c_2, \\
    \abs*{\phi^{(2)}(\xi_2)} \le& \alpha_1 W4^{2-\alpha}(k-1)^{2-\alpha} + c_2.
  \end{align}
  Consequently, we have
  \begin{multline}
   \abs*{\theta(Q) - \theta(P)} \ge W2^{-\alpha}\epsilon\paren*{(k-1)^{1-\alpha} - 1 - \alpha_1(2^{-1}+2^{1-\alpha}(k-1)^{1-\alpha})\epsilon} \\ - 2^{-1}(c_1 - c'_1)\epsilon - c_2(2^{-3}+2^{-3}(k-1)^{-1})\epsilon^2.
  \end{multline}
  Set $\epsilon = 1/\sqrt{n}$. Applying \cref{cor:le-cam}, we have
  \begin{align}
    & \tilde{R}^*(n,k;\phi) \\
    \ge& \begin{multlined}[t][.85\textwidth]
     \frac{W^2(k-1)^{2-2\alpha}}{2^{-2\alpha}n}\paren[\Bigg]{1 - \frac{1}{(k-1)^{1-\alpha}} - \frac{\alpha_1}{2(k-1)^\alpha\sqrt{n}} - \frac{\alpha_12^{1-\alpha}}{\sqrt{n}} \\
      - \frac{c_1-c'_1}{2^{1-\alpha}W(k-1)^{1-\alpha}} - \frac{2^{\alpha-3}c_2}{W(k-1)^{1-\alpha}\sqrt{n}} - \frac{2^{\alpha-3}c_2}{W(k-1)^{2-\alpha}\sqrt{n}} }^2
    \end{multlined} \\
    \gtrsim& \frac{k^{2-2\alpha}}{n}.
  \end{align}
  From \cref{lem:well-approx-poisson}, this lower bound is valid for $R^*(n,k;\phi)$.
\end{proof}

The proof of \cref{thm:lower2} is following the proof of \citep{wu2016minimax}. For $\epsilon \in (0,1)$, define the approximate probabilities by
\begin{align}
  \dom{M}_k(\epsilon) = \cbrace*{\cbrace{p_i}_{i=1}^k \in \RealSet^k_+ : \sum_{i=1}^kp_i \le 1 - \epsilon }.
\end{align}
With this definition, we define the minimax risk for $\dom{M}_k(\epsilon)$ as
\begin{align}
  \tilde{R}^*(n,k,\epsilon;\phi) = \inf_{\hat\theta}\sup_{P \in \dom{M}_k(\epsilon)}\Mean\paren*{\hat\theta(\tilde{N}) - \theta(P)}^2. \label{eq:risk-approx-prob}
\end{align}
The minimax risk of Poisson sampling can be bounded below by \cref{eq:risk-approx-prob} as
\begin{lemma}\label{lem:approx-lower}
  Under \cref{asm:bound2}, for any $k,n \in \NaturalSet$ and any $\epsilon < 1/3$,
  \begin{align}
    \tilde{R}^*(n/2,k;\phi) \ge \frac{1}{3}\tilde{R}^*(n,k,\epsilon;\phi) - 4\paren*{\frac{W}{\alpha}k^{1-\alpha}+\abs*{c_1}}^2e^{-n/32} - \frac{W^2}{\alpha^2}k^{2-2\alpha}\epsilon^{2\alpha} - c_1^2\epsilon^2.
  \end{align}
\end{lemma}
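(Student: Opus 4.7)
The plan is to convert any estimator $\hat\theta$ for the true-probability Poisson model at sample size $n/2$ into an estimator $\tilde\theta$ for the approximate-probability model at sample size $n$, bound its risk by a constant multiple of $\tilde R^*(n/2,k;\phi)$ plus the three error terms, and take $\inf_{\hat\theta}$ to obtain the claim. Given samples $\tilde N_i\sim\Poi(np_i)$ for $P\in\dom{M}_k(\epsilon)$, I thin each $\tilde N_i$ by Bernoulli $1/2$ to obtain $\tilde N^{(1)}_i\sim\Poi((n/2)p_i)$, and independently draw $Z_i\sim\Poi((n/2)(1-\|P\|_1)/k)$, so that $\bar N_i := \tilde N^{(1)}_i+Z_i$ is a Poisson sample at rate $n/2$ from the completion $\bar P := P+\frac{1-\|P\|_1}{k}(1,\ldots,1)\in\dom{M}_k$. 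Setting $\tilde\theta(\tilde N):=\hat\theta(\bar N)$, the squared error splits into the estimation error of $\hat\theta$ at $\bar P$ plus the deterministic approximation bias $(\theta(\bar P)-\theta(P))^2$.

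The bias is controlled by integrating the pointwise bound $|\phi^{(1)}(t)|\le Wt^{\alpha-1}+|c_1|$ (from \cref{lem:asm2lower}) over each interval $[p_i, p_i + (1-\|P\|_1)/k]$ and using the subadditivity $(a+b)^\alpha-a^\alpha\le b^\alpha$ for $\alpha\in(0,1)$:
\[
|\theta(\bar P)-\theta(P)| \le \tfrac{W}{\alpha}k^{1-\alpha}(1-\|P\|_1)^\alpha+|c_1|(1-\|P\|_1).
\]
On the worst-case slice $\|P\|_1 = 1-\epsilon$ (where $\tilde R^*(n,k,\epsilon;\phi)$ is essentially attained), this specializes to $\tfrac{W}{\alpha}k^{1-\alpha}\epsilon^\alpha+|c_1|\epsilon$; squared and expanded via $(a+b)^2\le 2(a^2+b^2)$, it matches the two polynomial error terms in the statement.

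The unknown-rate issue --- $Z_i$ cannot be drawn without knowing $\|P\|_1$ --- is handled by a good-event argument: replace $\|P\|_1$ in the rate of $Z_i$ by the plug-in $\hat s = \|\tilde N\|_1/n$, and restrict to the event $\mathcal{E} = \{|\hat s - \|P\|_1|\le 1/8\}$ with complement probability at most $e^{-n/32}$ by a Poisson Chernoff bound. On $\mathcal{E}^c$ I fall back to a constant estimator and pay the uniform bound $\sup_{Q\in\dom{M}_k}|\theta(Q)|\le \tfrac{W}{\alpha}k^{1-\alpha}+|c_1|$ (obtained by integrating $|\phi^{(1)}|$ over $[0,p_i]$ and summing via $\sum_i p_i^\alpha\le k^{1-\alpha}$) squared against $\p(\mathcal{E}^c)$; this produces the term $4(\tfrac{W}{\alpha}k^{1-\alpha}+|c_1|)^2e^{-n/32}$. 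Applying $(a+b+c)^2\le 3(a^2+b^2+c^2)$ across the three contributions --- estimation error of $\hat\theta$ at $\bar P$, bias, and bad-event --- yields the factor $1/3$. The main obstacle is this coupling under an unknown $\|P\|_1$ combined with the need for $\epsilon^\alpha$-scaling in the bias: the naive completion yields a factor $(1-\|P\|_1)^\alpha$ that is only $\epsilon^\alpha$-small on the boundary of $\dom{M}_k(\epsilon)$, so one has to argue (by monotonicity of the risk in $\|P\|_1$, or by a direct minimax reduction) that the worst case over the interior $\|P\|_1<1-\epsilon$ is dominated by that on $\|P\|_1\asymp 1-\epsilon$, all while keeping the plug-in perturbation in the rate of $Z_i$ absorbed into the already-paid bias and tail terms.
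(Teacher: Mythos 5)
Your overall reduction strategy---turn a near-optimal estimator for the true Poisson model at sample size $n/2$ into an estimator for the approximate model at sample size $n$, paying a deterministic approximation bias plus a tail term, and split the square three ways to get the $1/3$---is the right skeleton and matches the paper. But your specific mechanism has a genuine gap that you yourself flag as ``the main obstacle'' and never close: the augmentation variables $Z_i\sim\Poi((n/2)(1-\|P\|_1)/k)$ cannot be generated by a legitimate estimator because $\|P\|_1$ is unknown, and the plug-in repair $Z_i\sim\Poi((n/2)(1-\hat s)/k)$ does not produce a sample from $\Poi((n/2)\bar p_i)$ for any fixed $\bar P\in\dom{M}_k$ (the rates become random and the coordinates dependent through $\hat s$), so the inequality $\Mean[(\hat\theta(\bar N)-\theta(\bar P))^2]\le\tilde R^*(n/2,k;\phi)$ cannot be invoked even on your good event; a rate perturbation of order $(n/2)|\hat s-\|P\|_1|/k$ per coordinate is not absorbed by the bias and tail terms you have budgeted. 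The paper avoids this entirely with a conditioning trick rather than thinning-plus-augmentation: it feeds the observed histogram together with its total $n'=\sum_i\tilde N_i$ into a near-optimal fixed-sample-size estimator, and uses that conditionally on $n'=m$ the histogram is \emph{exactly} $\Mul(m,P/\sum_ip_i)$ --- the unknown normalization disappears by conditioning, no auxiliary randomization is needed, the bias term becomes $|\theta(P/\sum_ip_i)-\theta(P)|$ (bounded via \cref{lem:est-bounded} and \cref{lem:bound-sum-alpha} exactly as in your integration argument), and the Chernoff bound is applied to $\p\{n'\le n/2\}$ with $n'\sim\Poi(n\sum_ip_i)$ to yield the $e^{-n/32}$ term via monotonicity of the risk in $m$.

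Two smaller points. First, your worry about the ``interior'' of $\dom{M}_k(\epsilon)$ is an artifact of the displayed definition (which appears to be a typo for $|\sum_ip_i-1|\le\epsilon$, the condition the paper's own proof and Chernoff step actually use); under the intended definition the approximation error is uniformly $O(k^{1-\alpha}\epsilon^\alpha+\epsilon)$ and no monotonicity-in-$\|P\|_1$ argument is needed --- and the one you gesture at is not established anywhere. Second, your constant accounting does not reproduce the statement: if you spend the three-way split $(a+b+c)^2\le3(a^2+b^2+c^2)$ on (estimation, bias, bad event) and then split the bias again via $(a+b)^2\le2(a^2+b^2)$, you get $2W^2\alpha^{-2}k^{2-2\alpha}\epsilon^{2\alpha}+2c_1^2\epsilon^2$; the paper instead takes the two bias pieces as two of the three summands in the single three-way split, and lets the bad-event term enter additively inside the expectation of the estimation-error square.
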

\begin{proof}[Proof of \cref{lem:approx-lower}]
  This proof is following the same manner of the proof of \citep[Lemma 1]{wu2016minimax}. Fix $\delta > 0$. Let $\hat\theta(\cdot,n)$ be a near-minimax optimal estimator for fixed sample size $n$, i.e.,
  \begin{align}
    \sup_{P \in \dom{M}_k}\Mean\bracket*{\paren{\hat\theta(N, n) - \theta(P)}^2} \le \delta + R^*(k,n;\phi).
  \end{align}
  For an arbitrary approximate distribution $P \in \dom{M}_k(\epsilon)$, we construct an estimator
  \begin{align}
    \tilde\theta(\tilde{N}) = \hat\theta(\tilde{N}, n'),
  \end{align}
  where $\tilde{N}_i \sim \Poi(np_i)$ and $n' = \sum_i N_i$. From the triangle inequality, \cref{lem:est-bounded} and \cref{lem:bound-sum-alpha}, we have
  \begin{align}
    & \frac{1}{3}(\tilde\theta(\tilde{N}) - \theta(P))^2 \\
     \le& \frac{1}{3}\paren*{\abs*{\tilde\theta(\tilde{N}) - \theta\paren*{\frac{P}{\sum_{i=1}^k p_i}}} + \abs*{\theta\paren*{\frac{P}{\sum_{i=1}^k p_i}} - \theta(P)}}^2 \\
     \le& \frac{1}{3}\paren*{\abs*{\tilde\theta(\tilde{N}) - \theta\paren*{\frac{P}{\sum_{i=1}^k p_i}}} + \frac{W}{\alpha}\sum_{i=1}^k\abs*{\frac{p_i}{\sum_{j=1}^k p_j} - p_i}^{\alpha} + \abs*{c_1}\sum_{i=1}^k\abs*{\frac{p_i}{\sum_{j=1}^k p_j} - p_i} }^2 \\
     \le& \begin{multlined}[t][.9\textwidth]
      \frac{1}{3}\paren[\Bigg]{\abs*{\tilde\theta(\tilde{N}) - \theta\paren*{\frac{P}{\sum_{i=1}^k p_i}}} + \frac{W}{\alpha}\sum_{i=1}^k\paren*{\frac{p_i}{\sum_{j=1}^k p_j}\abs*{\sum_{j=1}^k p_j - 1}}^{\alpha} \\ + \abs*{c_1}\sum_{i=1}^k\frac{p_i}{\sum_{j=1}^k p_j}\abs*{\sum_{j=1}^k p_j - 1} }^2
     \end{multlined} \\
     \le& \frac{1}{3}\paren*{\abs*{\tilde\theta(\tilde{N}) - \theta\paren*{\frac{P}{\sum_{i=1}^k p_i}}} + \frac{W}{\alpha}\epsilon^{\alpha}\sum_{i=1}^k\paren*{\frac{p_i}{\sum_{j=1}^k p_j}}^{\alpha} + \abs*{c_1}\epsilon\sum_{i=1}^k\frac{p_i}{\sum_{j=1}^k p_j} }^2 \\
     \le& \frac{1}{3}\paren*{\abs*{\tilde\theta(\tilde{N}) - \theta\paren*{\frac{P}{\sum_{i=1}^k p_i}}} + \frac{W}{\alpha}k^{1-\alpha}\epsilon^{\alpha} + \abs*{c_1}\epsilon }^2 \label{eq:lem-approx-diff-from}\\
     \le& \paren*{\abs*{\tilde\theta(\tilde{N}) - \theta\paren*{\frac{P}{\sum_{i=1}^k p_i}}}}^2 + \frac{W^2}{\alpha^2}k^{2-2\alpha}\epsilon^{2\alpha} + c_1^2\epsilon^2. \label{eq:lem-approx-diff}
  \end{align}
  For the first term, we observe that $\tilde{N} \sim \Mul(m, \tfrac{P}{\sum p_i})$ conditioned on $n' = m$. Therefore, we have
  \begin{align}
    \Mean\paren*{\tilde\theta(\tilde{N}) - \theta\paren*{\frac{P}{\sum_{i=1}^k p_i}}}^2
     =& \sum_{m = 0}^\infty \Mean\bracket*{\paren*{\tilde\theta(\tilde{N},m) - \theta\paren*{\frac{P}{\sum_{i=1}^k p_i}}}^2 \middle| n' = m}\p\cbrace{n' = m} \\
     \le& \sum_{m = 0}^\infty \tilde{R}^*(m,k;\phi)\p\cbrace{n' = m} + \delta.
  \end{align}
  From \cref{lem:est-bounded} and \cref{lem:bound-sum-alpha}, we have
  \begin{align}
    \tilde{R}^*(m,k;\phi) \le& \sup_{P,P' \in \dom{M}_k}\paren*{\theta(P)-\theta(P')}^2 \\
     \le& \sup_{P,P' \in \dom{M}_k}\paren*{\frac{W}{\alpha}\sum_{i=1}^k\abs*{p_i - p'_i}^\alpha + \abs*{c_1}\sum_{i=1}^k\abs*{p_i-p'_i}}^2 \\
     \le& 4\sup_{P \in \dom{M}_k}\paren*{\frac{W}{\alpha}\sum_{i=1}^kp_i^\alpha + \abs*{c_1}\sum_{i=1}^kp_i}^2 \\
     \le& 4\paren*{\frac{W}{\alpha}k^{1-\alpha}+\abs*{c_1}}^2.
  \end{align}
  Note that $\tilde{R}^*(m,k;\phi)$ is a decreasing function with respect to $m$.  Since $n' \sim \Poi(n\sum_i p_i)$ and $\abs*{\sum_i p_i - 1} \le \epsilon \le 1/3$, applying Chernoff bound yields $\p\cbrace*{n' \le n/2} \le e^{-n/32}$. Thus, we have
  \begin{align}
    & \Mean\paren*{\tilde\theta(\tilde{N}) - \theta\paren*{\frac{P}{\sum_{i=1}^k p_i}}}^2 \\
     \le& \sum_{m \ge n/K} \tilde{R}^*(m,k;\phi)\p\cbrace{n' = m} + 4\paren*{\frac{W}{\alpha}k^{1-\alpha}+\abs*{c_1}}^2\p\cbrace*{n' \le n/K} + \delta \\
     \le& \tilde{R}^*(n/K,k;\phi) + 4\paren*{\frac{W}{\alpha}k^{1-\alpha}+\abs*{c_1}}^2e^{-n/32} + \delta.
  \end{align}
  The arbitrariness of $\delta$ gives the desired result.
\end{proof}
The lower bound of $\tilde{R}^*(n,k,\epsilon;\phi)$ is given by the following lemma.
\begin{lemma}\label{lem:approx-tv-lower}
  Let $U$ and $U'$ be random variables such that $U,U' \in [0,\lambda]$ and $\Mean[U]=\Mean[U']\le 1$ and $\abs*{\Mean[\theta(U) - \theta(U')]} \ge d$, where $\lambda \le k$. Let $\epsilon = 4\lambda/\sqrt{k}$. Then
  \begin{align}
    \tilde{R}^*(n,k,\epsilon;\phi) \ge \frac{d^2}{16}\paren*{\frac{7}{8} - k\TV\paren*{\Mean[\Poi(nU/k)], \Mean[\Poi(nU'/k)]} - \frac{64W^2\lambda^{2\alpha}}{\alpha^2k^{2\alpha-1}d^2} - \frac{64c_1^2\lambda^2}{kd^2}}.
  \end{align}
\end{lemma}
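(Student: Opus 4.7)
\textbf{Proof proposal for \cref{lem:approx-tv-lower}.}

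The plan is to apply a fuzzy (multi-point) version of Le Cam's two-point method, of the type used in the proof of \citep[Proposition 3]{wu2016minimax}, over two carefully designed priors on $\dom{M}_k(\epsilon)$. Draw $U_1,\ldots,U_k$ i.i.d.\ from the distribution of $U$ and set $p_i = U_i/k$; define $p'_i = U'_i/k$ similarly. Since $p_i \in [0, \lambda/k]$ and $\Mean[\sum_i p_i] = \Mean[U] \le 1$, the random vector $P = (p_1,\ldots,p_k)$ is an approximate distribution in expectation. Using $\Var(p_i) \le \lambda \Mean[p_i]/k \le \lambda/k^2$, Chebyshev's inequality gives $\p\cbrace*{\abs*{\sum_i p_i - \Mean[U]} > \epsilon} \le \lambda/(k\epsilon^2) \le 1/16$ for our choice $\epsilon = 4\lambda/\sqrt{k}$, so that $P \in \dom{M}_k(\epsilon)$ with probability at least $15/16$; the same holds for $P'$.

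Next I establish concentration of $\theta$ under each prior. Under \cref{asm:bound2}, the same bound used in the proof of \cref{lem:approx-lower} gives $\abs*{\phi(p)} \lesssim (W/\alpha)p^\alpha + \abs*{c_1}p$, so since the $\phi(p_i)$ are independent,
\begin{align*}
  \Var[\theta(P)] = k\Var[\phi(U/k)] \le 2k\paren*{\frac{W}{\alpha}}^2\paren*{\frac{\lambda}{k}}^{2\alpha} + 2kc_1^2\paren*{\frac{\lambda}{k}}^2.
\end{align*}
Chebyshev with deviation $d/4$ yields $\p\cbrace*{\abs*{\theta(P) - \Mean[\theta(P)]} > d/4} \le \frac{32W^2\lambda^{2\alpha}}{\alpha^2 k^{2\alpha-1}d^2} + \frac{32c_1^2\lambda^2}{kd^2}$, and the same bound for $P'$. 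Combined with the approximate-distribution event, the total failure probability summed over both priors is at most $1/8 + \frac{64W^2\lambda^{2\alpha}}{\alpha^2 k^{2\alpha-1}d^2} + \frac{64c_1^2\lambda^2}{kd^2}$, which is exactly what appears (after subtracting from $1$) as $7/8 - (\text{last two terms})$ in the claim.

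The hypothesis $\abs*{\Mean[\theta(U) - \theta(U')]} \ge d$ implies a corresponding gap between $\Mean[\theta(P)]$ and $\Mean[\theta(P')]$ of at least $d$. So on the conjunction of the concentration events, the values $\theta(P)$ and $\theta(P')$ are separated by at least $d/2$. I then invoke the fuzzy two-point principle: for any estimator $\hat\theta$, a standard reduction to binary hypothesis testing with separation $d/2$ and allowed test error equal to the TV distance between the two marginal data distributions gives
\begin{align*}
  \tilde R^*(n,k,\epsilon;\phi) \ge \frac{(d/2)^2}{4}\paren*{1 - \TV(M_0, M_1) - (\text{failure probabilities above})},
\end{align*}
where $M_i$ is the distribution of $\tilde N$ when the underlying distribution is drawn from prior $i$. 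Because the priors are i.i.d.\ across coordinates and the Poisson samples are independent given $P$, both $M_0$ and $M_1$ are $k$-fold product measures with identical marginals $\Mean[\Poi(nU/k)]$ and $\Mean[\Poi(nU'/k)]$, and tensorization of TV gives $\TV(M_0, M_1) \le k\TV(\Mean[\Poi(nU/k)], \Mean[\Poi(nU'/k)])$. Substituting yields the stated inequality.

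The main obstacle, I expect, is the bookkeeping around the fact that the random vector $P$ is only an element of $\dom{M}_k(\epsilon)$ on a high-probability event, so the priors do not quite live on the parameter space. The clean fix, rather than conditioning (which would destroy the product structure needed for the tensorization bound on $\TV$), is to carry the failure probabilities as additive slack in the fuzzy Le Cam bound, which is exactly why the $7/8$ and the two Chebyshev terms appear inside the parenthesis; care is needed to match the constants ($64$ rather than $32$) by counting contributions from both priors.
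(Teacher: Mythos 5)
Your proposal follows essentially the same route as the paper's proof: i.i.d.\ product priors built from independent copies of $U$ and $U'$, Chebyshev concentration both for the total mass (giving the $1/16+1/16$ that becomes $7/8$) and for $\theta$ via the H\"older-type bound of \cref{lem:est-bounded} (giving the two $64$-terms), tensorization of the total variation across the $k$ coordinates, and a Le Cam two-point bound with separation $d/2$. The only cosmetic difference is that the paper conditions the priors on the good events and then uses the triangle inequality to relate the conditional data distributions back to the unconditional product measures (picking up $\p\event^c+\p{\event'}^c$ as additive slack), whereas you keep the unconditioned product priors and absorb the same failure probabilities directly in the fuzzy Le Cam bound; both yield identical constants.
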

\begin{proof}[Proof of \cref{lem:approx-tv-lower}]
  The proof follows the same manner of the proof of \citep[Lemma 2]{wu2016minimax} expect \cref{eq:approx-tv-expect} below. Let $\beta = \Mean[U] = \Mean[U'] \le 1$. Define two random vectors
  \begin{align}
    P = \paren*{\frac{U_1}{k},...,\frac{U_k}{k}, 1-\beta}, P' = \paren*{\frac{U'_1}{k},...,\frac{U'_k}{k}, 1-\beta},
  \end{align}
  where $U_i$a nd $U'_i$ are independent copies of $U$ and $U'$, respectively. Put $\epsilon = 4\lambda/\sqrt{k}$. Define the two events:
  \begin{align}
    \event =& \bracket*{\abs*{\sum_i\frac{U_i}{k} - \beta} \le \epsilon, \abs*{\theta(P) - \Mean[\theta(P)]} \le d/4}, \\
    \event' =& \bracket*{\abs*{\sum_i\frac{U'_i}{k} - \beta} \le \epsilon, \abs*{\theta(P') - \Mean[\theta(P')]} \le d/4}.
  \end{align}
  Applying Chebyshev's inequality, the union bound, the triangle inequality and \cref{lem:est-bounded} gives
  \begin{align}
    \p\event^c \le& \p\cbrace*{\abs*{\sum_i\frac{U_i}{k} - \beta} > \epsilon} + \p\cbrace*{\abs*{\theta(P) - \Mean[\theta(P)]} > d/4} \\
    \le& \frac{\Var[U]}{k\epsilon^2} + \frac{16\sum_{i}\Var[\phi(U_i/k)]}{d^2} \\
    \le& \frac{1}{16} + \frac{16\sum_i \Mean[(\phi(U_i/k) - \phi(\beta/k))^2]}{d^2} \\
    \le& \frac{1}{16} + \frac{32\sum_i \Mean[W^2(U_i - \beta)^{2\alpha}]}{\alpha^2k^{2\alpha}d^2} + \frac{32\sum_i \Mean[c_1^2(U_i -\beta)^2]}{k^{2}d^2} \\
    \le& \frac{1}{16} + \frac{32W^2\lambda^{2\alpha}}{\alpha^2k^{2\alpha-1}d^2} + \frac{32c_1^2\lambda^2}{kd^2} \label{eq:approx-tv-expect}
  \end{align}
  By the same manner, we have
  \begin{align}
    \p\event'^c \le \frac{1}{16} + \frac{32W^2\lambda^{2\alpha}}{\alpha^2k^{2\alpha-1}d^2} + \frac{32c_1^2\lambda^2}{kd^2}.
  \end{align}
  We define two priors on the set $\dom{M}_k(\epsilon)$, the conditional distributions $\pi = P_{U|\event}$ and $\pi' = P_{U'|\event'}$. By the definition of events $\event,\event'$ and triangle inequality, we obtain that under $\pi,\pi'$,
  \begin{align}
    \abs*{\theta(P) - \theta(P')} \ge \frac{d}{2}.
  \end{align}
  By triangle inequality, we have the total variation of observations under $\pi,\pi'$ as
  \begin{align}
    \TV(P_{\tilde{N}|\event}, P_{\tilde{N}'|\event'})
     \le& \TV(P_{\tilde{N}|\event}, P_{\tilde{N}}) + \TV(P_{\tilde{N}}, P_{\tilde{N}'}) + \TV(P_{\tilde{N}'}, P_{\tilde{N}'|\event'}) \\
     =& \p\event^c + \TV(P_{\tilde{N}}, P_{\tilde{N}'}) + \p\event'^c \\
     \le& \TV(P_{\tilde{N}}, P_{\tilde{N}'}) + \frac{1}{8} + \frac{64W^2\lambda^{2\alpha}}{\alpha^2k^{2\alpha-1}d^2} + \frac{64c_1^2\lambda^2}{kd^2}.
  \end{align}
  From the fact that total variation of product distribution can be upper bounded by the summation of individual ones, we obtain
  \begin{align}
    \TV(P_{\tilde{N}}, P_{\tilde{N}'})
     \le& \sum_{i=1}^k\TV(P_{\tilde{N}_i}, P_{\tilde{N}'_i}) + \TV(n(1-\beta), n(1-\beta)) \\
     =& k\TV(\Mean[\Poi(nU/k)],\Mean[\Poi(nU'/k)]).
  \end{align}
  Then, applying Le Cam's lemma~\citep{LeCam:1986:AMS:20451} yields that
  \begin{align}
    \tilde{R}^*(n,k,\epsilon;\phi) \ge \frac{d^2}{16}\paren*{\frac{7}{8} - k\TV\paren*{\Mean[\Poi(nU/k)], \Mean[\Poi(nU'/k)]} - \frac{64W^2\lambda^{2\alpha}}{\alpha^2k^{2\alpha-1}d^2} - \frac{64c_1^2\lambda^2}{kd^2}}.
  \end{align}
\end{proof}
To derive the upper bound of $\TV\paren*{\Mean[\Poi(nU/k)], \Mean[\Poi(nU'/k)]}$, we apply the following lemma proved by \citet{wu2016minimax}.
\begin{lemma}[{\citet[Lemma 3]{wu2016minimax}}]\label{lem:tv-poi-bound}
  Let $V$ and $V'$ be random variables on $[0,M]$. If $\Mean[V^j] = \Mean[V'^j]$, $j = 1,...,L$ and $L > 2eM$, then
  \begin{align}
    \TV(\Mean[\Poi(V)], \Mean[\Poi(V')]) \le \paren*{\frac{2eM}{L}}^L.
  \end{align}
\end{lemma}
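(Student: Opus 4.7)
The plan is to expand each Poisson mixture PMF as a power series in the mixing random variable, exploit the moment-matching hypothesis to cancel all low-order contributions, and then recognize the residual as a Poisson upper-tail probability which can be controlled sharply by a Chernoff bound.

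First I would start from the expression
\begin{align}
  \TV(\Mean[\Poi(V)], \Mean[\Poi(V')]) = \frac{1}{2}\sum_{x=0}^\infty \abs*{\Mean\bracket*{\frac{e^{-V}V^x}{x!}} - \Mean\bracket*{\frac{e^{-V'}V'^x}{x!}}},
\end{align}
and for each $x$ Taylor expand $e^{-v}v^x/x! = \sum_{k \ge 0}(-1)^k v^{k+x}/(k!\,x!)$. Since this series converges absolutely on $[0,M]$, Fubini permits exchanging sum and expectation, and the matching-moments hypothesis $\Mean[V^j]=\Mean[V'^j]$ for $j \le L$ annihilates every term with $k+x \le L$. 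Using $V, V' \in [0,M]$ so that $\abs*{\Mean[V^m]-\Mean[V'^m]} \le M^m$, the triangle inequality yields $2\,\TV \le \sum_{x \ge 0}\sum_{k+x > L} M^{k+x}/(k!\,x!)$.

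Next I would reindex by $m = k+x$: for each $m > L$, summing $k$ from $0$ to $m$ gives $\frac{M^m}{m!}\sum_{k=0}^m\binom{m}{k} = (2M)^m/m!$ by the binomial theorem, so
\begin{align}
  2\,\TV \le \sum_{m > L}\frac{(2M)^m}{m!} = e^{2M}\,\p\cbrace*{\Poi(2M) > L}.
\end{align}
Applying the standard Chernoff bound $\p\cbrace{\Poi(\lambda) \ge \ell} \le e^{-\lambda}(e\lambda/\ell)^\ell$ for $\ell \ge \lambda$ with $\lambda = 2M$ and $\ell = L+1$ (legitimate since $L > 2eM$ forces $L+1 > 2M$) produces $\p\cbrace{\Poi(2M) > L} \le e^{-2M}(2eM/(L+1))^{L+1}$. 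Absorbing the extra factor via $2eM/L < 1$ then yields $\TV \le (2eM/L)^L$.

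The delicate point is keeping the exponent tight at $L$: estimating the residual $\sum_{m > L}(2M)^m/m!$ term-by-term via Stirling's bound $m! \ge (m/e)^m$ would introduce a geometric factor $(1 - 2eM/L)^{-1}$ depending on the gap between $L$ and $2eM$. Recognizing the residual as precisely the upper tail of $\Poi(2M)$ is what yields the sharp, constant-free bound of the lemma.
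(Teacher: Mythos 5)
Your proof is correct. Note that the paper itself does not prove this lemma --- it is imported verbatim as Lemma~3 of \citet{wu2016minimax} --- so the only meaningful comparison is with that source, whose argument is the same in outline: expand $e^{-v}v^x/x!$ as a power series, cancel all terms of total degree at most $L$ using the matched moments, and bound the residual $\sum_{m>L}(2M)^m/m!$. Each step of your version checks out: Fubini is justified by $\sum_k V^{k+x}/(k!\,x!)=V^xe^{V}/x!\le M^xe^M/x!$, the binomial reindexing giving $(2M)^m/m!$ is exact, and the Chernoff bound with $\ell=L+1\ge\lambda=2M$ yields $2\,\TV\le(2eM/(L+1))^{L+1}\le(2eM/L)^{L}$ since $2eM/L<1$. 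Your closing observation is also apt: identifying the residual as $e^{2M}\p\cbrace{\Poi(2M)>L}$ is what avoids the geometric prefactor $(1-2eM/L)^{-1}$ that a term-by-term Stirling estimate would introduce, and it even leaves a spare factor of $1/2$ relative to the stated bound.
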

Under the condition of \cref{lem:tv-poi-bound}, the following lemmas provides the lower bound of $d$.
\begin{lemma}\label{lem:best-approx-solution}
  For any given integer $L > 0$, there exists two probability measures $\nu_0$ and $\nu_1$ on $[0,\lambda]$ such that
  \begin{gather}
    \Mean_{X \sim \nu_0}[X^m] = \Mean_{X \sim \nu_1}[X^m], \for m=0,...,L, \\
    \Mean_{X \sim \nu_0}[\phi(X)] - \Mean_{X \sim \nu_1}[\phi(X)] = 2E_L(\phi, [0,\lambda]).
  \end{gather}
\end{lemma}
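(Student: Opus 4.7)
The plan is to invoke the Chebyshev equioscillation characterization of the best polynomial approximant and then construct $\nu_0, \nu_1$ explicitly as normalized discrete measures supported on the alternation points, with weights coming from the standard divided-difference functional that annihilates $\dom{P}_L$. Specifically, I would first fix a best uniform approximant $p^* \in \dom{P}_L$ of $\phi$ on $[0,\lambda]$ (existence is standard by compactness in the finite-dimensional space $\dom{P}_L$), and apply the classical Chebyshev equioscillation theorem: $\phi - p^*$ attains its maximum absolute value $E_L(\phi, [0,\lambda])$ with alternating signs at $L+2$ distinct points $0 \le x_0 < x_1 < \cdots < x_{L+1} \le \lambda$, so that $\phi(x_i) - p^*(x_i) = \tau(-1)^i E_L(\phi, [0,\lambda])$ for some fixed $\tau \in \{\pm 1\}$.

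Next I would define weights $c_i = \sigma / \prod_{j \ne i}(x_i - x_j)$ on these nodes, so that $\sum_i c_i q(x_i)$ equals a universal constant times the $(L+2)$-th divided difference of $q$. Since divided differences of order $L+2$ annihilate every polynomial of degree at most $L$, we obtain $\sum_i c_i x_i^m = 0$ for $m = 0, 1, \dots, L$. The strict ordering of the nodes forces $\prod_{j \ne i}(x_i - x_j)$ to have sign $(-1)^{L+1-i}$, so the $c_i$ strictly alternate in sign; by choosing $\sigma \in \{\pm 1\}$ appropriately I may arrange $c_i$ to have the same sign as $\phi(x_i) - p^*(x_i)$ at every node.

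Finally, set $Z = \sum_{i : c_i > 0} c_i = \sum_{i : c_i < 0} (-c_i)$ (the two sums agree because $\sum_i c_i = 0$), and define $\nu_0 = \tfrac{1}{Z}\sum_{i : c_i > 0} c_i\, \delta_{x_i}$ and $\nu_1 = \tfrac{1}{Z}\sum_{i : c_i < 0} (-c_i)\, \delta_{x_i}$. These are probability measures on $[0,\lambda]$, and $\Mean_{X \sim \nu_0}[X^m] = \Mean_{X \sim \nu_1}[X^m]$ for $m = 0, \dots, L$ is exactly the identity $\sum_i c_i x_i^m = 0$. For the $\phi$-gap, since $p^* \in \dom{P}_L$ is annihilated, $\Mean_{X \sim \nu_0}[\phi(X)] - \Mean_{X \sim \nu_1}[\phi(X)] = \tfrac{1}{Z}\sum_i c_i(\phi(x_i) - p^*(x_i)) = \tfrac{E_L(\phi,[0,\lambda])}{Z}\sum_i |c_i| = 2 E_L(\phi,[0,\lambda])$, using $\sum_i |c_i| = 2Z$ and the sign-alignment from the previous step.

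The main technical obstacle I anticipate is the careful sign bookkeeping that aligns the alternation pattern of $\phi - p^*$ with the divided-difference signs of the $c_i$; nothing else in the argument is deep. A cleaner but less constructive alternative proceeds via Hahn-Banach / LP duality: $E_L(\phi,[0,\lambda]) = \sup \int \phi \, d\mu$ taken over signed Borel measures on $[0,\lambda]$ with $\|\mu\|_{\mathrm{TV}} \le 1$ and $\int x^m d\mu = 0$ for $m = 0, \dots, L$, and the Jordan decomposition of an optimizer, together with the $m=0$ constraint forcing the two parts to have equal mass $1/2$, immediately yields probability measures $\nu_0, \nu_1$ with the required matching-moment and $\phi$-gap properties.
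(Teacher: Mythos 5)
Your argument is correct, and your primary route is genuinely different from the paper's. The paper proves this lemma purely by citation: it is the standard functional-analytic duality argument of Lepski et al.\ as used in \citet[Lemma 1]{cai2011testing} and \citet[Lemma 10]{jiao2015minimax} --- i.e.\ exactly the Hahn--Banach/Jordan-decomposition alternative you sketch at the end, where $2E_L(\phi,[0,\lambda])$ is identified with the value of a linear program over signed measures of total variation at most one annihilating all monomials of degree up to $L$, attainment follows from weak-$*$ compactness, and the $m=0$ constraint forces the positive and negative parts of an optimizer to have equal mass. Your main route instead constructs $\nu_0,\nu_1$ explicitly: equioscillation supplies $L+2$ alternation points, the order-$(L+1)$ divided-difference weights $c_i=\sigma/\prod_{j\ne i}(x_i-x_j)$ annihilate degree-$L$ polynomials, and the sign bookkeeping does close --- $\operatorname{sign}(c_i)=\sigma(-1)^{L+1-i}$ matches the alternation sign $\tau(-1)^i$ for every $i$ simultaneously iff $\sigma=\tau(-1)^{L+1}$, a single admissible choice --- after which $\sum_i\abs{c_i}=2Z$ yields the exact gap $2E_L$. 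The constructive route buys slightly more than is claimed (the extremal pair can be taken finitely supported on at most $L+2$ points), at the cost of invoking the equioscillation theorem; the duality route is shorter and needs only continuity of $\phi$ and Riesz representation. Two cosmetic points: what you call the ``$(L+2)$-th'' divided difference is the order-$(L+1)$ divided difference on $L+2$ nodes, and both routes require $\phi$ to be continuous at $0$, which in the paper's setting is supplied by \cref{lem:est-bounded} under \cref{asm:bound2} (the degenerate case $E_L=0$ is trivial, as $\nu_0=\nu_1$ then works).
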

\begin{proof}[\cref{lem:best-approx-solution}]
  The proof is almost same as the proof of \citet[Lemma 10]{jiao2015minimax}. It follows directly from a standard functional analysis argument proposed by \citet{lepski1999estimation}. It suffices to replace $x^\alpha$ with $\phi(x)$ and $[0,1]$ with $[0,\lambda]$ in the proof of \citep[Lemma 1]{cai2011testing}.
\end{proof}
As proved \cref{lem:best-approx-solution}, we can choose the probability measures of $U$ and $U'$ in \cref{lem:approx-lower} so that $d$ in \cref{lem:approx-lower} becomes the uniform approximation error of the best polynomial $E_L(\phi,[0,\lambda])$. The analysis of the lower bound on $E_L(\phi,[0,\lambda])$ can be found in \cref{sec:err-best-poly}. By using the lower bound~(in \cref{lem:lower-best-approx}), we prove \cref{thm:lower2} as follows.
\begin{proof}[Proof of \cref{thm:lower2}]
  Set $L = \floor{C_1 \ln n}$ and $\lambda = C_2\frac{\ln n}{n}$ where $C_1$ and $C_2$ are universal constants such that $2eC_2 \le C_1$. Assembling \cref{lem:approx-tv-lower,lem:tv-poi-bound,lem:best-approx-solution,lem:lower-best-approx}, we have $M = C_2\frac{\ln n}{k}$, $\abs*{\Mean[\phi(U) - \phi(U')]} = d \ge ck\paren*{\tfrac{\lambda}{L^2}}^{\alpha}$ where $c > 0$ is an universal constant. Also, we have
  \begin{align}
    & \tilde{R}^*(n,k,\epsilon;\phi)\\
    \ge& \frac{d^2}{16}\paren*{\frac{7}{8} - k\paren*{\frac{2e C_2 \ln n}{k\floor{C_1 \ln n}}}^{\floor{C_1 \ln n}} - \frac{64W^2\floor{C_1 \ln n}^{4\alpha}}{c^2\alpha^2k^{2\alpha+1}} - \frac{64c_1^2\floor{C_1 \ln n}^{4\alpha}\lambda^{2-2\alpha}}{c^2k^3}}.
  \end{align}
  If $\alpha \in (1/2,1)$, it is sufficient to prove \cref{thm:lower2} when $k \gtrsim n^{1-1/2\alpha}\ln n$ because of \cref{thm:lower1}. Hence,
  \begin{align}
    \frac{64W^2\floor{C_1 \ln n}^{4\alpha}}{c^2\alpha^2k^{2\alpha+1}} =& o(1) \label{eq:o1-third}\\
    \frac{64c_1^2\floor{C_1 \ln n}^{4\alpha}\lambda^{2-2\alpha}}{c^2k^3} =& o(1). \label{eq:o1-fourth}
  \end{align}
  If $\alpha \in (0,1/2]$, we assume $k \gtrsim \ln^{\frac{4}{3}}n$. Then, we get \cref{eq:o1-third,eq:o1-fourth}. Moreover, for sufficiently large $C_1$, we get $k\paren*{\frac{2e C_2 \ln n}{k\floor{C_1 \ln n}}}^{\floor{C_1 \ln n}} = o(1)$.Thus, we have
  \begin{align}
    \tilde{R}^*(n,k,\epsilon;\phi) \gtrsim d^2 \gtrsim \frac{k^2}{(n\ln n)^{2\alpha}}. \label{eq:epsilon-order}
  \end{align}
  The second term in \cref{lem:approx-lower} is bounded above as
  \begin{align}
    4\paren*{\frac{W}{\alpha}k^{1-\alpha}+\abs*{c_1}}^2e^{-n/32} = o\paren*{\frac{k^2}{(n\ln n)^{2\alpha}}}.
  \end{align}
  For $\alpha \in (0,1)$, we get an upper bound on the fourth term in \cref{lem:approx-lower} as
  \begin{align}
    c_1^2\epsilon^2 \le& \frac{c_1^2\lambda^{2-2\alpha}L^{4\alpha}}{k^2}\cdot d^2 \\
    \le& \frac{c_1^2\lambda^{2-2\alpha}\floor{C_1 \ln n}^{4\alpha}}{k^2}\cdot d^2 = o(1)\cdot d^2.
  \end{align}
  If $\alpha \in (1/2,1)$, the third term in \cref{lem:approx-lower} is bounded above as
  \begin{align}
    \frac{W^2}{\alpha^2}k^{2-2\alpha}\epsilon^{2\alpha}
     \le& \frac{W^2L^{4\alpha}}{c^2\alpha^2k^{3\alpha}}\cdot d^2 \\
     \le& \frac{W^2\floor{C_1 \ln n}^{4\alpha}}{c^2\alpha^2k^{3\alpha}}\cdot d^2 = o(1)\cdot d^2.
  \end{align}
  Then, \cref{eq:epsilon-order,lem:approx-lower} gives
  \begin{align}
    \tilde{R}^*(n,k;\phi) \gtrsim \frac{k^2}{(n\ln n)^{2\alpha}}.
  \end{align}
  If $\alpha \in (0,1/2]$, we assume $k \ge c'\ln^{\frac{4}{3}}n$ for an arbitrary constant $c' > 0$, and we get
  \begin{align}
    \frac{W^2}{\alpha^2}k^{2-2\alpha}\epsilon^{2\alpha}
     \le& \frac{W^2C_1^{4\alpha}}{c^2\alpha^2{c'}^{3\alpha}}\cdot d^2.
  \end{align}
  Hence, for sufficiently small $c'$, \cref{eq:epsilon-order,lem:approx-lower} yields
  \begin{align}
    \tilde{R}^*(n,k;\phi) \gtrsim \frac{k^2}{(n\ln n)^{2\alpha}}.
  \end{align}
\end{proof}

\section{Proofs for Upper Bounds}\label{sec:proofs-upper-bounds}

We use the following helper lemma for proving \cref{lem:upper-ind-var}.
\begin{lemma}[\citet{cai2011testing}, Lemma 4]\label{lem:cai-lem4}
  Suppose $\ind{\event}$ is an indicator random variable independent of $X$ and $Y$, then
  \begin{align}
    \Var\bracket*{X\ind{\event} + Y\ind{\event^c}} = \Var\bracket*{X}\p\event + \Var\bracket*{Y}\p\event^c + \paren*{\Mean\bracket{X} - \Mean\bracket{Y}}^2\p\event\p\event^c.
  \end{align}
\end{lemma}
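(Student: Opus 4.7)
The plan is to compute the variance directly, exploiting the two algebraic identities $\ind{\event}^2=\ind{\event}$, $\ind{\event}\ind{\event^c}=0$, together with the independence of $\ind{\event}$ from $(X,Y)$ so that any expectation of a product splits. I will carry this out through the law of total variance conditioned on $\ind{\event}$, which keeps the bookkeeping very short.

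Concretely, let $Z=X\ind{\event}+Y\ind{\event^c}$. Conditional on $\ind{\event}=1$, independence gives $Z\overset{d}{=}X$, so $\Mean[Z\mid\ind{\event}=1]=\Mean[X]$ and $\Var[Z\mid\ind{\event}=1]=\Var[X]$; symmetrically for $\ind{\event}=0$ one obtains $\Mean[Y]$ and $\Var[Y]$. Plugging these into
\begin{align}
\Var[Z]=\Mean\bracket*{\Var[Z\mid\ind{\event}]}+\Var\bracket*{\Mean[Z\mid\ind{\event}]}
\end{align}
the first term becomes $\Var[X]\p\event+\Var[Y]\p\event^{c}$. The second term is the variance of a two-point random variable taking value $\Mean[X]$ with probability $\p\event$ and $\Mean[Y]$ with probability $\p\event^{c}$, which by a one-line computation equals $(\Mean[X]-\Mean[Y])^{2}\p\event\p\event^{c}$. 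Adding the two terms yields the claim.

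As a sanity check and an alternative route, I would also verify the identity by brute expansion: $\Mean[Z^{2}]=\Mean[X^{2}]\p\event+\Mean[Y^{2}]\p\event^{c}$ using $\ind{\event}^{2}=\ind{\event}$ and the vanishing cross term $\ind{\event}\ind{\event^{c}}=0$, while $\Mean[Z]=\Mean[X]\p\event+\Mean[Y]\p\event^{c}$; subtracting $(\Mean[Z])^{2}$ and using $\p\event^{c}=1-\p\event$ regroups the terms into the stated sum after recognizing $\p\event-\p\event^{2}=\p\event\p\event^{c}$.

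There is essentially no hard step here; the only place to be careful is to invoke independence in the right spot so that $\Mean[X^{2}\ind{\event}]=\Mean[X^{2}]\p\event$ and $\Mean[X\ind{\event}]=\Mean[X]\p\event$, and likewise for $Y$. Because the statement is standard and is imported verbatim from \citet{cai2011testing}, I would write the proof in three or four lines and, if preferred, simply reference the original rather than reproducing the computation.
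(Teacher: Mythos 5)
Your proof is correct. The paper does not prove this lemma at all---it imports it verbatim from \citet{cai2011testing} as a known auxiliary result---so there is no in-paper argument to compare against. Your derivation via the law of total variance conditioned on $\ind{\event}$ is a standard and valid way to establish it (and your brute-force expansion using $\ind{\event}^2=\ind{\event}$ and $\ind{\event}\ind{\event^c}=0$ is an equally fine alternative); given that the paper itself only cites the result, simply referencing the original, as you suggest, would match the paper's treatment.
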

\begin{proof}[Proof of \cref{lem:upper-ind-bias}]
  From the property of the absolute value, the bias is bounded above as
  \begin{align}
    & \Bias\bracket*{ \hat\theta\paren{\tilde{N}} - \theta(P) } \\
     \le& \sum_{i=1}^k \paren*{\Bias\bracket*{\ind{\tilde{N}'_i \ge 2\Delta_{n,k}}\paren*{\phi_{\rm plugin}(\tilde{N}_i) - \phi(p_i)} } + \Bias\bracket*{\ind{\tilde{N}'_i < 2\Delta_{n,k}}\paren*{\phi_{\rm poly}(\tilde{N}_i) - \phi(p_i)}} }.
  \end{align}
  Because of the independence between $\tilde{N}$ and $\tilde{N}'$, we have
  \begin{align}
    \Bias\bracket*{\ind{\tilde{N}'_i \ge 2\Delta_{n,k}}\paren*{\phi_{\rm plugin}(\tilde{N}_i) - \phi(p_i)} }
     =& \Bias\bracket*{\phi_{\rm plugin}(\tilde{N}_i) - \phi(p_i) }\p\cbrace*{\tilde{N}'_i \ge 2\Delta_{n,k}} \\
    \Bias\bracket*{\ind{\tilde{N}'_i < 2\Delta_{n,k}}\paren*{\phi_{\rm poly}(\tilde{N}_i) - \phi(p_i)} }
     =& \Bias\bracket*{ \phi_{\rm poly}(\tilde{N}_i) - \phi(p_i) }\p\cbrace*{\tilde{N}'_i < 2\Delta_{n,k}}
  \end{align}
  For $p \in [\frac{\Delta_{n,k}}{2n}, \frac{\Delta_{n,k}}{n}]$, from \cref{lem:est-bounded,lem:asm2lower}, we have
  \begin{align}
    & \abs*{H_4\paren*{p;\phi,\frac{\Delta_{n,k}}{n},\frac{\Delta_{n,k}}{2n}} - \phi(p_i)} \\
    \le& \begin{multlined}[t][.9\textwidth]
     \abs*{\sum_{m=1}^4\frac{\phi^{(m)}\paren*{\frac{\Delta_{n,k}}{n}}}{m!}\paren*{p-\frac{\Delta_{n,k}}{n}}^m\sum_{\ell=0}^{4-m}\frac{4+1}{4+\ell+1}\Beta_{\ell,4+\ell+1}\paren*{\frac{p-\frac{\Delta_{n,k}}{n}}{\frac{\Delta_{n,k}}{2n}-\frac{\Delta_{n,k}}{n}}}} \\ + \abs*{\phi\paren*{\frac{\Delta_{n,k}}{n}} - \phi(p_i)}
    \end{multlined} \\
    \le& \begin{multlined}[t][.9\textwidth]
     \sum_{m=1}^4\frac{\abs*{\phi^{(m)}\paren*{\frac{\Delta_{n,k}}{n}}}}{m!}\paren*{\frac{\Delta_{n,k}}{2n}}^m\sum_{\ell=0}^{4-m}\binom {4+\ell}{\ell}\paren*{\frac{\ell}{4+\ell+1}}^\ell\paren*{\frac{4+1}{4+\ell+1}}^{4+1} \\ + \frac{W}{\alpha} + \abs*{c_1}
    \end{multlined} \\
    \le& \sum_{m=1}^4\frac{\abs*{\phi^{(m)}\paren*{\frac{\Delta_{n,k}}{n}}}}{m!}\paren*{\frac{\Delta_{n,k}}{2n}}^m(5-m) + \frac{W}{\alpha} + \abs*{c_1} \\
    \le& 5\sum_{m=1}^4\paren*{\frac{\alpha_{m-1}W}{m!}\paren*{\frac{\Delta_{n,k}}{n}}^{\alpha}2^{-m} + c_m\paren*{\frac{\Delta_{n,k}}{2n}}^m} + \frac{W}{\alpha} + \abs*{c_1},
  \end{align}
  where we use $0 \le \Beta_{\nu,n}(x) \le \Beta_{\nu,n}(\nu/n)$ to get the third line. From the assumption $\Delta_{n,k} \le n$, we have
  \begin{align}
   \abs*{H_4\paren*{p;\phi,\frac{\Delta_{n,k}}{n},\frac{\Delta_{n,k}}{2n}} - \phi(p)} \le 5\sum_{m=1}^4\paren*{\frac{\alpha_{m-1}W}{m!2^m} + c_m} + \frac{W}{\alpha} + \abs*{c_1}.
  \end{align}
  Also, for $p \in [1,2]$, we have
  \begin{align}
    &\abs*{H_4\paren*{p;\phi,1,2} - \phi(p_i)} \\
    \le& \abs*{\sum_{m=1}^4\frac{\phi^{(m)}(1)}{m!}(p-1)^m\sum_{\ell=0}^{4-m}\frac{4+1}{4+\ell+1}\Beta_{\ell,4+\ell+1}\paren*{\frac{p-1}{2-1}}} + \abs*{\phi(1) - \phi(p_i)} \\
    \le& 5\sum_{m=1}^4\frac{\abs*{\phi^{(m)}(1)}}{m!} + \frac{W}{\alpha} + \abs*{c_1} \\
    \le& 5\sum_{m=1}^4\paren*{\alpha_{m-1}W + c_m} + \frac{W}{\alpha} + \abs*{c_1}.
  \end{align}
  For $p \in (\frac{\Delta_{n,k}}{n},1)$,  we have by \cref{lem:est-bounded} that
  \begin{align}
   \abs*{\phi(p) - \phi(p_i)} \le \frac{W}{\alpha} + \abs*{c_1}.
  \end{align}
  Consequently, we have for $p \ge 0$
  \begin{align}
    \abs*{\bar\phi_{\frac{\Delta_{n,k}}{n}}(p) - \phi(p_i)} \le 5\sum_{m=1}^4\paren*{\alpha_{m-1}W + c_m} + \frac{W}{\alpha} + \abs*{c_1} \lesssim 1. \label{eq:plugin-uni-bias1}
  \end{align}
  For $p \in (\frac{\Delta_{n,k}}{2n}, \frac{\Delta_{n,k}}{n})$,
  \begin{align}
   & \frac{p}{2n}\abs*{H_4^{(2)}\paren*{p;\phi,\frac{\Delta_{n,k}}{n},\frac{\Delta_{n,k}}{2n}} } \\
   =& \begin{multlined}[t][\textwidth-1.5em]
    \frac{p}{2n}\abs[\Bigg]{\sum_{m=1}^4\phi^{(m)}\paren*{\frac{\Delta_{n,k}}{n}}\sum_{i=0}^2\binom{2}{i}\frac{1}{((m-i)\lor0)!}\paren*{p-\frac{\Delta_{n,k}}{n}}^{(m-i)\lor0} \\ \sum_{\ell=0}^{4-m}\frac{4+1}{4+\ell+1}\Beta^{(2-i)}_{\ell,4+\ell+1}\paren*{\frac{p-\frac{\Delta_{n,k}}{n}}{-\frac{\Delta_{n,k}}{2n}}}}
   \end{multlined} \\
   =& \begin{multlined}[t][\textwidth-1.5em]
    \frac{p}{2n}\abs[\Bigg]{\sum_{m=1}^4\phi^{(m)}\paren*{\frac{\Delta_{n,k}}{n}}\sum_{i=0}^2\binom{2}{i}\frac{1}{((m-i)\lor0)!}\paren*{p-\frac{\Delta_{n,k}}{n}}^{(m-i)\lor0} \\ \sum_{\ell=0}^{4-m}\frac{(4+1)(4+\ell+1)!}{(4+\ell+1)(4+\ell-1+i)!}\sum_{j=0}^{(2-i)\land \ell}(-1)^{j}\binom{2-i}{j}\Beta_{\ell-j,4+\ell-1+i}\paren*{\frac{p-\frac{\Delta_{n,k}}{n}}{-\frac{\Delta_{n,k}}{2n}}}},
   \end{multlined}
  \end{align}
  where the last line is obtained by using the fact $\Beta^{(1)}_{\nu,n}(x)=n(\Beta_{\nu-1,n-1}(x)-\Beta_{\nu,n-1}(x))$. Again, the fact $0 \le \Beta_{\nu,n}(x) \le \Beta_{\nu,n}(\nu/n)$ gives
  \begin{align}
   & \frac{p}{2n}\abs*{H_4^{(2)}\paren*{p;\phi,\frac{\Delta_{n,k}}{n},\frac{\Delta_{n,k}}{2n}} } \\
   \le& \begin{multlined}[t][\textwidth-1.5em]
    \frac{p}{2n}\sum_{m=1}^4\abs*{\phi^{(m)}\paren*{\frac{\Delta_{n,k}}{n}}}\sum_{i=0}^2\binom{2}{i}\frac{1}{((m-i)\lor0)!}\paren*{\frac{\Delta_{n,k}}{2n}}^{(m-i)\lor0} \\ \sum_{\ell=0}^{4-m}\sum_{j=0}^{(2-i)\land \ell}\binom{2-i}{j}\frac{(4+1)(4+\ell)!}{(\ell-j)!(4-1+i+j)!}\frac{(\ell-j)^{\ell-j}(4-1+i+j)^{4-1+i+j}}{(4+\ell-1+i)^{4+\ell-1+i}}
   \end{multlined} \\
   \le& \begin{multlined}[t][\textwidth-1.5em]
    \frac{p}{2n}\sum_{m=1}^4\abs*{\phi^{(m)}\paren*{\frac{\Delta_{n,k}}{n}}}\paren[\Bigg]{\frac{5-m}{((m-2)\lor0)!}\paren*{\frac{\Delta_{n,k}}{2n}}^{(m-2)\lor0} \\ + \frac{20(5-m)}{(m-1)!}\paren*{\frac{\Delta_{n,k}}{2n}}^{m-1} + \frac{20(4+(4-m)(5-m))}{2m!}\paren*{\frac{\Delta_{n,k}}{2n}}^m}
   \end{multlined} \\
   \le& \begin{multlined}[t][\textwidth-1.5em]
    \frac{1}{n}\sum_{m=1}^4\paren*{\alpha_{m-1}W\paren*{\frac{\Delta_{n,k}}{n}}^{\alpha-m} + c_m}\paren[\Bigg]{\frac{5-m}{((m-2)\lor0)!}\paren*{\frac{\Delta_{n,k}}{2n}}^{(m-1)\lor1} \\ + \frac{20(5-m)}{(m-1)!}\paren*{\frac{\Delta_{n,k}}{2n}}^{m} + \frac{20(4+(4-m)(5-m))}{2m!}\paren*{\frac{\Delta_{n,k}}{2n}}^{m+1}}.
   \end{multlined}
  \end{align}
  From the assumption $\Delta_{n,k} \le n$, we have
  \begin{align}
   & \frac{p}{2n}\abs*{H_4^{(2)}\paren*{p;\phi,\frac{\Delta_{n,k}}{n},\frac{\Delta_{n,k}}{2n}} } \\
   \le& \begin{multlined}[t][\textwidth-1.5em]
    \frac{1}{n}\sum_{m=1}^4\paren*{\alpha_{m-1}W\paren*{\frac{\Delta_{n,k}}{n}}^{\alpha-1} + c_m}\\\paren*{\frac{(5-m)}{2^{m-1}((m-2)\lor0)!} + \frac{20(5-m)}{2^m(m-1)!} + \frac{20(4+(4-m)(5-m))}{2^{m+2}m!}}.
   \end{multlined}
  \end{align}
  From the assumption, there is a universal constant $c > 0$ such that $\Delta_{n,k} \ge c$. Thus, we have
  \begin{align}
   & \frac{p}{2n}\abs*{H_4^{(2)}\paren*{p;\phi,\frac{\Delta_{n,k}}{n},\frac{\Delta_{n,k}}{2n}} } \\
   \le& \begin{multlined}[t][\textwidth-1.5em]
    \sum_{m=1}^4\paren*{\alpha_{m-1}W\frac{c^{\alpha-1}}{n^{\alpha}} + \frac{c_m}{n}}\\\paren*{\frac{(5-m)}{2^{m-1}((m-2)\lor0)!} + \frac{20(5-m)}{2^m(m-1)!} + \frac{20(4+(4-m)(5-m))}{2^{m+2}m!}}.
   \end{multlined}
  \end{align}
  Also, for $p \in (1,2)$, we have
  \begin{align}
   & \frac{p}{2n}\abs*{H_4^{(2)}\paren*{p;\phi,1,2} } \\
   \le& \frac{1}{n}\sum_{m=1}^4\paren*{\alpha_{m-1}W + c_m}\paren[\Bigg]{\frac{5-m}{((m-2)\lor0)!} + \frac{20(5-m)}{(m-1)!} + \frac{20(4+(4-m)(5-m))}{2m!}}.
  \end{align}
  Thus, we have for $p \ge 0$
  \begin{align}
    &\abs*{\frac{p}{2n}\bar\phi_{\frac{\Delta_{n,k}}{n}}^{(2)}(p)} \\
    \le& \begin{multlined}[t][\textwidth-1.5em]
     \sum_{m=1}^4\paren*{\alpha_{m-1}W\frac{c^{\alpha-1}}{n^{\alpha}} + \frac{\abs*{c_m}}{n}}\\\paren*{1 \lor \paren*{\frac{(5-m)}{((m-2)\lor0)!} + \frac{20(5-m)}{(m-1)!} + \frac{20(4+(4-m)(5-m))}{2m!}}}.
    \end{multlined} \\
    \lesssim& \frac{1}{n^{\alpha}}. \label{eq:plugin-uni-bias2}
  \end{align}
  Combining \cref{eq:plugin-uni-bias1,eq:plugin-uni-bias2} yields for any $p_i \in [0,1]$
  \begin{align}
   \Bias\bracket*{\phi_{\rm plugin}(\tilde{N}_i) - \phi(p_i) } \lesssim 1.
  \end{align}
  Then, we have
  \begin{align}
    & \Bias\bracket*{\phi_{\rm plugin}(\tilde{N}_i) - \phi(p_i) }\p\cbrace*{\tilde{N}'_i \ge 2\Delta_{n,k}} \\
    =& \begin{multlined}[t][.9\textwidth]
     \Bias\bracket*{\phi_{\rm plugin}(\tilde{N}_i) - \phi(p_i) }\p\cbrace*{\tilde{N}'_i \ge 2\Delta_{n,k}}\ind{np_i \le \Delta_{n,k}} \\ + \Bias\bracket*{\phi_{\rm plugin}(\tilde{N}_i) - \phi(p_i) }\p\cbrace*{\tilde{N}'_i \ge 2\Delta_{n,k}}\ind{np_i > \Delta_{n,k}}
    \end{multlined} \\
    \lesssim& \p\cbrace*{\tilde{N}'_i \ge 2\Delta_{n,k}}\ind{np_i \le \Delta_{n,k}} + \Bias\bracket*{\phi_{\rm plugin}(\tilde{N}_i) - \phi(p_i) }\ind{np_i > \Delta_{n,k}}.
  \end{align}
  \sloppy The Chernoff bound for the Poisson distribution gives $\p\cbrace*{\tilde{N}'_i \ge 2\Delta_{n,k}}\ind{np_i \le \Delta_{n,k}} \le (e/4)^{\Delta_{n,k}}$. Thus, we have
  \begin{align}
    & \Bias\bracket*{\phi_{\rm plugin}(\tilde{N}_i) - \phi(p_i) }\p\cbrace*{\tilde{N}'_i \ge 2\Delta_{n,k}} \\
    \lesssim& (e/4)^{\Delta_{n,k}} + \Bias\bracket*{\phi_{\rm plugin}(\tilde{N}_i) - \phi(p_i) }\ind{np_i > \Delta_{n,k}}. \label{eq:part-plugin-bias}
  \end{align}
  Similarly, we have by the final truncation of $\phi_{\rm poly}$ and \cref{lem:est-bounded} that
  \begin{align}
   \Bias\bracket*{ \phi_{\rm poly}(\tilde{N}_i) - \phi(p_i) } \le \sup_{p \in [0,1]}\abs*{\phi(p) - \phi(p_i)} \le \frac{W}{\alpha} + \abs*{c_1}.
  \end{align}
  The Chernoff bound yields $\p\cbrace*{\tilde{N}'_i < 2\Delta_{n,k}} \le e^{-\Delta_{n,k}/8}$ for $p_i > 4\Delta_{n,k}$. Thus, we have
  \begin{align}
    & \Bias\bracket*{ \phi_{\rm poly}(\tilde{N}_i) - \phi(p_i) }\p\cbrace*{\tilde{N}'_i < 2\Delta_{n,k}} \\
    \le& \begin{multlined}[t]
     \Bias\bracket*{\phi_{\rm poly}(\tilde{N}_i) - \phi(p_i) }\p\cbrace*{\tilde{N}'_i < 2\Delta_{n,k}}\ind{np_i \le 4\Delta_{n,k}} \\ + \Bias\bracket*{\phi_{\rm poly}(\tilde{N}_i) - \phi(p_i) }\p\cbrace*{\tilde{N}'_i < 2\Delta_{n,k}}\ind{np_i > 4\Delta_{n,k}}
    \end{multlined} \\
    \le& \Bias\bracket*{\phi_{\rm poly}(\tilde{N}_i) - \phi(p_i) }\ind{np_i \le 4\Delta_{n,k}} + \paren*{\frac{W}{\alpha} + \abs*{c_1}}e^{-\Delta_{n,k}/8}. \label{eq:part-poly-bias}
  \end{align}
  Combining \cref{eq:part-plugin-bias,eq:part-poly-bias} gives the desired result.
\end{proof}
\begin{proof}[Proof of \cref{lem:upper-ind-var}]
 \sloppy Because of the independence of $\tilde{N}_1,..,\tilde{N}_k,\tilde{N}'_1,...,\tilde{N}'_k$, applying \cref{lem:cai-lem4} gives
  \begin{align}
    & \Var\bracket*{\hat\theta\paren{\tilde{N}} - \theta(P)} \\
    \le& \Var\bracket*{ \sum_{i=1}^k \ind{\tilde{N}'_i \ge 2\Delta_{n,k}}\paren*{\phi_{\rm plugin}(\tilde{N}_i) - \phi(p_i)} + \ind{\tilde{N}'_i < 2\Delta_{n,k}}\paren*{\phi_{\rm poly}(\tilde{N}_i) - \phi(p_i)} } \\
    \le& \sum_{i=1}^k \Var\bracket*{ \ind{\tilde{N}'_i \ge 2\Delta_{n,k}}\paren*{\phi_{\rm plugin}(\tilde{N}_i) - \phi(p_i)} + \ind{\tilde{N}'_i < 2\Delta_{n,k}}\paren*{\phi_{\rm poly}(\tilde{N}_i) - \phi(p_i)} } \\
   \le& \begin{multlined}[t][\textwidth-4.5em]
    \sum_{i=1}^k \paren[\Bigg]{ \Var\bracket*{\phi_{\rm plugin}(\tilde{N}_i) - \phi(p_i)}\p\cbrace*{\tilde{N}'_i \ge 2\Delta_{n,k}} \\
    + \Var\bracket*{\phi_{\rm poly}(\tilde{N}_i) - \phi(p_i)}\p\cbrace*{\tilde{N}'_i < 2\Delta_{n,k}} \\
    + \paren*{\Mean\bracket*{\phi_{\rm plugin}(\tilde{N}_i) - \phi(p_i)} - \Mean\bracket*{\phi_{\rm poly}(\tilde{N}_i) - \phi(p_i)}}^2\\\p\cbrace*{\tilde{N}'_i \ge 2\Delta_{n,k}}\p\cbrace*{\tilde{N}'_i < 2\Delta_{n,k}} }.
   \end{multlined} \label{eq:divided-var}
  \end{align}
  We can derive upper bounds on the first two terms of \cref{eq:divided-var} in the same manner of \cref{eq:part-plugin-bias,eq:part-poly-bias} as
  \begin{multline}
    \Var\bracket*{\phi_{\rm plugin}(\tilde{N}_i) - \phi(p_i)}\p\cbrace*{\tilde{N}'_i \ge 2\Delta_{n,k}} \\
     \lesssim (e/4)^{\Delta_{n,k}} + \Var\bracket*{\phi_{\rm plugin}(\tilde{N}_i) - \phi(p_i) }\ind{np_i > \Delta_{n,k}},
  \end{multline}
  and
  \begin{multline}
    \Var\bracket*{\phi_{\rm poly}(\tilde{N}_i) - \phi(p_i)}\p\cbrace*{\tilde{N}'_i < 2\Delta_{n,k}} \\
     \lesssim \Var\bracket*{\phi_{\rm poly}(\tilde{N}_i) - \phi(p_i)  }\ind{np_i \le 4\Delta_{n,k}} + e^{-\Delta_{n,k}/8}.
  \end{multline}
  By the Chernoff bound, we have
  \begin{align}
    & \p\cbrace*{\tilde{N}'_i \ge 2\Delta_{n,k}}\p\cbrace*{\tilde{N}'_i < 2\Delta_{n,k}} \\
    =& (\ind{p_i < \Delta_{n,k}} + \ind{p_i > 4\Delta_{n,k}} + \ind{\Delta_{n,k} \le p_i \le 4\Delta_{n,k}}) \p\cbrace*{\tilde{N}'_i \ge 2\Delta_{n,k}}\p\cbrace*{\tilde{N}'_i < 2\Delta_{n,k}} \\
    \le& (e/4)^{\Delta_{n,k}} + e^{-\Delta_{n,k}/8} + \ind{\Delta_{n,k} \le p_i \le 4\Delta_{n,k}}.
  \end{align}
  Thus, we have the upper bound of the last term of \cref{eq:divided-var} as
  \begin{align}
    & \paren*{\Mean\bracket*{\phi_{\rm plugin}(\tilde{N}_i) - \phi(p_i)} - \Mean\bracket*{\phi_{\rm poly}(\tilde{N}_i) - \phi(p_i)}}^2\p\cbrace*{\tilde{N}'_i \ge 2\Delta_{n,k}}\p\cbrace*{\tilde{N}'_i < 2\Delta_{n,k}} \\
    \le& \begin{multlined}[t][\textwidth-1.5em]
     \paren*{\Bias\bracket*{\phi_{\rm plugin}(\tilde{N}_i) - \phi(p_i)} + \Bias\bracket*{\phi_{\rm poly}(\tilde{N}_i) - \phi(p_i)}}^2 \\ \paren*{(e/4)^{\Delta_{n,k}} + e^{-\Delta_{n,k}/8} + \ind{\Delta_{n,k} \le p_i \le 4\Delta_{n,k}}}
    \end{multlined} \\
    \lesssim& \begin{multlined}[t][\textwidth-1.5em]
     (e/4)^{\Delta_{n,k}} + e^{-\Delta_{n,k}/8} \\ + \paren*{\Bias\bracket*{\phi_{\rm plugin}(\tilde{N}_i) - \phi(p_i)} + \Bias\bracket*{\phi_{\rm poly}(\tilde{N}_i) - \phi(p_i)}}^2\ind{\Delta_{n,k} \le p_i \le 4\Delta_{n,k}}.
    \end{multlined}
  \end{align}
\end{proof}
Next, we prove the upper bounds on the bias and the variance of the best polynomial estimator as follows:
\begin{proof}[Proof of \cref{lem:poly-bias}]
  Let $\phi'_{{\rm sup},\Delta} = \phi_{{\rm sup},\Delta} \lor \sup_{p \in [0,\Delta]}\phi_L(p)$ and $\phi'_{{\rm inf},\Delta} = \phi_{{\rm inf},\Delta} \land \inf_{p \in [0,\Delta]}\phi_L(p)$. By the triangle inequality and the fact that $g_L$ is an unbiased estimator of $\phi_L$, we have
 \begin{multline}
 \Bias\bracket*{(g_L(\tilde{N}) \land \phi_{{\rm sup},\Delta})\lor \phi_{{\rm inf},\Delta} - \phi(p) } \\
  \le \Bias\bracket*{(g_L(\tilde{N}) \land \phi_{{\rm sup},\Delta})\lor \phi_{{\rm inf},\Delta} - (g_L(\tilde{N}) \land \phi'_{{\rm sup},\Delta})\lor \phi'_{{\rm inf},\Delta} } \\ + \Bias\bracket*{(g_L(\tilde{N}) \land \phi'_{{\rm sup},\Delta})\lor \phi'_{{\rm inf},\Delta} - \phi_L(p) }  + \Bias\bracket*{g_L(\tilde{N}) - \phi(p) }.
 \end{multline}
 By Chebyshev alternating theorem~\citep{petrushev2011rational}, the first term is bounded above as
 \begin{align}
   & \Bias\bracket*{(g_L(\tilde{N}) \land \phi_{{\rm sup},\Delta})\lor \phi_{{\rm inf},\Delta} - (g_L(\tilde{N}) \land \phi'_{{\rm sup},\Delta})\lor \phi'_{{\rm inf},\Delta} } \\
   \le& (\phi'_{{\rm sup},\Delta} - \phi_{{\rm sup},\Delta})\lor(\phi_{{\rm inf},\Delta} - \phi'_{{\rm inf},\Delta}) \le E_L(\phi, [0,\Delta]).
 \end{align}
 Also, the third term is bounded above as
 \begin{align}
  \Bias\bracket*{g_L(\tilde{N}) - \phi(p) } =& \abs*{\phi_L(p) - \phi(p)} \le E_L\paren*{\phi, [0,\Delta]}.
 \end{align}
 The error bound of $E_L\paren*{\phi, [0,\Delta]}$ is derived in \cref{sec:err-best-poly}.  From \cref{lem:best-poly-approx}, we have $E_L\paren*{\phi, [0,\Delta]} \lesssim \paren*{\frac{\Delta}{L^2}}^{\alpha}$. The second term has upper bound as
 \begin{align}
 \Bias\bracket*{(g_L(\tilde{N}) \land \phi'_{{\rm sup},\Delta})\lor \phi'_{{\rm inf},\Delta} - \phi_L(p) }
  =& \sqrt{\paren*{\Mean\bracket*{(g_L(\tilde{N}) \land \phi'_{{\rm sup},\Delta})\lor \phi'_{{\rm inf},\Delta} - \phi_L(p)}}^2} \\
  \le& \sqrt{\Mean\bracket*{\paren*{(g_L(\tilde{N}) \land \phi'_{{\rm sup},\Delta})\lor \phi'_{{\rm inf},\Delta} - \phi_L(p) }^2}}.
 \end{align}
 Since $\phi_L(p) \in [\phi'_{{\rm inf},\Delta},\phi'_{{\rm sup},\Delta}]$ for $p \in [0,\Delta]$, we have $\paren*{(g_L(\tilde{N}) \land \phi'_{{\rm sup},\Delta})\lor \phi'_{{\rm inf},\Delta} - \phi_L(p) }^2 \le \paren*{g_L(\tilde{N}) - \phi_L(p) }^2$. Thus, we have
 \begin{align}
   \Bias\bracket*{(g_L(\tilde{N}) \land \phi'_{{\rm sup},\Delta})\lor \phi'_{{\rm inf},\Delta} - \phi_L(p) } \le \sqrt{\Var\bracket*{g_L(\tilde{N}) - \phi_L(p)}}.
 \end{align}
\end{proof}
\begin{proof}[Proof of \cref{lem:poly-var}]
 It is obviously that truncation does not increase the variance, i.e.,
 \begin{align}
   \Var\bracket*{(g_L(\tilde{N}) \land \phi_{{\rm sup},\Delta})\lor \phi_{{\rm inf},\Delta} - \phi(p) } \le \Var\bracket*{g_L(\tilde{N}) - \phi(p) }.
 \end{align}
 Letting $\phi_\Delta(p) = \phi(\Delta x)$ and $a_0,...,a_L$ be coefficients of the optimal uniform approximation of $\phi_\Delta$ by degree-$L$ polynomials on $[0,1]$, we have $\sum_{m=0}^L \frac{\Delta^m a_m}{n^m}(\tilde{N})_m = g_L(\tilde{N})$. Then, since the standard deviation of sum of random variables is at most the sum of individual standard deviation, we have
 \begin{align}
   \Var\bracket*{g_L(\tilde{N}) - \phi(p) }
    \le \paren*{\sum_{m=1}^L \frac{\Delta^m \abs*{a_m}}{n^m}\sqrt{\Var \paren{\tilde{N}}_m } }^2.
 \end{align}
 From \citep{petrushev2011rational} and the fact from \cref{lem:est-bounded} that $\phi$ is bounded, there is a positive constant $C$ such that $\abs*{a_m} \le C2^{3L}$. From \citep{wu2016minimax}, $\Var \paren{\tilde{N}}_m$ is decreasing monotonously as $m$ increases, and for $X \sim \Poi(\lambda)$
 \begin{align}
   \Var \paren*{X}_m \le (\lambda m)^m\paren*{\frac{(2e)^{2\sqrt{\lambda m}}}{\pi\sqrt{\lambda m}} \lor 1}.
 \end{align}
 By the assumption of $p \le \Delta$ and monotonous, $\Var\paren{\tilde{N}}_m \le \Var\paren*{X}_m$ where $X \sim \Poi(\Delta n)$. Thus, we have
 \begin{align}
   \Var\bracket*{g_L(\tilde{N})}
    \lesssim& \paren*{\sum_{m=1}^L \frac{\Delta^m 2^{3L}}{n^{m}}\sqrt{ (\Delta n L )^m(2e)^{2\sqrt{\Delta n L}} } }^2 \\
    \le& \paren*{\sum_{m=1}^L \sqrt{\frac{\Delta^{3m} L^{m}}{n^{m}}} 2^{3L}(2e)^{\sqrt{\Delta n L}} }^2.
 \end{align}
 From the assumption $\frac{\Delta^3 L}{n} \le \frac{1}{2}$, we have
 \begin{align}
   & \paren*{\sum_{m=1}^L c^m\sqrt{\frac{\Delta^{3m} L^{m}}{n^{m}}} 2^{3L}(2e)^{\sqrt{\Delta n L}} }^2 \\
   \le& \paren*{2^{3L}(2e)^{\sqrt{\Delta n L}}\sum_{m=1}^L \paren*{\sqrt{\frac{\Delta^{3} L}{n}}}^m }^2 \\
   \le& \paren*{2^{3L}(2e)^{\sqrt{\Delta n L}}\paren*{\sqrt{\frac{\Delta^{3} L}{n}} + \int_1^L \paren*{\sqrt{\frac{\Delta^{3} L}{n}}}^x dx} }^2 \\
   \le& \paren*{2^{3L}(2e)^{\sqrt{\Delta n L}}\paren*{\sqrt{\frac{\Delta^{3} L}{n}} + \frac{2}{\ln\paren*{\frac{\Delta^{3} L}{n}}}\paren*{\paren*{\sqrt{\frac{\Delta^{3} L}{n}}}^L - \sqrt{\frac{\Delta^{3} L}{n}}}} }^2 \\
   =& \paren*{\sqrt{\frac{\Delta^{3} L}{n}}2^{3L}(2e)^{\sqrt{\Delta n L}}\paren*{1 + \frac{2}{\ln 2}\paren*{1 - \paren*{\sqrt{\frac{\Delta^{3} L}{n}}}^{L-1}}} }^2 \\
   \le& \frac{16\Delta^3L64^L(2e)^{2\sqrt{\Delta n L}}}{n} \\
   \lesssim& \frac{\Delta^3 L 64^L (2e)^{2\sqrt{\Delta n L }}}{n}.
 \end{align}
\end{proof}

The proofs of the upper bounds on the bias and the variance of the bias-corrected plugin estimator are obtained as follows.
\begin{proof}[Proof of \cref{lem:plugin-bias}]
Applying Taylor theorem yields
 \begin{align}
   &\Bias\bracket*{\bar\phi_{\Delta}\paren*{\frac{\tilde{N}}{n}} - \phi(p) } \\
    =& \begin{multlined}[t][\textwidth-1.5em]
     \abs[\Bigg]{\Mean\bracket[\Bigg]{\phi^{(1)}(p)\frac{\tilde{N} - np}{n} + \frac{\phi^{(2)}(p)}{2}\paren*{\frac{\tilde{N}}{n} - p}^2 - \frac{\tilde{N}}{2n}\bar\phi_{\Delta}^{(2)}\paren*{\frac{\tilde{N}}{n}} \\
     +  \frac{\phi^{(3)}(p)}{6}\paren*{\frac{\tilde{N}}{n} - p}^3 + R_3\paren*{\frac{\tilde{N}}{n};\bar\phi_\Delta,p} }}
    \end{multlined} \\
    \le& \frac{1}{2n}\abs*{\Mean\bracket*{p\phi^{(2)}(p) - \frac{\tilde{N}}{n}\bar\phi_{\Delta}^{(2)}\paren*{\frac{\tilde{N}}{n}}}} + \frac{p\abs*{\phi^{(3)}(p)}}{6n^2} + \abs*{\Mean\bracket*{R_3\paren*{\frac{\tilde{N}}{n};\bar\phi_\Delta,p}} }, \label{eq:plugin-bias-1}
\end{align}
where we use the fact that for $X \sim \Poi(\lambda)$, $\Mean[(X-\lambda)^2] = \lambda$, $\Mean[(X-\lambda)^3] = \lambda$, and $R_3(x;\bar\phi_\Delta,p)$ denotes the reminder term of the Taylor theorem. The first term of \cref{eq:plugin-bias-1} is bounded above as
\begin{align}
 &\frac{1}{2n}\abs*{\Mean\bracket*{p\phi^{(2)}(p) - \frac{\tilde{N}}{n}\bar\phi_{\Delta}^{(2)}\paren*{\frac{\tilde{N}}{n}}}} \\
  =& \frac{1}{2n}\abs*{\Mean\bracket*{ \phi^{(2)}(p)\paren*{p - \frac{\tilde{N}}{n}} +  \frac{\tilde{N}}{n}\paren*{\phi^{(2)}(p) - \bar\phi_{\Delta}^{(2)}\paren*{\frac{\tilde{N}}{n}}} }} \\
  =& \frac{1}{2n}\abs*{\Mean\bracket*{ \frac{\tilde{N}\phi^{(3)}(p)}{n}\paren*{ \frac{\tilde{N}}{n} - p} + \frac{\tilde{N}}{n}R_1\paren*{\frac{\tilde{N}}{n};\bar\phi_\Delta^{(2)},p}}} \\
  \le& \frac{p\abs*{\phi^{(3)}(p)}}{2n^2} + \abs*{\Mean\bracket*{\frac{\tilde{N}}{2n^2}R_1\paren*{\frac{\tilde{N}}{n};\bar\phi_\Delta^{(2)},p}}}, \label{eq:plugin-bias-2}
\end{align}
where the last line is obtained by using the fact that for $X \sim \Poi(\lambda)$, $\Mean[X(X-\lambda)] = \lambda$, and $R_1(x;\bar\phi_\Delta^{(2)},p)$ denotes the reminder term of the Taylor theorem. From \cref{lem:asm2lower}, the second term of \cref{eq:plugin-bias-1} and the first term of \cref{eq:plugin-bias-2} are bounded above as
\begin{align}
  \frac{p\abs*{\phi^{(3)}(p)}}{6n^2} \le& \frac{\alpha_2Wp^{\alpha-2}+c_3p}{6n^2} \lesssim \frac{1}{n^2\Delta^{2-\alpha}} + \frac{p}{n^2} \label{eq:plugin-bias-r1} \\
  \frac{p\abs*{\phi^{(3)}(p)}}{2n^2} \le& \frac{\alpha_2Wp^{\alpha-2}+c_3p}{2n^2} \lesssim \frac{1}{n^2\Delta^{2-\alpha}} + \frac{p}{n^2}. \label{eq:plugin-bias-r2}
\end{align}
The rest is to derive the upper bound on $\abs*{\Mean\bracket*{R_3\paren*{\frac{\tilde{N}}{n};\bar\phi_\Delta,p}}}$ and $\abs*{\Mean\bracket*{\frac{\tilde{N}}{2n^2}R_1\paren*{\frac{\tilde{N}}{n};\bar\phi_\Delta^{(2)},p}}}$. Let $\hat{p} = \frac{\tilde{N}}{n}$. From the mean value theorem, letting a function $G(x)$ be continuous on the closed interval and differentiable with non-vanishing derivative on the open interval between $p$ and $\hat{p}$, there exists $\xi$ between $p$ and $\hat{p}$ such that
\begin{align}
  R_3\paren*{\hat{p};\bar\phi_\Delta,p} =& \frac{\bar\phi_\Delta^{(4)}(\xi)}{6}(\hat{p}-\xi)^3\frac{G(\hat{p}) - G(p)}{G^{(1)}(\xi)}.
\end{align}
Define $G(x) = \frac{1}{x^2}(\hat{p}-x)^4$. Then, there exists $\xi$ such that
\begin{align}
  R_3\paren*{\hat{p};\bar\phi_\Delta,p} =& -\frac{\bar\phi_\Delta^{(4)}(\xi)}{12}(\hat{p}-\xi)^3\frac{\xi^3(\hat{p}-p)^4}{p^2(\xi+\hat{p})(\hat{p}-\xi)^3} \\
  =& -\frac{\xi^3\bar\phi_\Delta^{(4)}(\xi)}{12p^2(\xi+\hat{p})}(\hat{p}-p)^4 \label{eq:r3-phi-reminder}
\end{align}
Thus, we have
\begin{align}
  \abs*{\Mean\bracket*{R_3\paren*{\hat{p};\bar\phi_\Delta,p}}}
   \le& \Mean\bracket*{\frac{\xi^3\abs*{\bar\phi_\Delta^{(4)}(\xi)}}{12p^2(\xi+\hat{p})}(\hat{p}-p)^4} \\
   \le& \frac{1}{12p^2}\Mean\bracket*{\xi^2\abs*{\bar\phi_\Delta^{(4)}(\xi)}(\hat{p}-p)^4} \\
   \le& \frac{\sup_{\xi \in \RealSet_+}\xi^2\abs*{\bar\phi_\Delta^{(4)}(\xi)}}{12p^2}\Mean\bracket*{(\hat{p}-p)^4} \\
   \le& \paren*{\frac{1}{4n^2} + \frac{1}{12pn^3}}\sup_{\xi \in \RealSet_+}\xi^2\abs*{\bar\phi_\Delta^{(4)}(\xi)},
\end{align}
where we use the fact that for $X \sim \Poi(\lambda)$, $\Mean[X^4] = 3\lambda^2+\lambda$. For $\xi \in (\frac{\Delta}{2},\Delta)$, we have
\begin{align}
  & \abs*{H^{(4)}\paren*{\xi;\phi,\frac{\Delta}{2},\Delta}} \\
  =& \abs*{\sum_{m=1}^4\phi^{(m)}(\Delta)\sum_{i=0}^4\binom{4}{i}\frac{1}{((m-i)\lor 0)!}(\xi-\Delta)^{(m-i)\lor 0} \sum_{\ell=0}^{4-m}\frac{5}{5+\ell}\Beta^{(4-i)}_{\ell,5+\ell}\paren*{\frac{\xi-\Delta}{\Delta/2}} } \\
  =& \begin{multlined}[t][\textwidth-1.5em]
   \abs[\Bigg]{\sum_{m=1}^4\phi^{(m)}(\Delta)\sum_{i=0}^4\binom{4}{i}\frac{1}{((m-i)\lor 0)!}(\xi-\Delta)^{(m-i)\lor 0}\\ \sum_{\ell=0}^{4-m}\frac{5(5+\ell)!}{(5+\ell)(1+\ell+i)}\sum_{j=0}^{(4-i)\land\ell}(-1)^j\binom{4-i}{j}\Beta_{\ell-j,1+\ell+i}\paren*{\frac{\xi-\Delta}{\Delta/2}} },
  \end{multlined}
\end{align}
where we use $\Beta^{(1)}_{\nu,n}(x)=n(\Beta_{\nu-1,n-1}(x)-\Beta_{\nu,n-1}(x))$. Since $0 \le \Beta_{\nu,n}(x) \le \Beta_{\nu,n}(\nu/n) \le 1$, there is a universal constant $c > 0$ such that for any $i=0,...,4$
\begin{align}
  \abs*{\sum_{\ell=0}^{4-m}\frac{5(5+\ell)!}{(5+\ell)(1+\ell+i)}\sum_{j=0}^{(4-i)\land\ell}(-1)^j\binom{4-i}{j}\Beta_{\ell-j,1+\ell+i}\paren*{\frac{\xi-\Delta}{\Delta/2}}} \le c.
\end{align}
Thus, we have from \cref{lem:asm2lower} that
\begin{align}
  & \xi^2\abs*{H^{(4)}\paren*{\xi;\phi,\frac{\Delta}{2},\Delta}} \\
  \le& \sum_{m=1}^4\abs*{\phi^{(m)}(\Delta)}\sum_{i=0}^4\binom{4}{i}\frac{c}{((m-i)\lor 0)!}\abs*{\xi^2\paren*{\xi-\Delta}^{(m-i)\lor 0}} \\
  \le& \sum_{m=1}^4\paren*{\alpha_{m-1}W\Delta^{\alpha-m}+c_m}\sum_{i=0}^4\binom{4}{i}\frac{c}{((m-i)\lor 0)!}\Delta^{(2+m-i)\lor 2} \\
  =&  \sum_{m=1}^4\sum_{i=0}^4\binom{4}{i}\frac{c}{((m-i)\lor 0)!}\paren*{\alpha_{m-2}W\Delta^{(2+\alpha-i)\lor(2+\alpha-m)} + c_m\Delta^{(2+m-i)\lor 2}} \\
  \lesssim& \Delta^{\alpha-2}.
\end{align}
Similarly, for $\xi \in (1,2)$
\begin{align}
  & \xi^2\abs*{H^{(4)}\paren*{\xi;\phi,2,1}} \\
  \le& \sum_{m=1}^4\abs*{\phi^{(m)}(1)}\sum_{i=0}^4\binom{4}{i}\frac{c}{((m-i)\lor 0)!}\abs*{\xi^2\paren*{\xi-1}^{(m-i)\lor 0}} \\
  \le& \sum_{m=1}^4\paren*{\alpha_{m-1}W+c_m}\sum_{i=0}^4\binom{4}{i}\frac{4c}{((m-i)\lor 0)!} \\
  \lesssim& 1.
\end{align}
For $\xi \in [\Delta,1]$, we have from \cref{lem:asm2lower} that
\begin{align}
  \abs*{\xi^2\phi^{(4)}(\xi)} \le \alpha_1W\xi^{\alpha-2} + c_4\xi^2 \lesssim \Delta^{\alpha-2}.
\end{align}
Since $\bar\phi_\Delta(\xi) = 0$ for $\xi \in [0,\Delta/2]$ and $\xi \ge 2$ by the construction, we have
\begin{align}
  \sup_{\xi \in \RealSet_+}\xi^2\abs*{\bar\phi_\Delta^{(4)}(\xi)} \lesssim \Delta^{\alpha-2}. \label{eq:reminder-xi}
\end{align}
Thus, we have
\begin{align}
  \abs*{\Mean\bracket*{R_3\paren*{\hat{p};\bar\phi_\Delta,p}}} \lesssim \frac{1}{n^2\Delta^{2-\alpha}} + \frac{1}{n^3\Delta^{3-\alpha}}. \label{eq:plugin-bias-r3}
\end{align}
Define $G(x) = \frac{1}{2}(\frac{\hat{p}}{x}-1)^2$. Then, the mean value theorem stats that there exists $\xi$ such that
\begin{align}
  R_1(\hat{p};\bar\phi^{(2)}_\Delta,p) =& \frac{\bar\phi_\Delta^{(4)}(\xi)}{2}(\hat{p}-\xi)\frac{\xi^2(\frac{\hat{p}}{p}-1)^2}{\hat{p}(\frac{\hat{p}}{\xi}-1)} \\
  =& \frac{\bar\phi_\Delta^{(4)}(\xi)}{2}\frac{\xi^3(\hat{p}-p)^2}{p^2\hat{p}}.
\end{align}
Thus, we have
\begin{align}
  \abs*{\Mean\bracket*{\frac{\hat{p}}{2n}R_1(\hat{p};\bar\phi^{(2)}_\Delta,p)}}
   \le& \Mean\bracket*{\frac{\abs*{\bar\phi_\Delta^{(4)}(\xi)}}{4n}\frac{\xi^3(\hat{p}-p)^2}{p^2}} \\
   \le& \frac{\sup_{\xi\in\RealSet_+}\xi^3\abs*{\bar\phi_\Delta^{(4)}(\xi)}}{4np^2}\Mean\bracket*{(\hat{p}-p)^2} \\
   =& \frac{1}{4n^2p}\sup_{\xi\in\RealSet_+}\xi^3\abs*{\bar\phi_\Delta^{(4)}(\xi)}.
\end{align}
In the similar manner of \cref{eq:reminder-xi}, we have
\begin{align}
  \sup_{\xi\in\RealSet_+}\xi^3\abs*{\bar\phi_\Delta^{(4)}(\xi)} \lesssim \Delta^{\alpha-1}.
\end{align}
Thus, we have
\begin{align}
  \abs*{\Mean\bracket*{\frac{\hat{p}}{2n}R_1(\hat{p};\bar\phi^{(2)}_\Delta,p)}}
   \lesssim \frac{1}{n^2p\Delta^{1-\alpha}} \le \frac{1}{n^2\Delta^{2-\alpha}}. \label{eq:plugin-bias-r4}
\end{align}
By the assumption $\Delta \gtrsim \frac{1}{n}$, we have $\frac{1}{n^3\Delta^{3-\alpha}} \lesssim \frac{1}{n^2\Delta^{2-\alpha}}$. Assembling \cref{eq:plugin-bias-r1,eq:plugin-bias-r2,eq:plugin-bias-r3,eq:plugin-bias-r4} gives the desired result.
\end{proof}
\begin{proof}[Proof of \cref{lem:plugin-var}]
From the property of the variance and the triangle inequality, we have
\begin{align}
  &\Var\bracket*{\bar\phi_\Delta\paren*{\frac{\tilde{N}}{n}} - \frac{\tilde{N}}{2n^2}\bar\phi_\Delta^{(2)}\paren*{\frac{\tilde{N}}{n}} - \phi(p) + \frac{p\phi^{(2)}(p)}{2n}} \\
  \le& \Mean\bracket*{\paren*{\bar\phi_\Delta\paren*{\frac{\tilde{N}}{n}} - \frac{\tilde{N}}{2n^2}\bar\phi_\Delta^{(2)}\paren*{\frac{\tilde{N}}{n}} - \phi(p) + \frac{p\phi^{(2)}(p)}{2n}}^2 } \\
  \le& 2\Mean\bracket*{\paren*{\bar\phi_\Delta\paren*{\frac{\tilde{N}}{n}} - \phi(p)}^2} + 2\Mean\bracket*{\paren*{\frac{\tilde{N}}{2n^2}\bar\phi_\Delta^{(2)}\paren*{\frac{\tilde{N}}{n}} - \frac{p\phi^{(2)}(p)}{2n}}^2 }. \label{eq:plugin-var-1}
\end{align}
Applying Taylor theorem to the first term of \cref{eq:plugin-var-1} gives
 \begin{align}
  &\abs*{\bar\phi_\Delta\paren*{\frac{\tilde{N}}{n}} - \phi(p)} \\
   =& \abs*{\phi^{(1)}(p)\paren*{\frac{\tilde{N}}{n} - p} + \frac{\phi^{(2)}(p)}{2}\paren*{\frac{\tilde{N}}{n} - p}^2 + \frac{\phi^{(3)}(p)}{6}\paren*{\frac{\tilde{N}}{n} - p}^3 + R_3\paren*{\frac{\tilde{N}}{n};\bar\phi_\Delta,p} },
 \end{align}
 where $R_3\paren*{\frac{\tilde{N}}{n};\bar\phi_\Delta,p}$ denotes the reminder term of the Taylor theorem. From the triangle inequality, we have
\begin{multline}
 \paren*{\bar\phi_\Delta\paren*{\frac{\tilde{N}}{n}} - \phi(p)}^2
  = 4\paren*{\phi^{(1)}(p)}^2\paren*{\frac{\tilde{N}}{n} - p}^2 + \paren*{\phi^{(2)}(p)}^2\paren*{\frac{\tilde{N}}{n} - p}^4 \\ + \frac{\paren*{\phi^{(3)}(p)}^2}{9}\paren*{\frac{\tilde{N}}{n} - p}^6 + 4\paren*{R_3\paren*{\frac{\tilde{N}}{n};\bar\phi_\Delta,p}}^2. \label{eq:plugin-var-2}
\end{multline}
The central moments for $X \sim \Poi(\lambda)$ are given as $\Mean[(X-\lambda)^2] = \lambda$,$\Mean[(X-\lambda)^4] = 3\lambda^2+\lambda$, and $\Mean[(X-\lambda)^6] = 15\lambda^3+25\lambda^2+\lambda$. \cref{lem:asm2lower}, the triangle inequality and the assumption $\frac{1}{n} \gtrsim \Delta$, the expectation of the first three terms in \cref{eq:plugin-var-2} have upper bounds as
\begin{align}
 \Mean\bracket*{4\paren*{\phi^{(1)}(p)}^2\paren*{\frac{\tilde{N}}{n} - p}^2} \le \frac{8W^2p^{2\alpha-1} + 8c_1^2p}{n}
  \lesssim \frac{p^{2\alpha-1}}{n} + \frac{p}{n},
\end{align}
\begin{align}
 \Mean\bracket*{\paren*{\phi^{(2)}(p)}^2\paren*{\frac{\tilde{N}}{n} - p}^4} \le& \paren*{2\alpha_1^2W^2p^{2\alpha-4} + c_2^2}\paren*{\frac{3p^2}{n^2} + \frac{p}{n^3}} \\
  \lesssim& \frac{p^{2\alpha-1}}{n^2\Delta} + \frac{p^{2\alpha-1}}{n^3\Delta^2} + \frac{p}{n^2} \\
  \lesssim& \frac{p^{2\alpha-1}}{n} + \frac{p}{n^2},
\end{align}
and
\begin{align}
 \Mean\bracket*{\frac{\paren*{\phi^{(3)}(p)}^2}{9}\paren*{\frac{\tilde{N}}{n} - p}^6} \le& \paren*{2\alpha_2^2W^2p^{2\alpha-6} + c_3^2}\paren*{\frac{15p^3}{n^3} + \frac{25p^2}{n^4} + \frac{p}{n^5}} \\
  \lesssim& \frac{p^{2\alpha-1}}{n^3\Delta^2} + \frac{p^{2\alpha-1}}{n^4\Delta^3} + \frac{p^{2\alpha-1}}{n^5\Delta^4} + \frac{p}{n^3} \\
  \lesssim& \frac{p^{2\alpha-1}}{n} + \frac{p}{n^3}.
\end{align}
From \cref{eq:r3-phi-reminder}, there exists $\xi$ between $p$ and $\hat{p}$ such that
\begin{align}
  & 4\Mean\bracket*{\paren*{R_3\paren*{\frac{\tilde{N}}{n};\bar\phi_\Delta,p}}^2} \\
  =& 4\Mean\bracket*{\paren*{\frac{\xi^3\bar\phi_\Delta^{(4)}(\xi)}{12p^2(\xi+\hat{p})}(\hat{p}-p)^4 }^2} \\
  \le& \frac{\sup_{\xi \in \RealSet_+}\abs*{\xi^2\bar\phi_\Delta^{(4)}(\xi)}^2}{36p^4}\Mean\bracket*{(\hat{p}-p)^8} \\
  \le& \paren*{\frac{105}{36n^4} + \frac{490}{36n^5p} + \frac{119}{36n^6p^2} + \frac{1}{36n^7p}}\sup_{\xi \in \RealSet_+}\abs*{\xi^2\bar\phi_\Delta^{(4)}(\xi)}^2,
\end{align}
\sloppy where we use $\Mean[(X-\lambda)^8]=105\lambda^4+490\lambda^3+119\lambda^2+\lambda$ for $X\sim\Poi(\lambda)$.
Since $\sup_{\xi \in \RealSet_+}\abs*{\xi^2\bar\phi_\Delta^{(4)}(\xi)}^2 \lesssim \Delta^{2\alpha-4}$ from \cref{eq:reminder-xi} and $\Delta\gtrsim\frac{1}{n}$ by the assumption, we have
\begin{align}
  & 4\Mean\bracket*{\paren*{R_3\paren*{\frac{\tilde{N}}{n};\bar\phi_\Delta,p}}^2} \\
  \lesssim& \frac{1}{n^4\Delta^{4-2\alpha}} + \frac{1}{n^5\Delta^{5-2\alpha}} + \frac{1}{n^6\Delta^{6-2\alpha}} + \frac{1}{n^7\Delta^{7-2\alpha}} \\
  \lesssim& \frac{1}{n^4\Delta^{4-2\alpha}}.
\end{align}
Letting $g(p)=p\bar\phi_\Delta^{(2)}(p)$, application of the Taylor theorem to the second term of \cref{eq:plugin-var-1} yields
\begin{align}
 \abs*{\frac{\tilde{N}}{2n^2}\bar\phi_\Delta^{(2)}\paren*{\frac{\tilde{N}}{n}} - \frac{p\phi^{(2)}(p)}{2n}}
  \le \frac{1}{2n}\abs*{\paren*{\phi{(2)}(p) + p\phi{(3)}(p)}\paren*{\frac{\tilde{N}}{n} - p} + R_1\paren*{\frac{\tilde{N}}{n};g,p}}.
\end{align}
The triangle inequality and $\Mean[(X-\lambda)^2]=\lambda$ for $X\sim\Poi(\lambda)$ give
\begin{align}
  &\Mean\bracket*{\paren*{\frac{\tilde{N}}{2n^2}\bar\phi_\Delta^{(2)}\paren*{\frac{\tilde{N}}{n}} - \frac{p\phi^{(2)}(p)}{2n}}^2} \\
  \le& \frac{(\phi{(2)}(p))^2 + (p\phi{(3)}(p))^2}{n^2}\Mean\bracket*{\paren*{\frac{\tilde{N}}{n} - p}^2} + \frac{1}{2n^2}\Mean\bracket*{\paren*{R_1\paren*{\frac{\tilde{N}}{n};g,p}}^2} \\
  =& \frac{p(\phi{(2)}(p))^2 + p(p\phi{(3)}(p))^2}{n^3} + \frac{1}{2n^2}\Mean\bracket*{\paren*{R_1\paren*{\frac{\tilde{N}}{n};g,p}}^2}.
\end{align}
Applying \cref{lem:asm2lower} gives
\begin{align}
  &\frac{p(\phi{(2)}(p))^2 + p(p\phi{(3)}(p))^2}{n^3} \\
  \le& \frac{1}{n^3}\paren*{2\alpha_1^2W^2p^{2\alpha-3} + 2pc_2^2 + 2\alpha_1^2W^2p^{2\alpha-3} + 2p^3c_3^2} \\
  \lesssim& \frac{p^{2\alpha-1}}{n^3\Delta^2} + \frac{p}{n^3} \\ \lesssim& \frac{p^{2\alpha-1}}{n} + \frac{p}{n^3}.
\end{align}
Let $\hat{p}=\frac{\tilde{N}}{n}$ and $G(x)=\frac{1}{x}(\hat{p}-x)^2$. Then, the mean value theorem gives that there exists $\xi$ between $p$ and $\hat{p}$ such that
\begin{align}
  \Mean\bracket*{\paren*{R_1\paren*{\hat{p};g,p}}^2}
   =& \Mean\bracket*{\paren*{g^{(1)}(\xi)(\hat{p}-\xi)\frac{G(\hat{p})-G(p)}{G^{(1)}(\xi)}}^2} \\
   =& \Mean\bracket*{\paren*{g^{(1)}(\xi)\frac{\xi^2(\hat{p}-p)^2}{p(\hat{p}+\xi)}}^2} \\
   \le& \paren*{\frac{3}{n^2} + \frac{1}{n^3p}}\sup_{\xi \in \RealSet_+}\abs*{\xi g^{(1)}(\xi)}^2 \\
   \le& \paren*{\frac{3}{n^2} + \frac{1}{n^3p}}\sup_{\xi \in \RealSet_+}\abs*{2\xi\bar\phi_\Delta^{(3)}(\xi) + \xi^2\bar\phi_\Delta^{(4)}(\xi)}^2\\
   \le& \paren*{\frac{3}{n^2} + \frac{1}{n^3p}}\paren*{2\sup_{\xi \in \RealSet_+}\abs*{\xi\bar\phi_\Delta^{(3)}(\xi)}^2 + 2\sup_{\xi \in \RealSet_+}\abs*{\xi^2\bar\phi_\Delta^{(4)}(\xi)}^2}.
\end{align}
In the similar manner of \cref{eq:reminder-xi}, we have
\begin{align}
  \sup_{\xi \in \RealSet_+}\abs*{\xi\bar\phi_\Delta^{(3)}(\xi)}^2 \lesssim \Delta^{2\alpha-4}, \textand \sup_{\xi \in \RealSet_+}\abs*{\xi^2\bar\phi_\Delta^{(4)}(\xi)}^2 \lesssim \Delta^{2\alpha-4}.
\end{align}
Thus, we have
\begin{align}
  \frac{1}{2n^2}\Mean\bracket*{\paren*{R_1\paren*{\frac{\tilde{N}}{n};g,p}}^2} \lesssim \frac{1}{n^4\Delta^{4-2\alpha}} + \frac{1}{n^5\Delta^{5-2\alpha}} \lesssim \frac{1}{n^4\Delta^{4-2\alpha}}.
\end{align}
Consequently, we get the bound of the variance as
\begin{align}
  \frac{p^{2\alpha-1}}{n} + \frac{1}{n^4\Delta^{4-2\alpha}} + \frac{p}{n}.
\end{align}
\end{proof}

\section{Proof of \cref{prop:no-estimator}}\label{sec:proof-prop-no-estimator}
\begin{proof}[Proof of \cref{prop:no-estimator}]
  It is obviously that if the output domain of $\phi$ is unbounded, i.e., there is a point $p_0 \in [0,1]$ such that $\abs*{\phi(p)} \to \infty$ as $p\to p_0$, there is no consistent estimator. Letting $p_0 = \paren*{\frac{W}{W\lor -c'_1}}$, $\phi^{(1)}(p)$ has same sign in $(0,p_0]$. Thus, for any $p \in (0,p_0]$, we have
  \begin{align}
    \abs*{\phi(p) - \phi(p_0)} =& \abs*{\int_{p_0}^p \phi^{(1)}(x) dx} \\
    =& \int_p^{p_0} \abs*{\phi^{(1)}(x)} dx \\
    \ge& W \int_p^{p_0} p^{-1} dx + c'_1(p_0 - p) \\
    \ge& W \ln(p_0/p) + c'_1(p_0-p).
  \end{align}
  Since $\abs*{\phi(p) - \phi(p_0)} \to \infty$ as $p \to 0$, $\phi$ is unbounded and we gets the claim.
\end{proof}

\section{Additional Lemmas}
Here, we introduce some additional lemmas and their proofs.
\begin{lemma}\label{lem:asm2lower}
  For a non-integer $\alpha$, let $\phi$ be a $m$ times continuously differentiable function on $(0,1]$ where $m \ge 1 + \alpha$. Suppose that there exist finite constants $W > 0$, $c_m$ and $c'_m$ such that
  \begin{align}
    \abs*{\phi^{(m)}(p)} \le \alpha_{m-1} Wp^{\alpha-m} + c_m, \textand \abs*{\phi^{(m)}(p)} \ge \alpha_{m-1} Wp^{\alpha-m} + c'_m.
  \end{align}
  Then, there exists finite constants $c_{m-1}$ and $c'_{m-1}$ such that
  \begin{align}
    \abs*{\phi^{(m-1)}(p)} \le \alpha_{m-2} Wp^{\alpha-m+1} + c_{m-1}, \textand \abs*{\phi^{(m-1)}(p)} \ge \alpha_{m-2} Wp^{\alpha-m+1} + c'_{m-1},
  \end{align}
  where $\alpha_{0} = 1$ and $\alpha_i = \prod_{j=1}^i(j-\alpha)$ for $i = 1,...,m$.
\end{lemma}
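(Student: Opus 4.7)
\textbf{Proof plan for \cref{lem:asm2lower}.} My plan is to recover a pointwise bound on $|\phi^{(m-1)}|$ by integrating the hypothesized bound on $|\phi^{(m)}|$ from $p$ up to a fixed reference point, exploiting the algebraic identity $\alpha_{m-1} = (m-1-\alpha)\,\alpha_{m-2}$ that is built into the definition $\alpha_i = \prod_{j=1}^{i}(j-\alpha)$.

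First I localize near $0$. Since $\alpha$ is non-integer and $m \ge 1+\alpha$, the exponent $\alpha-m$ is strictly negative and (in the regime relevant to the paper, where $\alpha_{m-1}W>0$) the lower bound $|\phi^{(m)}(p)| \ge \alpha_{m-1}Wp^{\alpha-m}+c'_m$ forces $|\phi^{(m)}(p)|\to+\infty$ as $p\to 0^{+}$. Hence there exists $p_0\in(0,1]$ on which $\phi^{(m)}$ never vanishes, and by continuity $\phi^{(m)}$ has constant sign throughout $(0,p_0]$. For such $p$, the fundamental theorem of calculus gives
\begin{align}
\phi^{(m-1)}(p) = \phi^{(m-1)}(p_0) - \int_{p}^{p_0}\phi^{(m)}(x)\,dx,
\end{align}
and the constant-sign property lets me pull the absolute value inside the integral. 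Substituting the assumed pointwise bounds on $|\phi^{(m)}|$ and integrating $x^{\alpha-m}$ explicitly yields
\begin{align}
\int_{p}^{p_0}|\phi^{(m)}(x)|\,dx \le \alpha_{m-2}W\bigl(p^{\alpha-m+1}-p_0^{\alpha-m+1}\bigr) + c_m(p_0-p),
\end{align}
together with the matching $\ge$ inequality that replaces $c_m$ by $c'_m$.

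Combining these via the triangle and reverse triangle inequalities applied to the FTC identity gives, for $p\in(0,p_0]$,
\begin{align}
|\phi^{(m-1)}(p)| \le \alpha_{m-2}Wp^{\alpha-m+1} + \bigl[\,|\phi^{(m-1)}(p_0)| - \alpha_{m-2}Wp_0^{\alpha-m+1} + c_m(p_0-p)\,\bigr],
\end{align}
and the analogous lower bound with $c'_m$. The bracketed correction is uniformly bounded as $p$ ranges over $(0,p_0]$, so it can be absorbed into a finite constant $c_{m-1}$ (and a finite $c'_{m-1}$ on the lower-bound side). To extend to the remaining region $(p_0,1]$, I use that both $\phi^{(m-1)}$ and $p^{\alpha-m+1}$ are continuous on the compact interval $[p_0,1]$ and therefore bounded there, so further enlarging $c_{m-1}$ and decreasing $c'_{m-1}$ makes both inequalities hold throughout $(0,1]$. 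The main obstacle is the sign bookkeeping: I must ensure that $\phi^{(m)}$ has constant sign in a neighbourhood of $0$ (so that the absolute value commutes with integration), and I must invoke $\alpha_{m-1}/(m-1-\alpha) = \alpha_{m-2}$ with the correct sign, which is legitimate precisely because $m-1-\alpha\ne 0$ by the non-integrality of $\alpha$.
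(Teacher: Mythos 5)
Your proposal is correct and follows essentially the same route as the paper's proof: fix a reference point near $0$ below which the lower bound forces $\phi^{(m)}$ to have constant sign, apply the fundamental theorem of calculus, integrate the hypothesized bounds using $\alpha_{m-1}=(m-1-\alpha)\alpha_{m-2}$, and finish with the triangle and reverse triangle inequalities. The only cosmetic difference is that the paper covers all of $(0,1]$ by splitting the integral at $p\land p_m$, whereas you treat $(p_0,1]$ by a separate compactness/boundedness argument, which is equally valid.
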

\begin{proof}[Proof of \cref{lem:asm2lower}]
 Let $p_m = \paren*{\frac{\alpha_{m-1}W}{\alpha_{m-1}W \lor -c'_m}}^{1/(m-\alpha)}$. Then, $\abs*{\phi^{(m)}(p)} > 0$ for $p \in (0,p_m)$. From continuousness of $\phi^{(m)}$, $\phi^{(m)}(p)$ has same sign in $p \in (0,p_m]$, and thus we have either $\phi^{(m)}(p) \ge \alpha_{m-1}Wp^{\alpha-m} + c'_m$ or $\phi^{(m)}(p) \le -\alpha_{m-1}Wp^{\alpha-m} - c'_m$ in $p \in (0,p_m]$.  Since $\phi^{(m-1)}$ is absolutely continuous on $(0,1]$, we have for any $p \in (0,1]$
 \begin{align}
   \phi^{(m-1)}(p) = \phi^{(m-1)}(p_m) + \int_{p_m}^p \phi^{(m)}(x) dx.
 \end{align}
 The absolute value of the second term has an upper bound as
 \begin{align}
  \abs*{\int_{p_m}^p \phi^{(m)}(x) dx} \le& \abs*{\int_{p_m}^p \alpha_{m-1}Wx^{\alpha-m} + c_m dx} \\
   \le& \abs*{\alpha_{m-2} W \paren*{p_m^{\alpha-m+1}-p^{\alpha-m+1}} + c_m\paren*{p - p_m}} \\
   \le& \alpha_{m-2} W p^{\alpha-m+1} + \abs*{\alpha_{m-2} W p_m^{\alpha-m+1} + c_m(p_m-p)} \\
   \le& \alpha_{m-2} W p^{\alpha-m+1} + \alpha_{m-2} W p_m^{\alpha-m+1} + \abs*{c_m}.
 \end{align}
 Also, we have a lower bound of the second term as
 \begin{align}
  \abs*{\int_{p_m}^p \phi^{(m)}(x) dx}
   =& \abs*{\int_{p_m}^{p \land p_m} \phi^{(m)}(x) dx + \int_{p \land p_m}^p \phi^{(m)}(x) dx}\\
   \ge& \abs*{\int_{p_m}^{p \land p_m} \alpha_{m-1}Wx^{\alpha-m} + c'_m dx} - \abs*{\int_{p \land p_m}^{p} \alpha_{m-1}W p_m^{\alpha-m} + c_m dx} \\
   \ge& \begin{multlined}[t]
    \abs*{\alpha_{m-2} W\paren*{p_m^{\alpha-m+1} - \paren*{p \land p_m}^{\alpha-m+1}} +  c'_m((p \land p_m) - p_m)} \\ - \abs*{\paren*{\alpha_{m-1}W p_m^{\alpha-m} + c_m}(p - (p \land p_m))}
   \end{multlined} \\
   \ge& \begin{multlined}[t]
    \alpha_{m-2} W\paren*{p \land p_m}^{\alpha-m+1} - \alpha_{m-2} Wp_m^{\alpha-m+1} - \abs*{c'_m(p_m - (p \land p_m))} \\ - \paren*{\alpha_{m-1}W p_m^{\alpha-m} + c_m}(p - (p \land p_m))
   \end{multlined} \\
   \ge& \begin{multlined}[t]
    \alpha_{m-2} Wp^{\alpha-m+1} -  \alpha_{m-2} Wp_m^{\alpha-m+1} - \abs*{c'_m}p_m \\ - \paren*{\alpha_{m-1} W p_m^{\alpha-m} + c_m}(1 - p_m)
   \end{multlined}.
 \end{align}
 Applying the triangle inequality and the reverse triangle inequality gives
 \begin{align}
    \abs*{\int_{p_m}^p \phi^{(m)}(x) dx} - \abs*{\phi^{(m-1)}(p_m)}\le \abs*{\phi^{(m-1)}(p)} \le \abs*{\int_{p_m}^p \phi^{(m)}(x) dx} + \abs*{\phi^{(m-1)}(p_m)}.
 \end{align}
 \sloppy Thus, setting $c_{m-1} = \alpha_{m-2} W p_m^{\alpha-m+1} + \abs*{c_m} + \abs*{\phi^{(m-1)}(p_m)}$ and $c'_{m-1} = -\alpha_{m-2} Wp_m^{\alpha-m+1} - \abs*{c'_m}p_m - \paren*{\alpha_{m-1} W p_m^{\alpha-m} + c_m}(1 - p_m) - \abs*{\phi^{(m-1)}(p_m)}$ yields the claim.
\end{proof}

\begin{lemma}\label{lem:est-bounded}
  Under \cref{asm:bound} or \cref{asm:bound2}, for any $p, p' \in [0,1]$
  \begin{align}
    \abs*{\phi(p) - \phi(p')} \le \frac{W}{\alpha}\abs*{p-p'}^{\alpha} + \abs*{c_1(p-p')}.
  \end{align}
\end{lemma}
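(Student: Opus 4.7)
The plan is to reduce the statement to a first-derivative bound and then integrate. First I would apply \cref{lem:asm2lower} iteratively: under \cref{asm:bound2} one application, and under \cref{asm:bound} three applications, yields finite constants so that
\begin{align}
  \abs*{\phi^{(1)}(p)} \le W p^{\alpha-1} + c_1 \for p \in (0,1],
\end{align}
where we may replace $c_1$ by $\abs*{c_1}$ without loss of generality since the left-hand side is nonnegative.

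Next, fix $0 < p < p' \le 1$. Since $\phi^{(1)}$ is continuous on $(0,1]$ and integrable near $0$ (because $\alpha > 0$ makes $x^{\alpha-1}$ integrable on $(0,1]$), $\phi$ is absolutely continuous on $[p,p']$, so the fundamental theorem of calculus gives
\begin{align}
  \abs*{\phi(p')-\phi(p)} \le \int_{p}^{p'}\abs*{\phi^{(1)}(x)}\,dx \le \int_{p}^{p'}\paren*{Wx^{\alpha-1}+\abs*{c_1}}\,dx = \frac{W}{\alpha}\paren*{p'^{\alpha}-p^{\alpha}} + \abs*{c_1}(p'-p).
\end{align}
Then I would invoke the standard subadditivity inequality $p'^{\alpha}-p^{\alpha} \le (p'-p)^{\alpha}$ valid for $0 \le p \le p'$ and $\alpha \in (0,1)$ (which follows from concavity of $x \mapsto x^{\alpha}$, or directly from $(a+b)^{\alpha} \le a^{\alpha}+b^{\alpha}$ with $a=p'-p$ and $b=p$). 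This yields the desired estimate for all $0 < p, p' \le 1$.

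Finally, to handle the boundary case $p = 0$ (or $p' = 0$), I observe that the inequality just proved shows that $\phi$ is uniformly continuous on $(0,1]$, so it admits a continuous extension to $[0,1]$ (which must coincide with the given value $\phi(0)$, as $\phi$ is defined on $[0,1]$). Taking $p \to 0^+$ in the inequality then preserves it in the limit, giving the statement for all $p, p' \in [0,1]$. The only nonroutine step is verifying that \cref{lem:asm2lower} can be iterated to reach $m=1$ while keeping all constants finite, but this is immediate from its statement since each application only produces new finite constants $c_{m-1}$, $c'_{m-1}$.
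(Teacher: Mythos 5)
Your proposal is correct and follows essentially the same route as the paper: reduce to the first-derivative bound $\abs*{\phi^{(1)}(x)} \le Wx^{\alpha-1} + \abs*{c_1}$ via \cref{lem:asm2lower}, integrate using absolute continuity, apply the $\alpha$-H\"older (subadditivity) property of $x^{\alpha}$, and handle $p=0$ by a limiting argument. Your version is in fact slightly more careful than the paper's in spelling out the iteration of \cref{lem:asm2lower} and the replacement of $c_1$ by $\abs*{c_1}$.
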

\begin{proof}[Proof of \cref{lem:est-bounded}]
  We can assume $p' \le p$ without loss of generality. The absolute continuously of $\phi$ gives
  \begin{align}
    \abs*{\phi(p) - \phi(p')} = \abs*{\int_{p'}^p \phi^{(1)}(x) dx} \le \abs*{\int_{p'}^p \abs*{\phi^{(1)}(x)} dx}.
  \end{align}
  From \cref{lem:asm2lower}, we have
  \begin{align}
    \abs*{\phi(p) - \phi(p')} \le& \abs*{\int_{p'}^p (Wx^{\alpha-1} + c_1) dx} \\
     =& \abs*{\frac{W}{\alpha}\paren*{p^{\alpha} - p'^{\alpha}} + c_1(p-p')} \\
     \le& \frac{W}{\alpha}\abs*{p-p'}^{\alpha} + \abs*{c_1(p-p')},
  \end{align}
  where the last line is obtained since a function $x^{\alpha}$ for $\alpha \in (0,1)$ is $\alpha$-Holder continuous. This is valid for the case $p' = 0$. Indeed,
  \begin{align}
    \abs*{\phi(p) - \phi(0)} =& \lim_{p' \to 0}\abs*{\phi(p) - \phi(p')} \\
     \le& \lim_{p' \to 0} \paren*{\frac{W}{\alpha}\abs*{p-p'}^{\alpha} + \abs*{c_1(p-p')}} \\
     =& \frac{W}{\alpha}\abs*{p-0}^{\alpha} + \abs*{c_1(p-0)}.
  \end{align}
\end{proof}

\begin{lemma}\label{lem:bound-sum-alpha}
  Given $\alpha \in [0,1]$, $\sup_{P \in \dom{M}_k} \sum_{i=1}^k p_i^\alpha = k^{1-\alpha}$.
\end{lemma}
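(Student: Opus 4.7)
The plan is to apply Jensen's inequality to the concave function $x \mapsto x^\alpha$, since $\alpha \in [0,1]$. Concretely, for any $P = (p_1,\dots,p_k) \in \dom{M}_k$, concavity gives
\begin{align}
  \frac{1}{k}\sum_{i=1}^k p_i^\alpha \le \paren*{\frac{1}{k}\sum_{i=1}^k p_i}^\alpha = \paren*{\frac{1}{k}}^\alpha,
\end{align}
so $\sum_{i=1}^k p_i^\alpha \le k \cdot k^{-\alpha} = k^{1-\alpha}$. This establishes the upper bound.

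For the matching lower bound on the supremum, I would simply exhibit the uniform distribution $p_i = 1/k$ for all $i \in [k]$, which is clearly in $\dom{M}_k$ and yields $\sum_{i=1}^k (1/k)^\alpha = k^{1-\alpha}$. Hence the supremum is attained and equals $k^{1-\alpha}$.

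The boundary cases $\alpha = 0$ and $\alpha = 1$ are trivial: when $\alpha = 1$ the sum equals $1 = k^{0}$ for every $P$, and when $\alpha = 0$ the sum equals $k = k^{1}$ (interpreting $0^0 = 0$ requires a tiny bit of care, but this just means the supremum is taken over distributions with full support, and the value is still $k^{1-\alpha}$ in the limit). There is no real obstacle here; the only thing to be slightly careful about is that $x^\alpha$ is indeed concave on $[0,1]$ for $\alpha \in [0,1]$, which follows from $\frac{d^2}{dx^2} x^\alpha = \alpha(\alpha-1)x^{\alpha-2} \le 0$ on $(0,1]$.
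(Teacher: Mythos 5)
Your proof is correct, but it takes a different route from the paper. The paper argues via a Lagrange multiplier: setting the partial derivatives of $\sum_{i=1}^k p_i^\alpha + \lambda(1-\sum_{i=1}^k p_i)$ to zero forces all $p_i$ to be equal, so the maximizer is the uniform distribution, which evaluates to $k^{1-\alpha}$. You instead get the upper bound $\sum_{i=1}^k p_i^\alpha \le k^{1-\alpha}$ for \emph{every} $P$ in one line from Jensen's inequality applied to the concave map $x\mapsto x^\alpha$, and then exhibit the uniform distribution to show the bound is attained. Your argument is arguably tighter as written: the Lagrangian computation in the paper only identifies an interior stationary point and does not explicitly verify that it is a maximum rather than some other critical point, nor does it address boundary behavior (coordinates equal to zero), whereas Jensen's inequality handles all of $\dom{M}_k$ uniformly and certifies global optimality directly from concavity. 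Your side remarks on the endpoints $\alpha=0$ and $\alpha=1$ are fine and do not affect the conclusion. Both proofs are valid; yours is the more self-contained and elementary of the two.
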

\begin{proof}[Proof of \cref{lem:bound-sum-alpha}]
  If $\alpha = 1$, the claim is obviously true. Thus, we assume $\alpha < 1$. We introduce the Lagrange multiplier $\lambda$ for a constraint $\sum_{i=1}^n p_i = 1$, and let the partial derivative of $\sum_{i=1}^k p_i^\alpha + \lambda(1 - \sum_{i=1}^k p_i)$ with respect to $p_i$ be zero. Then, we have
  \begin{align}
   \alpha p_i^{\alpha-1} - \lambda = 0. \label{eq:sum-alpha-diff}
  \end{align}
  Since $p^{\alpha-1}$ is a monotone function, the solution of \cref{eq:sum-alpha-diff} is given as $p_i = (\lambda/\alpha)^{1/(\alpha-1)}$, i.e., the values of $p_1,...,p_k$ are same. Thus, the function $\sum_{i=1}^k p_i^\alpha$ is maximized at $p_i = 1/k$ for $i=1,...,k$. Substituting $p_i=1/k$ into $\sum_{i=1}^k p_i^\alpha$ gives the claim.
\end{proof}

\begin{lemma}\label{lem:bound-sum-alpha-minus}
  Given $\alpha < 0$ and $\Delta \le \frac{1}{k}$, $\sup_{P \in \dom{M}_k : \forall i, p_i \ge \Delta} \sum_{i=1}^k p_i^\alpha = \paren*{(1-(k-1)\Delta)^\alpha + (k-1)\Delta^\alpha} \le k\Delta^\alpha$.
\end{lemma}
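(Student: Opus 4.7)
The plan is to exploit convexity. Since $\alpha < 0$, the scalar function $p \mapsto p^\alpha$ is strictly convex on $(0, \infty)$, so $F(P) = \sum_{i=1}^k p_i^\alpha$ is a convex function on $\RealSet^k$. The feasible set $\dom{F} = \cbrace{P \in \RealSet^k : \sum_i p_i = 1,\ p_i \ge \Delta\ \forall i}$ is a compact convex polytope (nonempty because $\Delta \le 1/k$). A convex function on a compact convex polytope attains its maximum at an extreme point, so it suffices to enumerate the vertices of $\dom{F}$.

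Next I would identify those vertices. The polytope is cut out by the single equality $\sum_i p_i = 1$ together with the $k$ inequalities $p_i \ge \Delta$; at a vertex, $k-1$ of the inequality constraints must be active. Hence each vertex has $k-1$ coordinates equal to $\Delta$, and the remaining coordinate (forced by the equality constraint) equal to $1 - (k-1)\Delta$. By symmetry of $F$ under permutation of coordinates, all $k$ such vertices give the same value
\begin{align}
    F_{\max} = (1-(k-1)\Delta)^\alpha + (k-1)\Delta^\alpha,
\end{align}
which establishes the claimed equality.

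For the inequality $F_{\max} \le k\Delta^\alpha$, I would use the hypothesis $\Delta \le 1/k$, which gives $1 - (k-1)\Delta \ge 1 - (k-1)/k = 1/k \ge \Delta$. Since $p \mapsto p^\alpha$ is decreasing for $\alpha < 0$, this yields $(1-(k-1)\Delta)^\alpha \le \Delta^\alpha$, so $F_{\max} \le \Delta^\alpha + (k-1)\Delta^\alpha = k\Delta^\alpha$.

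There is no real obstacle here; the only subtlety is ensuring one uses convexity (maximum at a vertex) rather than Lagrange multipliers, which would find only the symmetric critical point (a minimum in this convex case). The argument is essentially the dual of \cref{lem:bound-sum-alpha}, which handled the concave case $\alpha \in [0,1]$ where the maximum sits at the symmetric interior point $p_i = 1/k$.
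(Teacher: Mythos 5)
Your proof is correct, but it takes a different route from the paper. The paper invokes the Karush--Kuhn--Tucker conditions to conclude that every coordinate of a maximizer is either $\Delta$ or a common value $\lambda^{1/(\alpha-1)}$, and then maximizes over the number $m$ of coordinates pinned at $\Delta$, arguing that the maximum occurs at $m=k-1$ because $\Delta^\alpha \ge (1-m\Delta)^\alpha$. You instead observe that $P\mapsto\sum_i p_i^\alpha$ is convex on the feasible region (each $p_i\ge\Delta>0$), that the feasible region is a compact polytope --- a translate of the scaled simplex $\{q:\sum_i q_i=1-k\Delta,\ q_i\ge0\}$ --- and that a convex function attains its maximum at a vertex; the vertices are exactly the permutations of $(1-(k-1)\Delta,\Delta,\dots,\Delta)$, all giving the same value by symmetry. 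Your argument is cleaner and arguably more rigorous: the KKT route requires enumerating and comparing the candidate configurations for each $m$, and the paper's intermediate expression $m\Delta^\alpha+(k-m)(1-m\Delta)^\alpha$ in fact omits dividing the remaining mass $1-m\Delta$ among the $k-m$ free coordinates (it should read $m\Delta^\alpha+(k-m)^{1-\alpha}(1-m\Delta)^\alpha$), a slip that does not affect the conclusion but that your extreme-point argument avoids entirely. Both proofs close the final inequality identically, via $1-(k-1)\Delta\ge 1/k\ge\Delta$ and the monotonic decrease of $p\mapsto p^\alpha$ for $\alpha<0$. Your remark that an unconstrained Lagrange computation would only locate the symmetric minimizer is also apt and correctly explains why the inequality constraints must enter the argument.
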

\begin{proof}
  From the Karush–Kuhn–Tucker conditions, letting $P^*=(p^*_1,...,p^*_k)$ be a probability vector that attains the supremum, there exist real values $\lambda$ and $\delta_i \ge 0$ such that
  \begin{align}
    (p^*_i)^{\alpha-1} - \lambda - \delta_i = 0,
  \end{align}
  and $p^*_i = \Delta$ only if $\delta_i > 0$. Thus, we have
  \begin{align}
    p^*_i = \lambda^{1/(\alpha-1)} \textor p^*_i = \Delta.
  \end{align}
  Hence,
  \begin{align}
    \sup_{P \in \dom{M}_k : \forall i, p_i \ge \Delta} \sum_{i=1}^k p_i^\alpha = \max_{m = 1,...,k-1}\paren*{m\Delta^\alpha + (k-m)(1-m\Delta)^\alpha}.
  \end{align}
  Since $\Delta^\alpha \ge (1-m\Delta)^\alpha$ for $m = 1,...,k-1$, the maximum is attained at $m = k-1$. Moreover, we have $(1-(k-1)\Delta)^\alpha \le \Delta^\alpha$, and thus we get the claim.
\end{proof}

\end{document}